\newtheorem{theorem}{Theorem}
\newtheorem{lemma}[theorem]{Lemma}
\newtheorem{corollary}[theorem]{Corollary}
\newtheorem{proposition}[theorem]{Proposition}
\newtheorem{problem}[theorem]{Problem}
\newtheorem{conjecture}[theorem]{Conjecture}
\newtheorem{definition}[theorem]{Definition}
\newtheorem{question}[theorem]{Question}
\theoremstyle{definition}
\newtheorem{example}[theorem]{Example}
\newtheorem{remark}[theorem]{Remark}
\newcommand{\be}{\begin{eqnarray}}
\newcommand{\ee}{\end{eqnarray}}
\newcommand{\ben}{\begin{enumerate}}
\newcommand{\een}{\end{enumerate}}
\newcommand{\ba}{\begin{array}}
\newcommand{\ea}{\end{array}}
\newcommand{\tr}{\mathrm{tr}}
\newcommand{\nn}{\nonumber}
\newcommand{\psdrank}{\textrm{psd-rank}}
\newcommand{\rank}{\textrm{rank}}
\newcommand{\seprank}{\textrm{sep-rank}}
\newcommand{\sosrank}{\textrm{sos-rank}}
\newcommand{\nnrank}{\textrm{nn-rank}}
\newcommand\nobreakpar{\par\nobreak\@afterheading} 
\newcommand\xqed[1]{%
  \leavevmode\unskip\penalty9999 \hbox{}\nobreak\hfill
  \quad\hbox{#1}}
\newcommand\demo{\xqed{$\triangle$}}
\newcommand{\deco}{}
\begin{document}

\title[Polynomial decompositions with invariance and positivity]{%
Polynomial decompositions with invariance and positivity inspired by tensors}

\author{Gemma De las Cuevas}
\address{Institute for Theoretical Physics, Technikerstr.\ 21a,  A-6020 Innsbruck, Austria}

\author{Andreas Klingler}
\address{Institute for Theoretical Physics, Technikerstr.\ 21a,  A-6020 Innsbruck, Austria}
\email{andreas.klingler@uibk.ac.at}

\author{Tim Netzer}
\address{Department of Mathematics, Technikerstr.\ 13,  A-6020 Innsbruck, Austria}
\date{\today}

\keywords{tensor decompositions, sum-of-squares polynomials, invariant polynomials}

\subjclass[2020]{11E25, 14N07 (Primary), 13A50 (Secondary)}

\begin{abstract} 
We present a framework to decompose real multivariate polynomials while preserving invariance and positivity. This framework has been recently introduced for tensor decompositions, in particular for quantum many-body systems. Here we transfer results about decomposition structures, invariance under  permutations of variables, positivity, rank inequalities and separations, approximations, and undecidability  to real polynomials. Specifically, we define invariant decompositions of polynomials, and characterize which polynomials admit such decompositions. We then include positivity: We define invariant separable and sum-of-squares decompositions, and characterize the polynomials similarly. We provide inequalities and separations between the ranks of the decompositions, and show that the separations are not robust with respect to approximations. For cyclically invariant decompositions, we show that it is undecidable whether the polynomial is nonnegative or sum-of-squares for all system sizes. Our framework is different from existing approaches for polynomial decompositions, since it covers symmetry and positivity combined, in a clean and uniform way. Also, our work sheds new light on polynomials by putting them on an equal footing with tensors, and opens the door to extending this framework to other tensor product structures.  
\end{abstract}

\maketitle

\section{Introduction}

In a theory, the description of the elementary constituents is as important as the description of their composition. 
In quantum theory, a few postulates describe the behaviour of individual quantum systems, 
and one postulate describes how to compose them (mathematically, with the tensor product).  
Another example are multivariate polynomials, 
which can be constructed as the composition of the spaces of univariate polynomials with the tensor product. 
Both aspects are crucial---the elementary constituents and the composition---, and it is a misconception of reductionism to overestimate the importance of individual systems. 

The opposite of composing is \emph{decomposing}---expressing an object in terms of elementary constituents. 
This can be seen as an \emph{inverse problem} of the structure provided by the composition, and is generally a very rich problem. 
In many occasions, we want a decomposition that reflects the properties of the global object, that is, that provides a ``certificate'' of a global property in the local objects. 
For example, combining identical objects gives rise to a symmetric global object, 
or a sum of positive elementary constituents gives rise to a positive global object---the latter is particularly important in quantum theory, 
where \emph{entangled} objects are those \emph{not} admitting a  certain kind of positive decomposition. 
Which properties of the global object can be ``witnessed'' by the local objects? 
Answering this question amounts to solving the inverse problem, 
as it requires \emph{characterising which global properties can be transferred to the local objects, and how.} 

Recently, a framework to describe decompositions in tensor product spaces has been introduced \cite{De19d}, focusing on two aspects of this characterisation. 
The first is \emph{invariance}, namely,  if the global object is invariant under the exchange of some elementary constituents, can this be reflected in the decomposition? 
Ref.\ \cite{De19d} clarified what it means `to be reflected in the decomposition' by defining an `invariant decomposition', 
and gave  sufficient conditions for the transfer of invariance from the global to the local objects.
The second aspect is \emph{positivity}, namely, if the global object has some positivity property (is in some cone), 
can this be reflected in the decomposition? Ref.\ \cite{De19d} also studied this question, in combination with the invariance.  
In addition, this framework was extended to the approximate case, where the decomposition is content with  \emph{almost} realising the global object---often giving rise to big savings in the cost of the decomposition  \cite{De20}.

This framework is inspired by tensor decompositions---in particular, by the description of quantum many-body systems. 
Yet, it applies to all tensor product structures. In this paper, we apply it to \emph{real multivariate polynomials}. These are objects in the tensor product space of  polynomials in each of their variables,  
$$
\mathcal{P} \coloneqq \mathbb{R}[\mathbf{x}^{[0]},\mathbf{x}^{[1]}, \ldots, \mathbf{x}^{[n]}] \cong \mathbb{R}[\mathbf{x}^{[0]}] \otimes \mathbb{R}[\mathbf{x}^{[1]}] \otimes \cdots \otimes \mathbb{R}[\mathbf{x}^{[n]}]  , 
$$
where $\otimes$ denotes the algebraic tensor product and $\mathbf{x}^{[i]}$ a collection of variables $x^{[i]}_1, \ldots x^{[i]}_{m_i}$. In other words, every polynomial $p \in \mathcal{P}$ can be expressed as a finite sum of ``elementary constituents''
$$
p^{[0]}(\mathbf{x}^{[0]}) \cdot p^{[1]}(\mathbf{x}^{[1]}) \cdots p^{[n]}(\mathbf{x}^{[n]}),  
$$
where every $p^{[i]}$ is itself a polynomial that only depends on the variables $\mathbf{x}^{[i]}$.
We consider two questions:
\begin{enumerate}[label=(\alph*)]
\item If $p$ is symmetric under the exchange of, say, systems $[i]$ and $[j]$, can this symmetry be reflected in the decomposition? 

\item If $p$ is positive (for some notion of positivity), can this positivity be reflected in the decomposition? 
\end{enumerate}
Our framework solves these two questions in the following way---in particular applied to polynomials: 
\begin{enumerate}[label=(\alph*),ref=(\alph*)]
\item \label{a} 
The summation structure is described by a weighted simplicial complex $\Omega$, 
so that  every system  $i$ is associated to a vertex of $\Omega$,
and every summation index to a facet of $\Omega$. 

\item \label{b} 
By definition, an $(\Omega, G)$-decomposition of a polynomial contains a  certificate  of invariance under the group $G$. 
We characterise which $G$-invariant polynomials admit an $(\Omega,G)$-decomposition.

\item \label{c} 
By definition, a separable or sum-of-squares (sos) $(\Omega, G)$-decomposition contains a certificate of invariance and of membership in the separable or sos cone, respectively. 
We characterise which separable or sos polynomials admit such decompositions. 
\end{enumerate}

To be  specific, this framework is inspired by decompositions of quantum many-body systems provided by \emph{tensor networks} \cite{Or18}. 
The latter are prominent in quantum information theory and condensed matter physics (and recently machine learning), 
and favour certain arrangements of the summation indices---for example, the indices can be arranged in  a circle: 
\be 
\label{eq:circular} p= \sum_{\alpha_0, \ldots, \alpha_n = 1}^{r} p^{[0]}_{\alpha_0, \alpha_1}(\mathbf{x}^{[0]}) \cdot p^{[1]}_{\alpha_1, \alpha_2}(\mathbf{x}^{[1]}) \cdots p^{[n]}_{\alpha_n, \alpha_0}(\mathbf{x}^{[n]}). 
\ee
(This arrangement is motivated by the structure of physical interactions). 
Note that we have already written the previous equation for a polynomial $p$, as both quantum many-body systems and polynomials compose with the tensor product. 
From a mathematical perspective, the natural decomposition is the one with a single  index, namely
\be \label{eq:simplex}
p = \sum_{\alpha = 1}^{r} p^{[0]}_{\alpha}(\mathbf{x}^{[0]}) \cdot p^{[1]}_{\alpha}(\mathbf{x}^{[1]}) \cdots p^{[n]}_{\alpha}(\mathbf{x}^{[n]}).
\ee
In both cases, the smallest integer $r$ measures the cost of decomposing the polynomial---the one  of \eqref{eq:simplex} is called the \emph{tensor rank}. 
Our framework puts both decompositions under one umbrella: 
in Equation \eqref{eq:circular}, the weighted simplicial complex is the circle graph, 
and in \eqref{eq:simplex}, it is the full simplex (cf.\ \ref{a}).  

Symmetries are central in physics, both conceptually and practically, and it is  impossible to overstate their importance   in mathematics.  
Our framework models symmetries as follows: 
we have a group $G$ acting on the set $\{0, \ldots, n\}$, 
 and the induced action on the polynomial space $\mathcal{P}$ is obtained by permuting system $[i]$ to $[gi]$, 
$$
g: \mathbf{x}^{[i]} \mapsto g \mathbf{x}^{[i]} \coloneqq \mathbf{x}^{[gi]}. 
$$
A polynomial is $G$-invariant if it is invariant with respect to all such permutations $ g\in G$, 
and we want to make this invariance   explicit in the decomposition of $p$. 
For example, the following decomposition 
$$
p= \sum_{\alpha_0, \ldots, \alpha_n = 1}^{r} p_{\alpha_0, \alpha_1}(\mathbf{x}^{[0]}) \cdot p_{\alpha_1, \alpha_2}(\mathbf{x}^{[1]}) \cdots p_{\alpha_n, \alpha_0}(\mathbf{x}^{[n]})  
$$
makes explicit that $p$ is invariant under the cyclic group, $\mathbf{x}^{[i]} \mapsto \mathbf{x}^{[i+1]}$. 
(Note that there are no superscripts $[i]$ in contrast to Equation \eqref{eq:circular}).  
And 
$$
p= \sum_{\alpha = 1}^{r} p_{\alpha}(\mathbf{x}^{[0]}) \cdot p_{\alpha}(\mathbf{x}^{[1]}) \cdots p_{\alpha}(\mathbf{x}^{[n]}) 
$$
makes explicit that $p$ is invariant under the full symmetry group. 
The former is known in quantum physics as  the \emph{translationally invariant matrix product operator} form 
(and the minimal number $r$ as the \emph{t.i.\ operator Schmidt rank} \cite{De19}), 
and the latter as the \emph{symmetric tensor decomposition} (and the minimal $r$ as   the \emph{symmetric tensor rank}  \cite{Co08c,Sh17b}). 
In our framework, the former corresponds to the circle with the cyclic group, and the latter to the full simplex with the full permutation group (cf.\ \ref{b}).

Finally, if $p$ is in a cone (such as sum-of-squares (sos) polynomials or the cone of nonnegative polynomials), 
we want a  certificate  of this fact (cf.\ \ref{c}). 
In quantum physics, a quantum state is positive semidefinite and the certificate  is called a purification. 
In probabilistic modelling, the certificate of a probability distribution  is a nonnegative decomposition. 
In real algebraic geometry, the natural certificate of positivity of a polynomial is being sum of squares. 
In all of these cases,  witnessing the positivity of a global element is a central problem with many ramifications. 

Note that decompositions of tensors and polynomials  have already been studied a lot from different perspectives. Also symmetries and positivity have been  considered combined, but the arising decomposition are by far not as clean as the separate decompositions. To give a short overview, and thus also motivate our combined approach, let us explain some of the existing decompositions, and point out why they are not directly related to our approach.

The Waring decomposition is a decomposition of polynomials, which is also inspired by tensors. Let $p \in \mathbb{R}[x_1, \ldots, x_n]$ of degree $d$. The \emph{Waring rank} of $p$ is defined as the minimum $r \in \mathbb{N}$ such that
	$$ p = \sum_{\alpha = 1}^{r} c_\alpha \ell_{\alpha}(x_1, \ldots, x_n)^d$$
	where $\ell_\alpha(x_1, \ldots, x_n) = a_{\alpha,1} x_1 + \ldots + a_{\alpha,n} x_n$ is a linear form. The Waring rank is equivalent to the symmetric tensor rank by applying the correspondence
	$$ p = \sum_{i_1, \ldots, i_n = 1}^{d} T_{i_1, \ldots, i_n} x_{i_1} \cdots x_{i_n}$$
	between symmetric tensors in $T \in \left(\mathbb{C}^{d}\right)^{\otimes n}$ and homogeneous polynomials of degree $n$. 	
	Yet, the Waring decomposition cannot exhibit any additional symmetry of the polynomial, since the corresponding tensor is already fully symmetric for any polynomial. For generalizations of the Waring problem to polynomials instead of linear forms, we refer to \cite{Fr12}. Another related decomposition is the completely decomposable decomposition \cite{Ab14}.

For symmetric polynomials, the decomposition into power-sum polynomials is an example of an explicitly invariant decomposition. Every symmetric polynomial $p$ can be written as $p = q(p_1, \ldots, p_n)$, where
	$$p_{\alpha} = \sum_{i=1}^{n} x_i^{\alpha}.$$
	In other words, the ring of symmetric polynomials with real coefficients corresponds to the ring $\mathbb{R}[p_1, \ldots, p_n]$ generated by power-sum polynomials.
The same statement is true by replacing the set of power-sum polynomials by elementary symmetric polynomials.

Also, the combination of symmetry and positivity is well-studied. It is, for example, known that symmetric sum-of-squares polynomials do, in general, not decompose into a sum of symmetric squares, to fully characterize the set of symmetric sum-of-squares polynomials, one has to introduce a more general notion of symmetric sum-of-square decomposition \cite{De21d}.

In this paper we do the following: 
\begin{enumerate}[label=(\roman*)]
\item 
We define invariant decompositions of polynomials (Definition \ref{def:omegaG-dec}). 
We show that  every invariant polynomial admits an invariant decomposition if the group action is free on the weighted simplicial complex (Theorem \ref{thm:omegaG-dec}),  
and that every group action can be made free by increasing the number of summation indices (Proposition \ref{pro:extend}).  
In addition, every invariant polynomial can be written as the difference of two invariant decompositions 
if the group action is blending (Theorem \ref{thm:omegaG-diff}).

\item 
We  define  the invariant separable  decomposition  (Definition \ref{def:sepdec}), 
and the invariant sos decomposition (Definition \ref{def:sos-dec}), and  show that every invariant separable/sos polynomial admits an invariant separable/sos decomposition if the group action is free (Theorem \ref{thm:sep} and Corollary \ref{cor:sos}, respectively). These decompositions combine positivity and symmetry in a very clean way.

\item 
We provide inequalities and separations between the ranks of three invariant decompositions (Proposition \ref{pro:inequalitites} and Corollary \ref{cor:rank-sep}, respectively). 

\item  We show that the separations are not robust with respect to approximations (Theorem \ref{thm:approxSeparable}).

\item  For decompositions on the circle with translational invariance, 
we show that it is undecidable whether the global polynomial is sos or nonnegative for all system sizes (Theorem \ref{thm:und}).
\end{enumerate}
Throughout this work, an  `invariant decomposition' refers to an $(\Omega,G)$-decomposition,
and an `invariant polynomial' to a $G$-invariant polynomial. 
Similarly, an `invariant separable/sos decomposition' refers to a separable/sos $(\Omega,G)$-decomposition.

This paper is organized as follows. 
In Section \ref{sec:wsc} we define weighted simplicial complexes  and group actions. 
In Section \ref{sec:dec} we define and study the invariant decomposition, 
the invariant separable decomposition and the invariant sum of squares decomposition. 
In Section \ref{sec:inequalitites} we study inequalities and separations between the ranks. 
In Section \ref{sec:approx} we study the approximate case. 
In Section \ref{sec:undecidable} we show that a problem related to positive polynomials is undecidable. 
In Section \ref{sec:concl} we conclude and provide an outlook.

\section{Weighted simplicial complexes and group actions}
\label{sec:wsc}

Here we define weighted simplicial complexes (Section \ref{ssec:wsc}) and groups acting on them (Section \ref{ssec:actions}), both defined in \cite{De19d}. These constitute the underlying topological structure on which we will consider invariant polynomial decompositions. 

Throughout this paper, we use the notation $[n] \coloneqq \{0,\ldots, n\}$, and  denote its power set $\mathcal P([n])$ by $\mathcal P_n$. 

\subsection{Weighted simplicial complexes}
\label{ssec:wsc}

We now define weighted simplicial complexes and refer to \cite{Da90} for details. 
Examples of weighted simplicial complexes are given in Section \ref{ssec:actions} and in \cite{De19d}.

\begin{definition}[Weighted simplicial complexes]\label{def:wsc} \quad\nobreakpar

\begin{enumerate}[label=(\roman*),ref=(\roman*),leftmargin=*]
\item \label{def:wsc:i}
A \emph{weighted simplicial complex} on $[n]$ is a map 
$$\Omega\colon \mathcal{P}_n\to \mathbb{N}$$ such that $\Omega(S_1)$ divides $\Omega(S_2)$ whenever $S_1 \subseteq S_2$. $\Omega$ is called a \emph{simplicial complex} if $\Omega(\mathcal{P}_n) = \{0,1\}$.
 
\item 
A set $S \in \mathcal{P}_n$ is called a \emph{simplex} of $\Omega$ if $\Omega(S) \neq 0$. We will always assume that each singleton $\{i\}$ is a simplex, and call the elements $i\in[n]$ the \emph{vertices} of the weighted simplicial complex.
We call a maximal simplex (with respect to inclusion) a \emph{facet} of $\Omega.$
Moreover, we denote the collection of all facets by 
$$\mathcal{F} \coloneqq \left\{ F\in\mathcal{P}_n :  F \textrm{ facet of } \Omega \right\},$$
and for each vertex $i$ the collection of facets that contain $i$ by
$$\mathcal{F}_i \coloneqq \left\{F \in \mathcal{F} : i\in F\right\}.$$
By restricting $\Omega$ to $\mathcal{F}$ or $\mathcal{F}_i$ we can interpret these mappings as multisets which we call $$\widetilde{\mathcal F} \ \textrm{ and  } \ \widetilde{\mathcal{F}}_i.$$  $\widetilde{\mathcal{F}}$ contains each facet $F$ exactly $\Omega(F)$-many times. Moreover, we introduce the canonical \emph{collapse map} $$c \colon \widetilde{\mathcal F}\to \mathcal F, \quad c\colon \widetilde{\mathcal F}_i\to \mathcal F_i,$$ mapping all copies to the underlying facet.

\item 
Two vertices $i,j$ are \emph{neighbours} if 
$$
\mathcal{F}_i \cap \mathcal{F}_j \neq \emptyset \qquad (\textrm{or equivalently if } \widetilde{\mathcal{F}}_i \cap \widetilde{\mathcal{F}}_j\neq\emptyset).
$$  
Two vertices are \emph{connected} if there exists a sequence of neighbours $i_0, \ldots, i_k$ such that $i = i_0$ and $j = i_k$. 
We say that the weighted simplicial complex is connected if every pair of vertices is connected. 
\end{enumerate}
\end{definition}

Note that a simplicial complex $\Omega$ is  the characteristic function of a subset $\mathcal{A} \subseteq \mathcal{P}_n$. By definition of $\Omega$, $\mathcal{A}$ is closed under passing to subsets. This is the usual definition of an (abstract) simplicial complex. 

Note also that a weighted simplicial complex is a special case of a multihypergraph \cite{Bo98}, in the sense that all simplices of a facet are included, and in addition  the multiplicities satisfy Definition \ref{def:wsc} \ref{def:wsc:i}. 
Our framework could also be formulated with multihypergraphs, as  the  decompositions only depend on the multifacets $\widetilde{\mathcal{F}}$. Nonetheless, we find  the slightly less general notion of a weighted simplicial complex more convenient to apply to this framework.

In the following we introduce two basic examples---the single and the double edge---which will serve as a running example throughout the paper.

\begin{example}[The simple and double edge] \label{ex:wscbasic} \quad\nobreakpar
\begin{enumerate}[label=(\roman*),leftmargin=*]
\item 
Consider two vertices and the weighted simplicial complex $\Omega = \Lambda_1$ which maps every subset of $\{0,1\}$ to $1$. This is just the simple edge, consisting of exactly one (multi)-facet $\mathcal{F} = \big\{\{0,1\}\big\}$.  

\vspace{0.6cm}
\begin{center}
\begin{tikzpicture}
\filldraw (-1,0) circle (2pt);\put (-40,-3){$0$};
\filldraw (1,0) circle (2pt);\put (35,-3){$1$};
\draw[thick] (-1,0) -- (1,0); \put (-13,8){$\{0,1\}$};
\end{tikzpicture}
\end{center}

\bigskip\noindent

\item  Adding a second facet, we obtain the \emph{double edge} $\Delta$, which is the weighted simplicial complex on $\mathcal P_1$ that assigns the value $1$ to the sets $\{0\}, \{1\}$ and the value $2$ to $\{0,1\}.$

\vspace{0.6cm}
\begin{center}
\begin{tikzpicture}
\filldraw (-1,0) circle (2pt);\put (-40,-3){$0$};
\filldraw (1,0) circle (2pt);\put (35,-3){$1$};
\draw[thick] (-1,0) to[out=-30, in=210] (1,0); \put (-3,15){$\mathfrak{a}$};
\draw[thick] (-1,0) to[out=30, in=150] (1,0);\put (-3,-22){$\mathfrak{b}$};
\end{tikzpicture}
\end{center}

\bigskip\noindent
In this case $\mathcal{F}$ differs from $\widetilde{\mathcal{F}}$ since
$$\mathcal{F}_0=\mathcal{F}_1=\mathcal F=\big\{\{0,1\}\big\}$$ are singletons, but  $$\widetilde{\mathcal{F}}_0=\widetilde{\mathcal{F}}_1=\widetilde{\mathcal{F}}=\{\mathfrak{a},\mathfrak{b}\}$$ are not.
\end{enumerate}
\demo \end{example}

\subsection{Group actions}\label{ssec:actions}

We now introduce group actions on the set $[n]$, and promote them to actions on weighted simplicial complexes. 
For the reader not familiar  with group actions on sets, we refer to  \cite{La02}. 
Throughout this paper, we denote the identity element of a group $G$  by $e$.

\begin{definition}[Group actions]\label{def:action} \quad\nobreakpar
\begin{enumerate}[label=(\roman*), ref=(\roman*),leftmargin=*]
\item \label{im:action:i}
Let $G$ be a group acting on the sets $X$ and $Y$, respectively. A map $f\colon X\to Y$  is called  \emph{$G$-linear} if 
\be \nn f(gx)=gf(x)
\ee
holds for all $x\in X,g\in G$. 
If $G$ acts trivially on $Y$ (i.e.\ $gy = y$ for all $g \in G$ and $y \in Y$), we instead call $f$ \emph{$G$-invariant}. 

\item \label{im:action:ii}
If $G$ acts on $X$,   for any map $f \colon X\to Y$ and any $g\in G$ we define a new map 
\be \nn {}^g f \colon X \to Y: x \mapsto f(g^{-1}x).\ee 
It is immediate that
$$
{}^h\left({}^gf\right)= {}^{hg} f\: \textrm{ and }\: {}^e f=f,
$$ 
so this defines an action of $G$ on the set of all maps from $X$ to $Y$.
In particular, the function $f\mapsto{}^gf$ is a bijection on this set. 
If $f$ is defined only on a subset $A\subseteq X$, then ${}^gf$ acts on the translated subset $$gA\coloneqq \{ gx : x\in A\}\subseteq X.$$

\item \label{im:action:iii}
An action of $G$ on $X$ is called \emph{free} if all its stabilizers are trivial, i.e.\  ${\rm Stab}(x)=\{e\}$ for every $x\in X$, where $${\mathrm{Stab}}(x) \coloneqq \left\{ g\in G : gx=x\right\}.$$

\item \label{im:action:iv}
We call an action $G$ on $[n]$ \emph{blending}  if $\{g_0 0,\ldots,g_n n\}=[n]$ for certain $g_0,\ldots, g_n\in G$ implies the existence of $g\in G$ with $g i=g_i i$ for all  $i=0,\ldots, n$. In words, a permutation of $[n]$ given by different group elements can also be achieved by a single group element.
\end{enumerate}
\end{definition}

We now promote a group action on $[n]$ to a group action on a weighted simplicial complex: 

\begin{definition}[Group action on a weighted simplicial complex]
\quad\nobreakpar
\begin{enumerate}[label=(\roman*),leftmargin=*]
\item  A \emph{group action of $G$ on the weighted simplicial complex $\Omega$} consists of the following:
\begin{enumerate}[label=(\alph*),labelindent=0pt]
 \item 
 A group action of $G$ on $[n]$  such that the map  $\Omega$ is $G$-invariant with respect to the canonical action of $G$ on $\mathcal{P}_n$ (i.e.\ it permutes vertices in a way that simplices become simplices of the same weight). This induces a well-defined action of $G$ on $\mathcal{F}$. 
 \item 
 An action of $G$ on the set of multifacets $\widetilde{\mathcal{F}}$ such that the canonical collapse map $$c\colon{\widetilde{\mathcal{F}}}\to\mathcal{F}$$ is $G$-linear. 
 The action $G$ on $\widetilde{\mathcal{F}}$ is a \emph{refinement} of the action of $G$ on $\mathcal{F}$. 
  \end{enumerate}

\item We call the action $G$ on the weighted simplicial complex $\Omega$ \emph{free} if the action of $G$ on $\widetilde{\mathcal{F}}$ is free.
\end{enumerate}
\end{definition}

\begin{remark}[Group actions]\label{rem:grac}\quad\nobreakpar
\begin{enumerate}[label=(\roman*),ref=(\roman*),leftmargin=*]
\item \label{rem:grac:i}
Since every weighted simplicial complex consists of finitely many vertices, we will assume the group $G$ to be finite as well. We could also assume that $G$ is a subgroup of the permutation group $S_{n+1}$ (since every group action can be understood as a collection of permutations on $[n]$), but sometimes it is more convenient not to choose the latter representation.

\item \label{rem:grac:ii}
A group action on a weighted simplicial complex $\Omega$ permutes the vertices $[n]$ in a way that preserves the structure of the complex. 
In particular, it induces an action of $G$ on $\mathcal{F}$, where all facets in the same orbit are of the same weight. 
Note that each $g\in G$  provides a weight-preserving bijection 
\begin{align*} 
 g \colon \mathcal{F}_i & \to \mathcal{F}_{gi} \\ 
 F& \mapsto gF.
\end{align*}

\item  \label{rem:grac:iii}
To obtain a group action on a weighted simplicial complex  with multifacets one needs to provide additional information, namely how elements $g \in G$ permute the different copies of facets when mapping a facet $F$ to $gF$. Obviously, any group action can be refined, but there are many ways of doing so.

\item  \label{rem:grac:iv}
The notion of a blending group action (on a weighted simplicial complex) just refers to the action of $G$ on the vertices $[n]$.

\item  \label{rem:grac:v}
The notion of a free group action  on a weighted simplicial complex always concerns the action of $G$ on $\widetilde{\mathcal{F}}$. 
The action of $G$ on the vertices can be free without the action of $G$ on $\Omega$ being free (see Example \ref{ex:wsc}). On the other hand, an action of $G$ on $\Omega$ can be free without the   action of $G$ on the underlying vertices $[n]$ or on the facets $\mathcal{F}$ being free.
As we will see in Proposition \ref{pro:extend}, any action of $G$ on $\Omega$ can be refined to a free group action, after enlarging the weights of the facets. This, combined with Theorem \ref{thm:omegaG-dec},  justifies our choice of \emph{weighted} simplicial complexes in our framework. 

\item  \label{rem:grac:vi}
An action of $G$ on a set $X$ is free if and only if there exists a $G$-linear map 
$${\textbf{z}}\colon X \to G$$
where $G$ acts on itself via left-multiplication (which is obviously free). To construct $\textbf{z}$ for a free action, choose for each orbit an element $x$ and map $gx$ to $g$. The reverse implication is immediate.
\demo\end{enumerate}
\end{remark}

Let us now discuss the  group actions  on the simple and double edge of Example \ref{ex:wscbasic}.

\begin{example}[The simple and double edge with group actions]\label{ex:group_EdgeDoubleEdge} \quad\nobreakpar
\begin{enumerate}[label=(\roman*),leftmargin=*]
\item For the simple edge $\Lambda_1$  there is only one interesting group action, namely by $C_2=S_2$, which permutes the vertices $0,1$. Although this group action is free and blending on $\{0,1\}$, it is not free on the weighted simplicial complex, since the (only) facet remains fixed under each group element.

\item For the double edge $\Delta$  the group action of $C_2$ can be extended to the multifacets in two different ways. One extension keeps each multifacet fixed, in which case the action is not free, 
and the other  one permutes the multifacets, i.e.\ 
flips $\mathfrak{a}$ and $\mathfrak{b}$, in which case the action is free.
Henceforth, when we refer to $C_2$ on $\Delta$ we always refer to the free refinement. \demo
\end{enumerate}
\end{example}

There are  other canonical examples of weighted simplicial complexes and group actions  which will play a   role in the development of invariant polynomial decompositions. Let us introduce  them now. 

\begin{example}[The simplex, the line and the circle]\label{ex:wsc}\quad\nobreakpar
\begin{enumerate}[label=(\roman*), ref=(\roman*),leftmargin=*]
\item \label{ex:wsc:i}
The simplicial complex $\Omega = \Sigma_n$ mapping each subset of $[n]$ to $1$ is called the \emph{$n$-simplex}. For $n=4$ it can be depicted as 
\vspace{0.4cm}
\begin{center}
\begin{tikzpicture}
\filldraw (1,0) circle (2pt);
\filldraw (0.309,0.951) circle (2pt);
\filldraw (-0.809,0.588) circle (2pt);
\filldraw (-0.809,-0.588) circle (2pt);
\filldraw (0.309,-0.951) circle (2pt);

\draw[thick] (1,0) -- (0.309,0.951) -- (-0.809,0.588) -- (-0.809,-0.588) -- (0.309,-0.951) -- (1,0);
\filldraw[opacity=0.4] (1,0) -- (0.309,0.951) -- (-0.809,0.588) -- (-0.809,-0.588) -- (0.309,-0.951) -- (1,0);;
\put (35,-4) {$0$};
\put (16,30) {$1$};
\put (-35,15) {$2$};
\put (-35,-25) {$3$};
\put (16,-32) {$4$};
\end{tikzpicture}
\end{center}
\vspace{0.2cm}\noindent
where it contains only one facet,   $\mathcal{F}=\widetilde{\mathcal{F}}=\{[n]\}$. 
Any group action on $[n]$ is   a group action on $\Sigma_n$, but it clearly is the trivial group action on $\widetilde{\mathcal{F}}$. The action of the full permutation group $S_{n+1}$ (which contains $(n+1)!$ elements) on $[n]$ is blending. 
The only free action on $\Sigma_n$ is the action from the trivial group. However, if the weight of the only facet is enlarged to $\vert G\vert$, any action from $G$ on $[n]$ has at least one free refinement by Proposition \ref{pro:extend}.

\item  \label{ex:wsc:ii}
For $n\geq 1$, the \emph{line of length $n$} is the simplicial complex $\Omega=  \Lambda_n$ given by the following graph:

\bigskip
\begin{center}
\begin{tikzpicture}
\filldraw (-2,0) circle (2pt);\put (-60,-15){$0$};
\filldraw (-1,0) circle (2pt);\put (-31,-15){$1$};
\filldraw (0,0) circle (2pt);\put (-2,-15){$2$};
\filldraw (1,0) circle (2pt);\put (26,-15){$3$};
\put (50,-2.5) {$\cdots$};
\filldraw (3,0) circle (2pt);\put (83,-15){$n$};
\draw[thick] (-2,0) -- (-1,0);
\draw[thick] (-1,0) -- (0,0);
\draw[thick] (0,0) -- (1,0);
\draw[thick] (1,0) -- (1.4,0);
\draw[thick] (2.5,0) -- (3,0);
\end{tikzpicture}
\end{center}

\vspace{0.7cm}
\noindent The collection of facets $\mathcal{F} = \widetilde{\mathcal{F}}$ consists of $n$ elements.
The only non-trivial group action on $\Lambda_n$ is given by the cyclic group with two elements $G = C_2$, where the generator inverts the order of the vertices, i.e.\ vertex $i$ is sent to vertex $n-i$. 
This action is free if and only if $n$ is even, and blending if and only if  $n \leq 2$. If $n$ is odd, the action admits a free refinement if the weight of the middle edge is increased to $2$. For $n = 1$ we regain the single edge.

\item  \label{ex:wsc:iii}
For $n\geq 3$, the \emph{circle of length $n$} is the simplicial complex $\Omega = \Theta_n$  corresponding to  the following graph:

\vspace{0.4cm}
\begin{center}
\begin{tikzpicture}
\filldraw (1,0) circle (2pt);
\filldraw (0.309,0.951) circle (2pt);
\filldraw (-0.809,0.588) circle (2pt);
\filldraw (-0.809,-0.588) circle (2pt);
\filldraw (0.309,-0.951) circle (2pt);

\draw[thick] (1,0) -- (0.309,0.951) -- (-0.809,0.588);
\draw[thick] (-0.809,0.588) -- (-0.809, 0.4);
\draw[thick, dotted] (-0.809, 0.35) -- (-0.809, -0.35);
\draw[thick] (-0.809,-0.588) -- (-0.809, -0.4);
\draw[thick] (-0.809,-0.588) -- (0.309,-0.951) -- (1,0);
\put (35,-4) {$0$};
\put (16,30) {$1$};
\put (-35,15) {$2$};
\put (-55,-25) {$n-2$};
\put (16,-32) {$n-1$};
\end{tikzpicture}
\end{center}
\vspace{0.2cm}

\noindent which has $n$ facets.  A canonical action is given by the cyclic group $G=C_{n}$, which is generated by translation of the vertex $i \mapsto i + 1 \textrm{ mod } (n)$. This action is free on $\Omega$ but not blending.\demo
\end{enumerate}
\end{example}

We now state what we have already seen in Example \ref{ex:wsc} \ref{ex:wsc:i}, \ref{ex:wsc:ii} and \ref{ex:wsc:iii}  in a more general setting, namely that  by increasing the multiplicity of facets of a weighted simplicial complex $\Omega$ we can make every group action free. In short, \emph{every group action has a free refinement}. 
It is good to bear this in mind for the rest of the paper, 
because we will need to assume freeness in many results, 
but this is a ``mild'' assumption because of Proposition \ref{pro:extend}. 
This proposition is proven in \cite{De19d}.

\begin{proposition}[Free refinement \cite{De19d}]\label{pro:extend}
Every action of a finite group $G$ on a connected weighted simplicial complex $\Omega$ has a free refinement, 
which in particular can be obtained by multiplying the weight of every facet of $\Omega$ by $\vert G\vert$. 
\end{proposition}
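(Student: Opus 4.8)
The plan is to build the free refinement explicitly by enlarging the multifacet set in the most naive possible way: take $\widetilde{\mathcal F}' \coloneqq \widetilde{\mathcal F} \times G$, with collapse map $c'(\tilde F, h) \coloneqq c(\tilde F)$. Since $\Omega(F)$ copies of each facet $F$ are replaced by $\Omega(F)\cdot |G|$ copies, the new weight function $\Omega'$ is just $|G|\cdot\Omega$ on the facets; one checks it still satisfies the divisibility condition of \cref{def:wsc} \ref{def:wsc:i} (multiplying every value by the same constant preserves all divisibilities), so $\Omega'$ is again a weighted simplicial complex on $[n]$ with the same underlying simplices and the same facets $\mathcal F$.

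Next I would define the refined action. On $[n]$ and hence on $\mathcal F$ we keep the original action. On $\widetilde{\mathcal F}' = \widetilde{\mathcal F}\times G$ we set $g\cdot(\tilde F, h) \coloneqq (g\tilde F,\, g h)$, where $g\tilde F$ denotes the action of $g$ on the original multifacet set $\widetilde{\mathcal F}$ (which we are free to take, e.g., as the original refinement, or even an arbitrary lift of the $\mathcal F$-action — any choice works). This is a group action: associativity and the identity law follow coordinatewise from the corresponding facts for the two factors. Moreover $c'$ is $G$-linear, since $c'(g\cdot(\tilde F,h)) = c'(g\tilde F, gh) = c(g\tilde F) = g\,c(\tilde F) = g\,c'(\tilde F,h)$, using $G$-linearity of the original $c$. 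Hence $(\Omega', \text{action})$ is a genuine group action on a weighted simplicial complex refining the original one.

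Finally I would check freeness. Suppose $g\cdot(\tilde F,h) = (\tilde F,h)$. Looking at the second coordinate, $gh = h$ in $G$, and since left-multiplication of $G$ on itself is free this forces $g = e$. Therefore every stabilizer in $\widetilde{\mathcal F}'$ is trivial, so the action of $G$ on $\widetilde{\mathcal F}'$ is free, i.e.\ the refined action on $\Omega'$ is free by definition. (Alternatively, one can phrase this via \cref{rem:grac} \ref{rem:grac:vi}: the projection $\textbf z\colon \widetilde{\mathcal F}\times G \to G$ onto the second coordinate is $G$-linear for left-multiplication, which is exactly the criterion for freeness; this makes the argument essentially a one-liner.)

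The construction requires nothing delicate — the only points that need a word of care are that $\Omega'$ is still a legitimate weighted simplicial complex (divisibility is preserved under scaling), and that the product action is well-defined independently of which lift of the facet action to $\widetilde{\mathcal F}$ one starts from, since freeness is forced purely by the second coordinate. Connectedness of $\Omega$ plays no real role in this particular construction; it is presumably included in the hypothesis for uniformity with the rest of the paper (or because the cited source \cite{De19d} states it that way), so I would simply carry it along without using it. The main "obstacle", such as it is, is purely expository: making sure the bookkeeping of facets, multifacets, copies, and weights is consistent with \cref{def:wsc}, rather than any genuine mathematical difficulty.
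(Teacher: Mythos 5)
Your construction is correct, and it is essentially the standard one: the paper itself does not prove \cref{pro:extend} but defers to \cite{De19d}, where the free refinement is obtained in the same way, by replacing the multifacet set with (a set equivalent to) $\widetilde{\mathcal F}\times G$ and letting $G$ act by left multiplication on the second factor, so that freeness is inherited from the regular action and the collapse map stays $G$-linear. The only point to tidy up is your parenthetical claim that ``an arbitrary lift of the $\mathcal F$-action'' may be used in the first coordinate: the first-coordinate maps must themselves form a group action (e.g.\ the given refinement, or the trivial lift fixing the copy index, which is well defined because the action preserves weights), since otherwise $g\cdot(\tilde F,h)\coloneqq(g\tilde F,gh)$ fails associativity; this is not load-bearing, as the hypothesis already supplies such an action. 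You are also right that connectedness is not used in this construction.
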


\section{Invariant polynomial decompositions and ranks}
\label{sec:dec}

In this section we define invariant polynomial decompositions and their ranks. 
To this end we first set the stage (Section \ref{ssec:stage}), and then 
define and study the invariant decomposition (Section \ref{ssec:invdec}), 
the invariant separable decomposition (Section \ref{ssec:sepdec}), 
and finally the invariant sum-of-squares decomposition (Section \ref{ssec:sosdec}). 

\subsection{Setting the stage}\label{ssec:stage}
Throughout this section we consider polynomials in the space
$$
\mathcal{P} \coloneqq \mathbb{R}[\mathbf{x}^{[0]}, \mathbf{x}^{[1]}, \ldots, \mathbf{x}^{[n]}] \cong \mathbb{R}[\mathbf{x}^{[0]}] \otimes \mathbb{R}[\mathbf{x}^{[1]}] \otimes \cdots \otimes \mathbb{R}[\mathbf{x}^{[n]}]
$$
where $\mathbb{R}[\mathbf{x}^{[i]}] \coloneqq \mathbb{R}[x^{[i]}_1, \ldots, x^{[i]}_{m_i}]$ is the space of real polynomials in $m_i$ variables, and $\otimes$ denotes  the algebraic tensor product.
These polynomials use collections of local variables, denoted $\mathbf{x}^{[i]}$, for each local site $i=0,\ldots, n$. 
The case where all $m_i=1$ is already very interesting, as it describes how the multivariate polynomial ring is decomposed into a tensor product of univariate polynomial rings.

In particular, $\mathbb{R}[x^{[0]}, \ldots, x^{[n]}] \cong \mathbb{R}[x^{[0]}] \otimes \mathbb{R}[x^{[1]}] \otimes \cdots \otimes \mathbb{R}[x^{[n]}]$, where $x^{[i]}$ is a single variable, means that every multivariate polynomial can be expressed as a sum of products of uni-variate polynomials, i.e.\
$$ p = \sum_{\alpha = 1}^{r} p_{\alpha}^{[0]}(x^{[0]}) \cdots p_{\alpha}^{[n]}(x^{[n]}).$$

We define the local degree of $p \in \mathcal{P}$, denoted $\deg_{\mathrm{loc}}(p)$, as the smallest positive integer $d \in \mathbb{N}$ such that
$$
p \in \mathcal{P}_d \coloneqq \mathbb{R}[\mathbf{x}^{[0]}]_d \otimes \mathbb{R}[\mathbf{x}^{[1]}]_d \otimes \cdots \otimes \mathbb{R}[\mathbf{x}^{[n]}]_d
$$
where $\mathbb{R}[\mathbf{x}]_d$ is the space of real polynomials in $\mathbf{x}$ of degree at most $d$. A polynomial with $\deg_{\mathrm{loc}}(p) \leq d$ contains monomials consisting of variables in $\mathbf{x}^{[i]}$ with degree at most $d$, for each $i$.
Note that the local degree can be related with the (global) degree of the polynomial by
$$
\deg_{\mathrm{loc}}(p) \leq \deg(p) \leq (n+1) \cdot \deg_{\mathrm{loc}}(p).
$$

A given group action $G$ on $[n]$ also induces a group action on the space $\mathcal{P}$. 
The action is defined for $g \in G$ and $p \in \mathcal{P}$ by
\be\label{eq:gp}
(gp)(\mathbf{x}^{[0]}, \ldots, \mathbf{x}^{[n]}) \coloneqq p(\mathbf{x}^{[g0]}, \ldots, \mathbf{x}^{[gn]}).
\ee
Note that this definition only makes sense if the local polynomial spaces $\mathbb{R}[\mathbf{x}^{[i]}]$ and $\mathbb{R}[\mathbf{x}^{[j]}]$ are isomorphic whenever $i,j \in [n]$ are in the same orbit of $G$ (i.e.\ $gi = j$ for some $g \in G$), i.e.\ the number of local variables needs to coincide for $i,j$, namely $m_i = m_j$. 
The canonical isomorphism between elements in $\mathbb{R}[\mathbf{x}^{[i]}]$ and $\mathbb{R}[\mathbf{x}^{[j]}]$ is  given by replacing the variables $\mathbf{x}^{[i]}$ with $\mathbf{x}^{[j]}$ in every polynomial and vice versa. 
We will  frequently  use this isomorphism in an implicit way, 
as for a polynomial $p^{[i]} \in \mathbb{R}[\mathbf{x}^{[i]}]$ we will denote its corresponding element in $\mathbb{R}[\mathbf{x}^{[j]}]$ as $p^{[i]}(\mathbf{x}^{[j]})$.

We say that $p \in \mathcal{P}$ is \emph{$G$-invariant} if for each $g \in G$ we have $gp = p$, or equivalently
$$
p(\mathbf{x}^{[g0]}, \ldots, \mathbf{x}^{[gn]}) = p(\mathbf{x}^{[0]}, \ldots, \mathbf{x}^{[n]}) \quad \textrm{for every $g \in G$}.
$$
For example, if $m_i = 1$ and $G$ is the full permutation group on $[n]$, then a polynomial $p$ is invariant if
$$p(x^{[0]}, \ldots, x^{[n]}) = p(x^{[\sigma(0)]}, \ldots, x^{[\sigma(n)]})$$
for every permutation $\sigma: [n] \to [n]$ which means that $p$ is invariant with respect to arbitrary permutations of variables.

For two sets $A,B$ we denote the set of all functions from $A$ to $B$ by $B^A$. 
If the set $A$ is finite, such functions are sometimes written as an $\vert A\vert$-tuple of values in $B$.  
In our case, we will consider $\mathcal{I}$ to be a finite index set,  
and sometimes write a map 
$\alpha \in \widetilde{\mathcal{F}} \to  \mathcal{I}$ as a tuple $\alpha \in \mathcal{I}^{\widetilde{\mathcal{F}}}$ 
with entries from $\mathcal{I}$ and  where the entries are indexed by the facets in $\mathcal{\widetilde{F}}$. 
If we have a function $\alpha: \widetilde{\mathcal{F}} \to \mathcal{I}$ 
and want to restrict its domain to  $\widetilde{\mathcal{F}}_i$ (for some index $i \in [n]$), in the tuple notation we write
$$ \alpha_{\vert_i} \coloneqq \alpha_{\vert_{\mathcal{\widetilde{F}}_i}} \in \mathcal{I}^{\widetilde{\mathcal{F}}_i}, $$
which means that we delete all entries which are indexed by a facet not containing $i$.
We will in general stick to the functional notation except for the examples, where we will switch to the tuple notation. 
Their connection will be made explicit in Example \ref{ex:omegaG-dec}.

\subsection{The invariant decomposition}
\label{ssec:invdec}

We now define the  basic invariant decomposition, 
called $(\Omega,G)$-decomposition, 
simply called the \emph{invariant decomposition}. 
Afterwards we will study the existence of decompositions without invariance (page \pageref{sssec:withoutinv}),  
the existence of invariant decompositions with free group actions (page \pageref{sssec:free}) 
and with blending group actions (page \pageref{sssec:blending}).

The  idea of the invariant decomposition is to consider finite sums of elementary polynomials (i.e.\ polynomials written as a product of local polynomials  depending on one collection of variables $\mathbf{x}^{[i]}$), 
where each local polynomial is associated to a vertex of $\Omega$, 
and the summation indices are described as functions $\alpha_{\vert_i}$ on the facets. 
The following definition is illustrated in Example \ref{ex:omegaG-dec0}, 
\ref{ex:omegaG-exp} and \ref{ex:omegaG-dec}.

\begin{definition}[Invariant decomposition] \label{def:omegaG-dec} \quad\nobreakpar
\begin{enumerate}[label=(\roman*),ref=(\roman*),leftmargin=*]
\item \label{def:omegaG-dec:i}
An \emph{$(\Omega,G)$-decomposition} of $p \in \mathcal{P}$ consists of a finite index set $\mathcal{I}$ and families of polynomials
$$\mathcal{P}^{[i]} \coloneqq \left(p_{\beta}^{[i]} \right)_{\beta \in \mathcal{I}^{\widetilde{\mathcal{F}}_i}}$$
where $p_{\beta}^{[i]} \in \mathbb{R}\left[\mathbf{x}^{[i]}\right]$ for all $i \in [n]$, such that
\begin{enumerate}[label=(\alph*),ref=(\alph*),leftmargin=*]
	\item \label{im:a}
	$p$ can be written as
$$ p = \sum_{\alpha \in \mathcal{I}^{\widetilde{\mathcal{F}}}} p_{{\alpha_{\vert_0}}}^{[0]}(\mathbf{x}^{[0]}) \cdots p_{{\alpha_{\vert_n}}}^{[n]}(\mathbf{x}^{[n]})$$
	\item \label{im:b}
	For all $i \in [n]$, $g \in G$ and $\beta \in \mathcal{I}^{\widetilde{\mathcal{F}}_i}$ we have
$$p_{\beta}^{[i]}(\mathbf{x}^{[i]}) = p_{{}^g \beta}^{[gi]}(\mathbf{x}^{[i]})$$
where ${}^g \beta$ is defined in Definition \ref{def:action} ($ii$).
\end{enumerate}

\item  \label{def:omegaG-dec:ii}
The minimal cardinality of $\mathcal{I}$ among all $(\Omega,G)$-decomposition of $p$ is called the \emph{$(\Omega,G)$-rank of $p$}, denoted $ \rank_{(\Omega,G)}(p).$
If $p$ does not admit an $(\Omega,G)$-decomposition, we set $\rank_{(\Omega,G)}(p) = \infty$.

\item  \label{def:omegaG-dec:iii}
If $G$ is the trivial group action, we call the $(\Omega,G)$-decomposition just \emph{$\Omega$-decomposition} and denote its rank by \emph{$\rank_{\Omega}.$}  
\end{enumerate}
\end{definition}

Condition \ref{def:omegaG-dec:i} \ref{im:a} provides an  arrangement of the summation indices  encoded in the functions $\alpha$, and  
condition \ref{def:omegaG-dec:i} \ref{im:b} ensures that the decomposition has the desired symmetry, by requiring that the coefficients of particular local polynomials in different local spaces coincide. 
Note again that this equality only makes sense if the collections $\mathbf{x}^{[i]}$ and $\mathbf{x}^{[gi]}$  have the same cardinality (i.e.\ $m_i = m_{gi}$).

\begin{remark}[Admitting an $(\Omega,G)$-decomposition implies being $G$-invariant]\label{rem:OmegaG-dec} \quad\nobreakpar
\begin{enumerate}[label=(\roman*),ref=(\roman*),leftmargin=*]
\item \label{im:rem:OmegaG-dec:i}
If a polynomial has a $(\Omega,G)$-decomposition then it is $G$-invariant: 
\be \nn 
gp&=& p(\mathbf{x}^{[g0]}, \ldots, \mathbf{x}^{[gn]}) = \sum_{\alpha \in \mathcal{I}^{\widetilde{\mathcal{F}}}} p^{[0]}_{\alpha_{\vert_{0}}}(\mathbf{x}^{[g0]}) \cdots p^{[n]}_{\alpha_{\vert_{n}}}(\mathbf{x}^{[gn]}) \\ \nn &=& \sum_{\alpha \in \mathcal{I}^{\widetilde{\mathcal{F}}}} p^{[g0]}_{{}^g (\alpha_{\vert_{0}})}(\mathbf{x}^{[g0]}) \cdots p^{[gn]}_{{}^g (\alpha_{\vert_{n}})}(\mathbf{x}^{[gn]})  \\ \nn &=& \sum_{\alpha \in \mathcal{I}^{\widetilde{\mathcal{F}}}} p^{[g0]}_{({}^g\alpha)_{\vert_{g0}}}(\mathbf{x}^{[g0]}) \cdots p^{[gn]}_{({}^g\alpha)_{\vert_{gn}}}(\mathbf{x}^{[gn]}) \\ \nn &=& \sum_{\alpha \in \mathcal{I}^{\widetilde{\mathcal{F}}}} p^{[0]}_{\alpha_{\vert_{0}}}(\mathbf{x}^{[0]}) \cdots p^{[n]}_{\alpha_{\vert_{n}}}(\mathbf{x}^{[n]}) = p
\ee
where we have used Definition \ref{def:omegaG-dec} \ref{def:omegaG-dec:i} \ref{im:b} in the third equality, 
and the fact that $\alpha \mapsto {}^{g^{}} \alpha$ is a bijection on $\mathcal{I}^{\widetilde{\mathcal{F}}}$ 
and that $i \mapsto gi$ is a bijection on $[n]$ in the fifth equality.

In the converse direction, the following holds: 
If a polynomial is $G$-invariant, then it has an $(\Omega,G)$-decomposition if $G$ acts freely on $\Omega$. Moreover, every $\Omega$ can be refined so that $G$ acts freely on it (Proposition \ref{pro:extend}). 

\item 
The existence of an $(\Omega,G)$-decomposition might imply an even stronger symmetry than $G$-invariance. As we will see in Example \ref{ex:omegaG-dec} \ref{ex:omegaG-dec:i}, the existence of a $(\Sigma_n,G)$-decomposition for any transitive group action of some group $G$  already implies $S_{n+1}$-invariance. This is closely related to the action not being free.\demo
\end{enumerate}
\end{remark}

Let us now revisit our running examples---the simple and double edge of Example \ref{ex:wscbasic}---in the light of invariant decompositions. 

\begin{example}[The simple and double edge with invariance]\label{ex:omegaG-dec0}\quad\nobreakpar
\begin{enumerate}[label=(\roman*),ref=(\roman*),leftmargin=*]
\item 
On the simple edge $\Lambda_1,$ the elements in $\mathcal{I}^{\widetilde{\mathcal{F}}}$ are just single values, 
and thus the corresponding decomposition is given by
$$
p = \sum_{\alpha=1}^r p^{[0]}_{\alpha}(\mathbf{x}^{[0]}) \cdot p^{[1]}_{\alpha}(\mathbf{x}^{[1]}). 
$$
The $C_2$-invariant decomposition is given by
$$p= \sum_{\alpha=1}^r p_{\alpha}(\mathbf{x}^{[0]}) \cdot p_{\alpha}(\mathbf{x}^{[1]}).$$
\item \label{ex:omegaG-dec0:ii}
For the double edge $\Delta$ we have two facets and thus the $\Delta$-decomposition reads   
$$
p= \sum_{\alpha,\beta=1}^r p_{\alpha, \beta}^{[0]}(\mathbf{x}^{[0]}) \cdot p_{\beta,\alpha}^{[1]}(\mathbf{x}^{[1]}).
$$ 
Note that the order of the indiced $\alpha,\beta$ does not matter here, since there is no connection between the local polynomials at site $0$ and $1$.
But for the non-trival $C_2$ action,  Definition \ref{def:omegaG-dec} \ref{def:omegaG-dec:i} \ref{im:b}  specifies that 
$$
p^{[0]}_{\alpha,\beta}=p^{[1]}_{\alpha,\beta}, 
$$
so an  $(\Delta,C_2)$-decomposition is of the form
\be  
p= \sum_{\alpha,\beta=1}^r p_{\alpha, \beta}(\mathbf{x}^{[0]}) \cdot p_{\beta,\alpha}(\mathbf{x}^{[1]}).\label{eq:invdoubleedge}
\ee\demo
\end{enumerate}
\end{example}

Let us now consider an invariant polynomial on the double edge which we will revisit  in Example \ref{ex:inv-sos} in the light of sum-of-squares invariant decompositions. 

\begin{example}[Invariant polynomial on the double edge]\label{ex:omegaG-exp}
Consider the polynomial 
$$
p=x^2+y^2+4(1+xy)^2=4+8xy+x^2+y^2+4x^2y^2\in \mathbb{R}[x]\otimes\mathbb{R}[y] 
$$ 
which is  invariant with respect to the permutation of $x$ and $y$.
A  $(\Delta, C_2)$-decomposition of $p$ has the form 
$$
p=\sum_{\alpha,\beta=1}^2 p_{\alpha,\beta}(x)p_{\beta,\alpha}(y),
$$   
with
$$
p_{1,1}(t) = \frac12 +2t^2, \quad p_{1,2}(t) =p_{2,1}(t)= \sqrt{\frac{15}{8}}, \quad  p_{2,2}(t)=\sqrt{8}t. 
$$   
It is easy to see that a decomposition of rank $1$ does not exist, showing that the $(\Delta,C_2)$-rank is indeed 2. 
\demo\end{example}

Let us now see more standard examples of $(\Omega,G)$-decompositions based off the weighted simplicial complexes of Example \ref{ex:wsc}. 

\begin{example}[The simplex and the circle with their symmetry]\label{ex:omegaG-dec}\quad\nobreakpar
\begin{enumerate}[label=(\roman*),ref=(\roman*),leftmargin=*]
\item \label{ex:omegaG-dec:i}
For $n \geq 1$ consider an $n$-simplex $\Sigma_n$, whose facets are given by $\mathcal{\widetilde{F}} = \{[n] \}$.
Since $\widetilde{\mathcal{F}}$ only contains one facet encompassing all vertices, the corresponding $\Sigma_n$-decomposition is given by
$$
p= \sum_{\alpha=1}^{r} p_{\alpha}^{[0]}(\mathbf{x}^{[0]}) \cdot p_{\alpha}^{[1]}(\mathbf{x}^{[1]}) \cdots p_{\alpha}^{[n]}(\mathbf{x}^{[n]}).
$$
The minimal integer $r$ among all such decompositions is the $\rank_{\Sigma_n}(p)$---this is usually called the \emph{tensor rank}. 

Now assume there is  a group action $G$ on $[n]$ which is transitive, i.e.\ it generates  only  one orbit, namely  $Gi = [n]$ for all $i \in [n]$. 
Then  Definition \ref{def:omegaG-dec} \ref{def:omegaG-dec:i} \ref{im:b} requires
$ p^{[i]}_{\alpha} = p^{[j]}_{\alpha} $ 
for all $i,j,\alpha$, 
and hence the corresponding $(\Sigma_n, G)$-decomposition reads
$$p = \sum_{\alpha=1}^{r} p_{\alpha}(\mathbf{x}^{[0]}) \cdot p_{\alpha}(\mathbf{x}^{[1]}) \cdots p_{\alpha}(\mathbf{x}^{[n]}).$$
This decomposition is manifestly fully symmetric with respect to every permutation of $\mathbf{x}^{[i]}$ with $\mathbf{x}^{[j]}$. The minimal such $r$ is the $\rank_{(\Sigma_n,G)}(p)$---usually called the \emph{symmetric tensor rank}. 

\item \label{ex:omegaG-dec:ii}
For $n \geq 3$ consider the circle $\Theta_n$. The $\Theta_n$-decomposition of $p$ reads
$$
p = \sum_{\alpha_0, \ldots, \alpha_n=1}^{r} p_{\alpha_0, \alpha_1}^{[0]}(\mathbf{x}^{[0]}) \cdot p_{\alpha_1, \alpha_2}^{[1]}(\mathbf{x}^{[1]}) \cdots p_{\alpha_{n}, \alpha_0}^{[n]}(\mathbf{x}^{[n]}).
$$
The minimal such $r$ is the $\rank_{\Theta_n}(p)$---this is usually called the \emph{operator Schmidt rank}. 
 
Since the cyclic group $C_{n}$ acts freely on $\Theta_n$, we obtain the $(\Theta_n,C_{n})$-decomposition
$$
 p = \sum_{\alpha_0, \ldots, \alpha_n=0}^{r} p_{\alpha_0, \alpha_1}(\mathbf{x}^{[0]}) \cdot p_{\alpha_1, \alpha_2}(\mathbf{x}^{[1]}) \cdots p_{\alpha_{n}, \alpha_0}(\mathbf{x}^{[n]}).
$$
This decomposition is manifestly translational invariant, that is, invariant with respect to permutations $\mathbf{x}^{[i]} \mapsto \mathbf{x}^{[a+i]}$ for $a \in \mathbb{N}$ where the addition is   modulo $n+1$. 
Note that polynomials with such a decomposition are generally not $S_{n}$-invariant.
The minimal such $r$ is called the $\rank_{(\Theta_n,C_n)}(p)$---usually called the  \emph{translationally invariant operator Schmidt rank}.
\demo
\end{enumerate}
\end{example}

\begin{center}
\deco \:\:  \emph{Decompositions without invariance}\:\:\: \deco
\label{sssec:withoutinv}
\end{center}

The first result on the existence of polynomial decompositions does not involve any invariance. 
It is an adaption of the result for tensor decompositions (see \cite[Theorem 11]{De19d}), which we will prove here for completeness.

\begin{theorem}[Existence of $\Omega$-decompositions]
\label{thm:omega-dec}
For every connected weighted simplicial complex $\Omega$ and every $p \in \mathcal{P}$ there exists an $\Omega$-decomposition of $p$, i.e.\  $\rank_{\Omega}(p) < \infty$. 
Moreover, given  a decomposition of the form
\be \label{eq:elementaryPoly}
p = \sum_{j = 1}^{r} p_j^{[0]}(\mathbf{x}^{[0]}) \cdots p_j^{[n]}(\mathbf{x}^{[n]})
\ee
where $p_j^{[i]} \in \mathbb{R}[\mathbf{x}^{[i]}]$,  there exists an  $\Omega$-decomposition of $p$ only using the $p_j^{[i]}$ as local polynomials at each site $i$.
\end{theorem}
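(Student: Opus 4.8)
The plan is first to dispose of the bare existence claim by reducing it to the second, more precise statement. Every $p\in\mathcal{P}$ is, by the definition of the algebraic tensor product (equivalently, by expanding $p$ into monomials and grouping the variables site by site), a finite sum $p=\sum_{j=1}^{r} p_j^{[0]}(\mathbf{x}^{[0]})\cdots p_j^{[n]}(\mathbf{x}^{[n]})$ with $p_j^{[i]}\in\mathbb{R}[\mathbf{x}^{[i]}]$; hence it is enough to prove the ``moreover'' part, after which the first assertion is immediate. So assume such a decomposition is given, and set $\mathcal{I}:=\{1,\dots,r\}$.

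The construction I would use runs a single index $j\in\mathcal{I}$ and ``propagates'' it over all multifacets by means of Kronecker-type gadgets, so that in the sum over $\alpha\in\mathcal{I}^{\widetilde{\mathcal{F}}}$ only globally constant functions $\alpha$ contribute. Concretely, for each $i\in[n]$ and each $\beta\in\mathcal{I}^{\widetilde{\mathcal{F}}_i}$, put $p_\beta^{[i]}:=p_j^{[i]}$ if $\beta$ is the constant function with value $j$, and $p_\beta^{[i]}:=0$ otherwise (this is meaningful since $\widetilde{\mathcal{F}}_i\neq\emptyset$, because every singleton $\{i\}$ is a simplex). With this choice a summand $p_{\alpha_{|_0}}^{[0]}(\mathbf{x}^{[0]})\cdots p_{\alpha_{|_n}}^{[n]}(\mathbf{x}^{[n]})$ vanishes unless $\alpha_{|_i}$ is constant for every $i\in[n]$; when that happens all these constant values must coincide --- call the common value $j$ --- and the summand equals $p_j^{[0]}(\mathbf{x}^{[0]})\cdots p_j^{[n]}(\mathbf{x}^{[n]})$. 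Summing over the resulting $r$ surviving functions $\alpha$ recovers $p$, which is exactly the decomposition identity required by \cref{def:omegaG-dec}; the invariance condition there is vacuous since $G$ is trivial. Observe that the only local polynomials appearing are the given $p_j^{[i]}$ together with the zero polynomial, so the construction indeed uses only the $p_j^{[i]}$.

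The one substantive point --- and the place where connectedness of $\Omega$ is used --- is the implication ``$\alpha_{|_i}$ constant for all $i$ $\Longrightarrow$ $\alpha$ constant'' invoked above. I would isolate it as a lemma: the graph $\mathcal{G}$ on the vertex set $\widetilde{\mathcal{F}}$, with $F$ and $F'$ adjacent whenever $c(F)\cap c(F')\neq\emptyset$ (equivalently, whenever $F,F'\in\widetilde{\mathcal{F}}_i$ for some vertex $i$), is connected. Granting this, a function $\alpha$ all of whose restrictions $\alpha_{|_i}$ are constant is constant along every edge of $\mathcal{G}$, hence constant on its unique component, hence globally constant. To prove the lemma, given multifacets $F,F'$ pick vertices $i\in c(F)$ and $j\in c(F')$ (facets are nonempty, as $\emptyset$ is never maximal); by connectedness of $\Omega$ there is a chain of vertices $i=i_0,i_1,\dots,i_k=j$ with consecutive ones sharing a facet $G_\ell\in\mathcal{F}_{i_\ell}\cap\mathcal{F}_{i_{\ell+1}}$; lifting each $G_\ell$ to a multifacet copy $\widetilde{G}_\ell\in\widetilde{\mathcal{F}}$, one gets $F\sim\widetilde{G}_0\sim\widetilde{G}_1\sim\cdots\sim\widetilde{G}_{k-1}\sim F'$ in $\mathcal{G}$ (the first adjacency because $i_0\in c(F)\cap G_0$, the generic one because $i_{\ell+1}\in G_\ell\cap G_{\ell+1}$, the last because $i_k\in G_{k-1}\cap c(F')$), while the degenerate case $i=j$ gives $F\sim F'$ directly. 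This ``lift the neighbour chain'' argument is the only mildly delicate step; the rest is bookkeeping.
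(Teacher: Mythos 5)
Your proposal is correct and is essentially the paper's own proof: the same Kronecker-delta construction (set $p_\beta^{[i]}=p_j^{[i]}$ for constant $\beta\equiv j$ and $0$ otherwise), with connectedness of $\Omega$ used to ensure that only globally constant $\alpha$ survive. The only difference is that you spell out the ``all restrictions constant $\Rightarrow$ $\alpha$ constant'' step via an explicit lifted-chain argument on the multifacets, which the paper asserts without proof.
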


Note that the $\Omega$-decomposition obtained by ``reusing'' the polynomials of \eqref{eq:elementaryPoly} may not be optimal, i.e.\ it may need more terms than its rank.  

\begin{proof}
We start with an elementary polynomial decomposition
$$p = \sum_{j \in \mathcal{I}} p_j^{[0]}(\mathbf{x}^{[0]}) \cdot p_j^{[1]}(\mathbf{x}^{[1]}) \cdots p_j^{[n]}(\mathbf{x}^{[n]})$$
where $\mathcal{I}$ is a finite index set and $p_j^{[i]} \in \mathbb{R}[\mathbf{x}^{[i]}]$ for all $j \in \mathcal{I}$. For $i \in [n]$ and $\beta \in \mathcal{I}^{\widetilde{\mathcal{F}}_i}$ we define
\be \nn p_{\beta}^{[i]} \coloneqq \left\{\begin{array}{ll} p_j^{[i]} & : \beta \textrm{ takes the constant value } j \in \mathcal{I}\\[0.1cm] 0 & : \textrm{else.} \end{array}\right. \ee
Since $\Omega$ is connected, for $\alpha \in \mathcal{I}^{\mathcal{\widetilde{F}}}$ the restricted functions $\alpha_{\vert_i}$ are all constant if and only if $\alpha$ is constant. It follows that
\be  \nn \sum_{\alpha \in \mathcal{I}^{\widetilde{\mathcal{F}}}} p^{[0]}_{\alpha_{\vert_0}}(\mathbf{x}^{[0]}) \cdots p^{[n]}_{\alpha_{\vert_n}}(\mathbf{x}^{[n]}) = \sum_{j \in \mathcal{I}} p_j^{[0]}(\mathbf{x}^{[0]}) \cdots p_j^{[n]}(\mathbf{x}^{[n]}) =p(\mathbf{x}^{[0]}, \ldots, \mathbf{x}^{[n]})  \ee
 is an $\Omega$-decomposition of $p$.
\end{proof}

\begin{center}
\deco \:\:  \emph{Invariant decompositions with free group actions}\:\:\: \deco
\label{sssec:free}
\end{center}

We now show that if $G$ acts freely on $\Omega$, then every $G$-invariant polynomial admits an $(\Omega,G)$-decomposition. Recall that `free' was defined in Definition \ref{def:action} \ref{im:action:iii}.
The proof   is similar to that of \cite[Theorem 13]{De19d}, but we include it here for completeness. 
We will illustrate the idea of the proof in Example \ref{ex:omegaG-constr}.

\begin{theorem}[Invariant decompositions with free group actions]
\label{thm:omegaG-dec}
Let $\Omega$ be a connected weighted simplicial complex, $G$ a group action on $\Omega$, and  $p \in \mathcal{P}$ a $G$-invariant polynomial. 
If $G$ acts freely on $\Omega$, then $p$ has an $(\Omega,G)$-decomposition, i.e.\ $\rank_{(\Omega,G)}(p) < \infty$.
Moreover, given a decomposition of the form \eqref{eq:elementaryPoly}, 
an $(\Omega,G)$-decomposition of $p$ can be obtained by using only nonnegative multiples of the $p_j^{[i]}$ as local polynomials at each site $i$.
\end{theorem}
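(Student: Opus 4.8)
The plan is to follow the tensor-network argument of \cite[Theorem 13]{De19d}: fix an arbitrary finite elementary decomposition of $p$, symmetrize it over $G$, and then reorganize the symmetrized sum into an $(\Omega,G)$-decomposition with the help of a $G$-linear map $\mathbf{z}\colon\widetilde{\mathcal{F}}\to G$ coming from freeness. First I would fix a finite decomposition $p=\sum_{j\in\mathcal{J}}p_j^{[0]}(\mathbf{x}^{[0]})\cdots p_j^{[n]}(\mathbf{x}^{[n]})$, namely the one assumed in \eqref{eq:elementaryPoly} or, in general, one produced by \cref{thm:omega-dec}. Since $p$ is $G$-invariant we may average, $p=\tfrac{1}{|G|}\sum_{g\in G}(gp)$; expanding each $gp$ via \eqref{eq:gp} and reindexing the product $i\mapsto g^{-1}i$ gives $p=\sum_{(g,j)\in G\times\mathcal{J}}\prod_{i\in[n]}\kappa\,p_j^{[g^{-1}i]}(\mathbf{x}^{[i]})$ with $\kappa:=|G|^{-1/(n+1)}>0$, so that $\kappa^{n+1}=|G|^{-1}$. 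Every local factor here is a nonnegative multiple of a canonical image of one of the given $p_j^{[i]}$, which is exactly what the ``moreover'' clause demands; the remaining task is to turn this sum into an $(\Omega,G)$-decomposition.

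The tool for that is freeness: by the characterization of free actions in \cref{rem:grac} \ref{rem:grac:vi} there is a $G$-linear map $\mathbf{z}\colon\widetilde{\mathcal{F}}\to G$, where $G$ acts on itself by left translation. I would set $\mathcal{I}:=G\times\mathcal{J}$ and call $\beta\in\mathcal{I}^{\widetilde{\mathcal{F}}_i}$ \emph{admissible} if there is a pair $(h,j)\in G\times\mathcal{J}$ with $\beta(\tilde F)=(h^{-1}\mathbf{z}(\tilde F),j)$ for every $\tilde F\in\widetilde{\mathcal{F}}_i$; since $\widetilde{\mathcal{F}}_i\neq\emptyset$ (each singleton is a simplex) such a pair is unique when it exists. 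Then define $p_\beta^{[i]}:=\kappa\,p_j^{[h^{-1}i]}(\mathbf{x}^{[i]})$ for admissible $\beta$, and $p_\beta^{[i]}:=0$ otherwise.

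Next I would verify the two conditions of \cref{def:omegaG-dec}. For condition \ref{im:a}: an $\alpha\in\mathcal{I}^{\widetilde{\mathcal{F}}}$ produces a nonzero summand iff every restriction $\alpha_{|_i}$ is admissible; equating the parameters $(h,j)$ on multifacets shared by neighbouring vertices and using connectedness of $\Omega$ forces $\alpha(\tilde F)=(h^{-1}\mathbf{z}(\tilde F),j)$ globally for a single $(h,j)\in\mathcal{I}$, so the sum collapses to $\sum_{(h,j)}\prod_{i}\kappa\,p_j^{[h^{-1}i]}(\mathbf{x}^{[i]})=\tfrac{1}{|G|}\sum_{h\in G}(hp)=p$. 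For condition \ref{im:b}: from $G$-linearity in the form $\mathbf{z}(g^{-1}\tilde F)=g^{-1}\mathbf{z}(\tilde F)$ one checks that ${}^g\beta$ is admissible at $gi$ with parameter $(gh,j)$ exactly when $\beta$ is admissible at $i$ with parameter $(h,j)$, and then $p_{{}^g\beta}^{[gi]}=\kappa\,p_j^{[(gh)^{-1}gi]}(\mathbf{x}^{[gi]})=\kappa\,p_j^{[h^{-1}i]}(\mathbf{x}^{[gi]})$, which is $p_\beta^{[i]}$ after the variable identification $\mathbf{x}^{[gi]}\leftrightarrow\mathbf{x}^{[i]}$; the non-admissible cases agree by applying the same reasoning to $g^{-1}$. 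Since $|\mathcal{I}|=|G|\cdot|\mathcal{J}|<\infty$ and the local polynomials used are nonnegative multiples of the $p_j^{[i]}$, both assertions follow.

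The step that needs genuine care — the main obstacle — is the choice of the admissibility condition $\beta(\tilde F)=(h^{-1}\mathbf{z}(\tilde F),j)$: it must simultaneously be a \emph{local} condition on $\beta$ whose per-vertex data glue, via connectedness, into one global parameter (needed for \ref{im:a}), and be $G$-equivariant in precisely the way that the induced parameter shift $h\mapsto gh$ cancels the $g^{-1}$ occurring in $p_j^{[(g^{-1}h)^{-1}i]}$ (needed for \ref{im:b}). Both the inverse on $h$ and the side on which $\mathbf{z}(\tilde F)$ sits are forced by the conventions $\mathbf{z}(h\tilde F)=h\mathbf{z}(\tilde F)$ and ${}^g\beta(\tilde F)=\beta(g^{-1}\tilde F)$; any other placement breaks one of the two conditions. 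The rest is routine bookkeeping with the relabeling isomorphisms $\mathbb{R}[\mathbf{x}^{[i]}]\cong\mathbb{R}[\mathbf{x}^{[gi]}]$, whose availability along $G$-orbits is exactly why one assumes $m_i=m_{gi}$ there.
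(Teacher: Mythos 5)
Your proposal is correct and takes essentially the same route as the paper's proof: freeness supplies the $G$-linear map $\mathbf{z}\colon\widetilde{\mathcal F}\to G$, the index set is enlarged to $G\times\mathcal J$, the local polynomials are supported on the ``admissible'' index functions, and connectedness collapses the sum to $\tfrac{1}{|G|}\sum_{g\in G}gp=p$. The only cosmetic differences are that you fold the constant-index construction of \cref{thm:omega-dec} directly into your admissibility condition instead of first passing to an $\Omega$-decomposition, and that you distribute the normalization $|G|^{-1/(n+1)}$ over the local factors from the outset rather than absorbing it at the end.
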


As in Theorem \ref{thm:omega-dec}, the $(\Omega,G)$-decomposition obtained by  ``reusing'' the polynomials of \eqref{eq:elementaryPoly} will generally not be optimal. 

Note that  every weighted simplicial complex $\Omega$ can be refined so that $G$ acts freely (by Proposition \ref{pro:extend}),  and refining will  translate to adding more summation indices in the polynomial decomposition, as  in Example \ref{ex:omegaG-dec0} \ref{ex:omegaG-dec0:ii}. 

The idea of the proof is simple.  
Starting from the decomposition in \eqref{eq:elementaryPoly}, we essentially build 
$$
 \frac{1}{\vert G\vert} \sum_{g\in G}gp = p 
$$
where $gp$ is defined in \eqref{eq:gp}, and   let  $g$ act on each of the local terms in the decomposition. 
The latter    can then be transformed into an $(\Omega,G)$-decomposition of $p$.  

\begin{proof}
Since $G$ acts freely, by Remark \ref{rem:grac} \ref{rem:grac:iv}, there exists a $G$-linear map $\textbf{z}: \widetilde{\mathcal{F}} \to G$, where $G$ acts on itself by left-multiplication. In the following, we fix one such mapping. For the polynomial $p$ we first obtain by Theorem \ref{thm:omega-dec} an $\Omega$-decomposition and denote the local elements by
\be \nn 
Q^{[i]} \coloneqq \left(q_\beta^{[i]}(\mathbf{x}^{[i]}) \right)_{\beta \in \mathcal{I}^{\widetilde{\mathcal{F}}_i}}  
\ee
where $q_\beta^{[i]}(\mathbf{x}^{[i]}) \in \mathbb{R}[\mathbf{x}^{[i]}]$ for every $i \in [n]$. We define a new index set
$$
\hat{\mathcal{I}} \coloneqq \mathcal{I} \times G
$$
together with the projection maps $\pi_1: \hat{\mathcal{I}} \to \mathcal{I}$ and $\pi_2: \hat{\mathcal{I}} \to G$. For each $i \in [n]$ and $\beta \in \hat{\mathcal{I}}^{\mathcal{\widetilde{F}}_i}$ we now define the following local polynomials:
$$
p_{\beta}^{[i]} \coloneqq \left\{ \begin{array}{ll} q^{[gi]}_{{}^g (\pi_1 \circ \beta)}(\mathbf{x}^{[i]}) & : \pi_2 \circ \beta = ({}^{g^{-1}} \mathbf{z})_{\vert_i} \\0 &: \textrm{else.} \end{array} \right.
$$
Note that $p_{\beta}^{[i]}(\mathbf{x}^{[i]})$ is well-defined since $g$ is uniquely determined by the relation $\pi_2 \circ \beta = ({}^{g^{-1}} \mathbf{z})_{\vert_i}$ if such a $g$ exists. This is due to the fact that if $({}^{g_1^{-1}} \mathbf{z})_{\vert_i} = ({}^{g_2^{-1}} \mathbf{z})_{\vert_i}$ we have $g_1 \cdot \mathbf{z}(F) = g_2 \cdot \mathbf{z}(F)$ for any $F \in \widetilde{\mathcal{F}}_i$ by $G$-linearity of $\mathbf{z}$. 
But this implies that $g_1 = g_2$.
In addition, the defined local polynomials fulfil Definition \ref{def:omegaG-dec} \ref{def:omegaG-dec:i} \ref{im:b} since for $g,h \in G$ we obtain
$$
p^{[hi]}_{{}^{h} \beta}(\mathbf{x}^{[i]}) = q^{[ghi]}_{{}^g (\pi_1 \circ {}^{h} \beta)}(\mathbf{x}^{[i]}) = q^{[ghi]}_{{}^{gh} (\pi_1 \circ \beta)}(\mathbf{x}^{[i]}) = p^{[i]}_{\beta}(\mathbf{x}^{[i]})
$$
using the fact that $\pi_2\circ {}^h\beta=\left({}^{g^{-1}}\mathbf z\right)_{\mid_{hi}}$ is equivalent to $\pi_2\circ\beta=\left({}^{(gh)^{-1}}\mathbf z\right)_{\mid_i}.$ 

It only remains to show that the local polynomials form an $(\Omega,G)$-decomposition of $p$. To this end we compute
\be\nn 
\sum_{\hat{\alpha} \in \hat{\mathcal{I}}^{\widetilde{\mathcal{F}}}} p_{\hat{\alpha}_{\vert_0}}^{[0]}(\mathbf{x}^{[0]}) \cdots p_{\hat{\alpha}_{\vert_n}}^{[n]}(\mathbf{x}^{[n]}) &=& \sum_{\substack{z \in G^{\widetilde{\mathcal{F}}} \\ \forall i \exists g_i: z_{\vert_i} = \left({}^{g_i^{-1}} \mathbf{z}\right)_{\vert_i}}} \sum_{\alpha \in \mathcal{I}^{\widetilde{\mathcal{F}}}} q^{[g_0 0]}_{{}^{g_0} (\alpha_{\vert_0})}(\mathbf{x}^{[0]}) \cdots q^{[g_n n]}_{{}^{g_n} (\alpha_{\vert_n})}(\mathbf{x}^{[n]}).
\ee 
Using that $\Omega$ is connected and $\mathbf{z}$ is $G$-linear, for each $z$ fulfilling the conditions from the outer sum on the right, we obtain $g_i = g_j \eqqcolon g$ for all $i,j \in [n]$. So the corresponding inner sum becomes 
\be\nn   
\sum_{\alpha \in \mathcal{I}^{\widetilde{\mathcal{F}}}} q^{[g 0]}_{{}^{g} (\alpha_{\vert_0})}(\mathbf{x}^{[0]}) \cdots q^{[g n]}_{{}^{g} (\alpha_{\vert_n})}(\mathbf{x}^{[n]})= p(\mathbf{x}^{[g^{-1}0]}, \ldots, \mathbf{x}^{[g^{-1}n]})= p(\mathbf{x}^{[0]}, \ldots, \mathbf{x}^{[n]}), 
\ee 
using $G$-invariance of $p$. 
Hence the total sum equals a positive multiple of  $p$, where the factor is the number of all $z$ which fulfill the above conditions. In fact,  this number is just $\vert G\vert$, since the ${}^{g^{-1}}\mathbf z$ for $g\in G$ are precisely the different possible choices for $z$.
So  dividing by $\vert G\vert$  and absorbing its positive $(n+1)$-th root into the local polynomials yields an $(\Omega,G)$-decomposition of $p$. The last statement is immediate by construction.
\end{proof}

The following are some immediate and useful relations between ranks: 

\begin{corollary}[Relations among  ranks]
\label{cor:ranks}
Let $\Omega$ be connected and $G$  a free group action on $\Omega$, 
and $\Sigma_n$ the simplex (defined in Example \ref{ex:wsc} \ref{ex:wsc:i}). 
Then for every $G$-invariant $p \in \mathcal{P}$  we have
$$ 
\rank_{(\Omega,G)}(p) \leq \vert G\vert \cdot \rank_{\Omega}(p) \leq \vert G\vert \cdot \rank_{\Sigma_n}(p).
$$
\end{corollary}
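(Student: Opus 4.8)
The plan is to derive both inequalities directly from the existence results already established. For the second inequality, $\textrm{rank}_{\Omega}(p) \leq \textrm{rank}_{\Sigma_n}(p)$: I would start from an optimal $\Sigma_n$-decomposition of $p$, which by \cref{ex:omegaG-dec} \ref{ex:omegaG-dec:i} has the form $p = \sum_{j=1}^{r} p_j^{[0]}(\mathbf{x}^{[0]}) \cdots p_j^{[n]}(\mathbf{x}^{[n]})$ with $r = \textrm{rank}_{\Sigma_n}(p)$. This is precisely a decomposition of the form \eqref{eq:elementaryPoly}, so \cref{thm:omega-dec} produces an $\Omega$-decomposition of $p$ whose index set $\mathcal{I}$ can be taken to have cardinality $r$ (the construction in the proof of \cref{thm:omega-dec} sets $\mathcal{I} = \{1,\ldots,r\}$ and reuses exactly the $p_j^{[i]}$). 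Hence $\textrm{rank}_{\Omega}(p) \leq r$.

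For the first inequality, $\textrm{rank}_{(\Omega,G)}(p) \leq |G| \cdot \textrm{rank}_{\Omega}(p)$: here I would again begin with an optimal $\Omega$-decomposition — but actually it is cleaner to feed an optimal $\Sigma_n$-decomposition (equivalently any decomposition of the form \eqref{eq:elementaryPoly} of size $r$) directly into \cref{thm:omegaG-dec}. Since $G$ acts freely on $\Omega$ by hypothesis, that theorem yields an $(\Omega,G)$-decomposition of $p$; inspecting its proof, the new index set is $\hat{\mathcal{I}} = \mathcal{I} \times G$, so its cardinality is $|G| \cdot |\mathcal{I}|$. Taking $\mathcal{I}$ to realize $\textrm{rank}_{\Omega}(p)$ gives $\textrm{rank}_{(\Omega,G)}(p) \leq |G| \cdot \textrm{rank}_{\Omega}(p)$. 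Chaining this with the second inequality completes the proof.

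The only subtlety — and the one point I would be careful to state rather than wave at — is that the cardinality bounds really do come out as claimed from the two construction proofs: \cref{thm:omega-dec}'s construction does not inflate the index set beyond the given $\mathcal{I}$, and \cref{thm:omegaG-dec}'s construction inflates it by exactly a factor $|G|$ (it is $\mathcal{I} \times G$, not something larger). Both are visible on the face of those proofs, so there is no real obstacle; this corollary is essentially bookkeeping on top of \cref{thm:omega-dec} and \cref{thm:omegaG-dec}. One should also note $p$ being $G$-invariant is needed only to invoke \cref{thm:omegaG-dec}, and connectedness of $\Omega$ is needed for both — but both are already among the hypotheses.
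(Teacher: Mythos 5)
Your proposal is correct and matches the paper's own proof, which likewise obtains the second inequality from the construction in \cref{thm:omega-dec} (reusing the local polynomials of an optimal $\Sigma_n$-decomposition with the same index set) and the first from the construction in \cref{thm:omegaG-dec} (whose index set is $\mathcal{I}\times G$). The extra bookkeeping you spell out is exactly what the paper leaves implicit.
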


In words, the first inequality says that one can impose invariance 
by increasing the rank by at most $\vert G\vert$, i.e.\ imposing invariance ``costs'' at most $\vert G\vert$ 
(as long as $G$ is free, else one cannot impose invariance within our framework). 
The second inequality says that the tensor rank is always the most expensive rank, 
i.e.\ having one joint index is the most costly decomposition.

\begin{proof}
The first inequality is immediate from the construction in the  proof of Theorem \ref{thm:omegaG-dec}, and the second inequality follows from the construction in the proof of Theorem \ref{thm:omega-dec}.
\end{proof}

Let us now illustrate the proof of Theorem \ref{thm:omegaG-dec} for the double edge. 

\begin{example}[Invariant decomposition on the double edge]\label{ex:omegaG-constr}
The cyclic group $C_2$ provides  a free group action on the double edge $\Delta$, 
so every $C_2$-invariant polynomial admits a $(\Delta, C_2)$-decomposition, given by Equation \eqref{eq:invdoubleedge}. Let us now construct it. 

For the group action of $C_2 = \{e, c\}$  on $\widetilde{\mathcal{F}} = \{\mathfrak{a}, \mathfrak{b}\}$ (with $c \mathfrak{a} = \mathfrak{b}$) there exists a $G$-linear map 
$\mathbf{z}: \widetilde{\mathcal{F}} \to G$, which can be chosen as
$$ 
\begin{array}{r l} \mathbf{z}:& \mathfrak{a} \mapsto e \\ & \mathfrak{b} \mapsto c.\end{array}
$$
(There is exactly one other choice, namely   exchanging the two outcomes of $\mathbf{z}$.) 

We start with a $\Delta$-decomposition of $p$, namely 
$$
p = \sum_{\alpha,\beta = 1}^{r} q^{[0]}_{\alpha, \beta}(\mathbf{x}^{[0]}) \cdot q^{[1]}_{\beta, \alpha}(\mathbf{x}^{[1]}), 
$$
where we associate the index $\alpha$ with $\mathfrak{a}$ and $\beta$ with $\mathfrak{b}$. To construct a $(\Delta, C_2)$-decomposition, we extend the indices $\alpha, \beta$ to tuples $(\alpha, g_0)$, $(\beta, g_1)$ where $g_0, g_1 \in C_2$. We define the local polynomials as
$$ 
p^{[0]}_{(\alpha, g_0), (\beta, g_1)}(\mathbf{x}^{[0]}) \coloneqq \left\{\begin{array}{ll}q^{[0]}_{\alpha,\beta}(\mathbf{x}^{[0]}) & \textrm{ if } (g_0, g_1) = (e,c)\\[0.2cm] q^{[1]}_{\beta,\alpha}(\mathbf{x}^{[0]}) & \textrm{ if } (g_0, g_1) = (c,e)\\[0.2cm] 0 & \textrm{ else}  \end{array}\right.
$$
and
$$ 
p^{[1]}_{(\alpha,g_0),(\beta, g_1)}(\mathbf{x}^{[1]}) \coloneqq \left\{\begin{array}{ll}q^{[1]}_{\alpha,\beta}(\mathbf{x}^{[1]}) & \textrm{ if } (g_0, g_1) = (e,c)\\[0.2cm] q^{[0]}_{\beta,\alpha}(\mathbf{x}^{[1]}) & \textrm{ if } (g_0, g_1) = (c,e)\\[0.2cm] 0 & \textrm{ else.}  \end{array}\right.
$$
 For $\alpha, \beta \in \{1, \ldots, r\}$ and $g_0, g_1 \in C_2$, the symmetry condition gives rise to the definition
$$
p^{[c0]}_{{}^c((\alpha,g_0), (\beta,g_1))} = p^{[1]}_{(\beta,g_1),(\alpha,g_0)} = p^{[0]}_{(\alpha,g_0), (\beta,g_1)} \eqqcolon p_{(\alpha,g_0), (\beta,g_1)}.
$$
In addition, it is easy to verify that
\be \nn & & \sum_{g_0, g_1 \in C_2} \sum_{\alpha,\beta = 1}^{r} p_{(\alpha,g_0), (\beta,g_1)}(\mathbf{x}^{[0]}) \cdot p_{(\beta,g_1), (\alpha,g_0)}(\mathbf{x}^{[1]}) \\ &=& p(\mathbf{x}^{[0]}, \mathbf{x}^{[1]}) + p(\mathbf{x}^{[1]}, \mathbf{x}^{[0]}) = 2 p(\mathbf{x}^{[0]}, \mathbf{x}^{[1]})\ee
which shows that the local polynomials $\frac{1}{\sqrt{2} }\cdot p_{(\alpha,g_0), (\beta,g_1)}$ form  a $(\Delta, C_2)$-decomposition of $p$. This also  implies  $ {\rm rank}_{(\Delta,C_s)}(p)\leq 2 \cdot r$. \demo
\end{example}

\bigskip

\begin{center}
\deco \:\:  \emph{Invariant decompositions with blending group actions}\:\:\: \deco 
\label{sssec:blending}
\end{center}
Since the full symmetry group $S_{n+1}$ is not free on the simplex $\Sigma_n$, Theorem \ref{thm:omegaG-dec} does not say anything about the existence of $(\Sigma_n, S_{n+1})$-decompositions. 
In fact, for real polynomials, such decompositions may not exist (see Example \ref{exa:extrasign}). 
Nonetheless, we can prove another, weaker existence result for polynomial decompositions with a blending group action $G$ (Theorem \ref{thm:omegaG-diff}), where `blending' was defined in Definition \ref{def:action} \ref{im:action:iv}. 
In preparation for this result we need the following two lemmas.
The first lemma introduces a ``negative part'' in the symmetric decomposition, which can be omitted if $n$ is even:

\begin{lemma}[Symmetric decompositions for tensors \cite{Co08c}]
\label{lem:symtensor}
Let $T \in \mathbb{R}^d \otimes \cdots \otimes \mathbb{R}^d \cong \mathbb{R}^{(n+1)d}$ be $S_{n+1}$-invariant, i.e.\ for every $i_0, \ldots, i_n \in \{1,\ldots,d\}$ and permutation $\sigma \in S_{n+1}$ we have
$$
T_{i_0, \ldots, i_n} = T_{\sigma(i_0), \ldots, \sigma(i_n)}.
$$ 
Then there exist $r_1, r_2 \in \mathbb{N}$ and $v_1, \ldots, v_{r_1}, v_{r_1+1}, \ldots, v_{r_1+r_2} \in \mathbb{R}^d$ such that
\be \label{eq:minus}
T = \sum_{\ell=1}^{r_1} v_\ell^{\otimes n+1} - \sum_{\ell=r_1+1}^{r_1+r_2} v_\ell^{\otimes n+1}
\ee
If $n$ is even, there  exists a decomposition
$$T = \sum_{\ell=1}^{r_1} v_\ell^{\otimes n+1}.
$$
\end{lemma}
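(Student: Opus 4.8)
The statement is a classical fact about symmetric tensor decompositions over $\mathbb{R}$, so the plan is to prove it by an averaging / polarization argument on the space of symmetric tensors, using that powers $v^{\otimes(n+1)}$ span it. Concretely, I would first identify $\mathbb{R}^d \otimes \cdots \otimes \mathbb{R}^d$ (with $n+1$ factors) with the space of degree-$(n+1)$ forms in $d$ variables after restricting to the $S_{n+1}$-symmetric subspace, so that an $S_{n+1}$-invariant $T$ corresponds to a homogeneous polynomial $F$ of degree $n+1$ in variables $t_1,\dots,t_d$. The rank-one symmetric tensors $v^{\otimes(n+1)}$ correspond to $(n+1)$-th powers of linear forms $\ell_v = \sum_j v_j t_j$. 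Thus the claim reduces to: every homogeneous real polynomial of degree $n+1$ is a real linear combination (with coefficients $\pm1$ after absorbing scalars into $v_\ell$, using that odd roots of negatives exist and even-degree coefficients can be made positive by scaling) of $(n+1)$-th powers of real linear forms, and is a \emph{nonnegative} such combination when $n+1$ is odd.

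\textbf{Key steps.} First I would establish the spanning fact: the set $\{\ell^{\,n+1} : \ell \text{ a linear form}\}$ spans the space of degree-$(n+1)$ homogeneous polynomials. The cleanest route is a polarization identity — e.g. writing a monomial $t_1^{a_1}\cdots t_d^{a_d}$ with $\sum a_j = n+1$ as a finite $\mathbb{Q}$-linear combination of powers $(\epsilon_1 t_1 + \cdots + \epsilon_d t_d)^{n+1}$ over sign or integer choices $\epsilon_j$ (the standard combinatorial polarization formula). This gives $T = \sum_\ell c_\ell\, w_\ell^{\otimes(n+1)}$ with real coefficients $c_\ell$ and real vectors $w_\ell$. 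Second, I would absorb the scalars: set $v_\ell = |c_\ell|^{1/(n+1)} w_\ell$, so $c_\ell w_\ell^{\otimes(n+1)} = \operatorname{sign}(c_\ell)\, v_\ell^{\otimes(n+1)}$, yielding exactly the form \eqref{eq:minus} after grouping the positive-sign and negative-sign terms. Third, for the case $n$ even (so $n+1$ odd): here a negative sign can be pushed into the vector, since $-v_\ell^{\otimes(n+1)} = (-v_\ell)^{\otimes(n+1)}$ because $(-1)^{n+1} = -1$ when $n+1$ is odd; replacing each negative-sign $v_\ell$ by $-v_\ell$ removes the second sum entirely, giving $T = \sum_{\ell} v_\ell^{\otimes(n+1)}$.

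\textbf{Main obstacle.} The only real content is the spanning/polarization step — proving that $(n+1)$-th powers of linear forms span the symmetric subspace. I would prove it by a dimension-free argument: suppose a symmetric tensor (equivalently a degree-$(n+1)$ form) were orthogonal to all $\ell^{\,n+1}$ under the apolarity/natural pairing; then the associated polynomial would vanish identically on all linear forms as a polynomial in the coefficients of $\ell$, forcing it to be zero. Alternatively, one can cite the explicit polarization formula
\[
v_1 \odot \cdots \odot v_{n+1} \;=\; \frac{1}{(n+1)!}\sum_{\emptyset \neq S \subseteq \{1,\dots,n+1\}} (-1)^{\,n+1-|S|}\Bigl(\sum_{j\in S} v_j\Bigr)^{\otimes(n+1)},
\]
which directly exhibits every symmetrized product — and hence, by linearity, every symmetric tensor — as a real combination of rank-one symmetric tensors. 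Everything after that is the bookkeeping in the two steps above, which is routine. Since the lemma is attributed to \cite{Co08c}, it would also be legitimate to give the short proof above and remark that it is classical.
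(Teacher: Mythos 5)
Your proposal is correct, and it is worth noting that it does strictly more than the paper does: the paper gives no proof of the main decomposition at all, simply citing \cite{Co08c}, and only justifies the even-$n$ addendum with the one-line observation that the minus sign can be absorbed because $(-1)^{n+1}=-1$ --- which is exactly your third step. Your self-contained argument is sound: the identification of $S_{n+1}$-invariant tensors with degree-$(n+1)$ forms is standard, and the polarization identity
$$
v_1 \odot \cdots \odot v_{n+1} \;=\; \frac{1}{(n+1)!}\sum_{\emptyset \neq S \subseteq \{1,\dots,n+1\}} (-1)^{\,n+1-|S|}\Bigl(\sum_{j\in S} v_j\Bigr)^{\otimes(n+1)}
$$
(easily checked, e.g.\ at $n+1=2$) applied to symmetrized products of standard basis vectors shows that the rank-one symmetric tensors span the symmetric subspace over $\mathbb{R}$; absorbing $|c_\ell|^{1/(n+1)}$ into $w_\ell$ then yields precisely the form \eqref{eq:minus}, and the sign absorption handles even $n$. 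The apolarity argument you sketch as an alternative for the spanning step is also valid (pairing a form $F$ against $\ell_v^{\,n+1}$ recovers $F(v)$ up to a constant, so orthogonality to all powers forces $F=0$ over an infinite field). The only trade-off is that your route gives no control on $r_1+r_2$ beyond the dimension of the symmetric subspace, but the lemma claims none, so nothing is lost. One cosmetic remark: the invariance condition in the statement should read $T_{i_{\sigma(0)},\ldots,i_{\sigma(n)}}$ (permuting positions, not index values); you have interpreted it in the intended way.
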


The last statement is not given in \cite{Co08c}, but it is obvious, since the minus sign can be absorbed into the odd number of terms $n+1$ (because $(-1)^{n+1} = -1$). 

The minus sign in Equation \eqref{eq:minus} is necessary, for consider the simple case of real matrices, namely when the tensor $T$ lives in the space $\mathbb{R}^d \otimes \mathbb{R}^d \cong \mathcal{M}_d(\mathbb R)$. 
Without a minus sign,  Equation \eqref{eq:minus}  would read 
$$ 
T = \sum_{\ell=1}^{r_1} v_\ell \otimes v_\ell = \sum_{\ell=1}^{r_1} v_\ell v_\ell^{t} \geqslant 0
$$ 
(where we have used that $v \otimes w = v w^t$), implying that every symmetric matrix is positive semidefinite. This is false, so the minus sign is crucial. 
The importance of the minus sign will be illustrated in Example \ref{exa:extrasign}.

The second lemma states the subadditivity and submultiplicativity of the $(\Omega,G)$-rank, and is proven in \cite[Proposition 16]{De19d}.  

\begin{lemma}[Subadditivity and submultiplicativity of ranks \cite{De19d}]
\label{lem:sum}
Let $p_1, p_2 \in \mathcal{P}$.  
\begin{enumerate}[label=(\roman*),leftmargin=*]
	\item $\rank_{(\Omega,G)}(p_1 + p_2) \leq \rank_{(\Omega,G)}(p_1) + \rank_{(\Omega,G)}(p_2)$
	\item $\rank_{(\Omega,G)}(p_1 \cdot p_2) \leq \rank_{(\Omega,G)}(p_1) \cdot \rank_{(\Omega,G)}(p_2)$
\end{enumerate}
\end{lemma}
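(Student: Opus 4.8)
The two statements are standard facts about rank-type quantities attached to bilinear constructions, so the plan is to exhibit, for each part, an explicit decomposition of the composite polynomial built from optimal decompositions of $p_1$ and $p_2$, and then check that this construction satisfies both conditions of \cref{def:omegaG-dec} \ref{def:omegaG-dec:i}, namely the summation-arrangement condition \ref{im:a} and the invariance condition \ref{im:b}. First I would fix optimal $(\Omega,G)$-decompositions of $p_1$ and $p_2$, with index sets $\mathcal{I}_1$ and $\mathcal{I}_2$ of sizes $\textrm{rank}_{(\Omega,G)}(p_1)$ and $\textrm{rank}_{(\Omega,G)}(p_2)$ and local families $\big(p^{[i],1}_{\beta}\big)_{\beta\in\mathcal{I}_1^{\widetilde{\mathcal{F}}_i}}$ and $\big(p^{[i],2}_{\beta}\big)_{\beta\in\mathcal{I}_2^{\widetilde{\mathcal{F}}_i}}$; if either rank is infinite there is nothing to prove.

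For subadditivity, the idea is to take the disjoint union $\mathcal{I} \coloneqq \mathcal{I}_1 \sqcup \mathcal{I}_2$, which has cardinality $|\mathcal{I}_1| + |\mathcal{I}_2|$. A function $\alpha\colon\widetilde{\mathcal{F}}_i\to\mathcal{I}$ either takes all its values in $\mathcal{I}_1$, or all in $\mathcal{I}_2$, or is ``mixed''; I would define $p^{[i]}_{\beta}$ to equal $p^{[i],1}_{\beta}$ when $\beta$ is $\mathcal{I}_1$-valued, $p^{[i],2}_{\beta}$ when $\beta$ is $\mathcal{I}_2$-valued, and $0$ otherwise. Since $\Omega$ is connected, a function $\alpha\in\mathcal{I}^{\widetilde{\mathcal{F}}}$ has all restrictions $\alpha_{|_i}$ entirely $\mathcal{I}_1$-valued (resp.\ $\mathcal{I}_2$-valued) if and only if $\alpha$ itself is, exactly as in the proof of \cref{thm:omega-dec}; hence the terms in $\sum_{\alpha\in\mathcal{I}^{\widetilde{\mathcal{F}}}}$ with a mixed $\alpha$ contribute a zero factor at some site and vanish, and the sum collapses to the $\mathcal{I}_1$-block plus the $\mathcal{I}_2$-block, which is $p_1+p_2$. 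Condition \ref{im:b} is inherited blockwise because $G$ acts on $\mathcal{I}$ respecting the partition: ${}^g\beta$ is $\mathcal{I}_1$-valued iff $\beta$ is, and on each block the equality $p^{[i]}_{\beta}=p^{[gi]}_{{}^g\beta}$ is just the one from the $p_k$-decomposition. This gives a valid $(\Omega,G)$-decomposition of $p_1+p_2$ with index set of size $\textrm{rank}_{(\Omega,G)}(p_1)+\textrm{rank}_{(\Omega,G)}(p_2)$.

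For submultiplicativity, I would instead take the product index set $\mathcal{I} \coloneqq \mathcal{I}_1 \times \mathcal{I}_2$, of cardinality $|\mathcal{I}_1|\cdot|\mathcal{I}_2|$, identifying a function $\beta\colon\widetilde{\mathcal{F}}_i\to\mathcal{I}_1\times\mathcal{I}_2$ with a pair $(\beta^{(1)},\beta^{(2)})$ of functions into $\mathcal{I}_1$ and $\mathcal{I}_2$ via the two projections, and define $p^{[i]}_{\beta} \coloneqq p^{[i],1}_{\beta^{(1)}}\cdot p^{[i],2}_{\beta^{(2)}}$. Distributing the product $p_1\cdot p_2$ over the two sums and regrouping factors site by site shows that $\sum_{\alpha\in\mathcal{I}^{\widetilde{\mathcal{F}}}}p^{[0]}_{\alpha_{|_0}}\cdots p^{[n]}_{\alpha_{|_n}}$ equals $\big(\sum_{\gamma}\prod_i p^{[i],1}_{\gamma_{|_i}}\big)\cdot\big(\sum_{\delta}\prod_i p^{[i],2}_{\delta_{|_i}}\big) = p_1\cdot p_2$, using that a function into $\mathcal{I}_1\times\mathcal{I}_2$ is precisely a pair of functions into $\mathcal{I}_1$ and $\mathcal{I}_2$ and that restriction commutes with the projections. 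Condition \ref{im:b} follows because the $G$-action on $\mathcal{I}_1\times\mathcal{I}_2$ is the diagonal one, so ${}^g\beta$ corresponds to $({}^g\beta^{(1)},{}^g\beta^{(2)})$ and the required equality is the product of the two equalities from the factor decompositions. The main thing to be careful about — the only place where ``routine'' bookkeeping could hide an error — is checking that the combinatorial identification (functions into a product, resp.\ disjoint union, of index sets) is compatible with the restriction maps $\alpha\mapsto\alpha_{|_i}$ and with the $G$-action $\alpha\mapsto{}^g\alpha$, and in the subadditive case that connectedness of $\Omega$ really does force the restrictions of a single $\alpha$ to live in the same block; both are exactly the points already handled in \cref{thm:omega-dec} and \cref{thm:omegaG-dec}, so I would cite those proofs rather than repeat the argument in full.
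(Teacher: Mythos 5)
Your argument is correct, but note that the paper does not actually prove this lemma: it is quoted from \cite{De19d} (Proposition 16 there), so there is no in-paper proof to match. Your constructions are the standard ones and surely the intended ones: the disjoint union $\mathcal{I}_1\sqcup\mathcal{I}_2$ with local polynomials set to zero on mixed restrictions for subadditivity, and the product $\mathcal{I}_1\times\mathcal{I}_2$ with sitewise products of local polynomials for submultiplicativity; both verifications of \cref{def:omegaG-dec} \ref{def:omegaG-dec:i} \ref{im:a} and \ref{im:b} go through as you describe (for \ref{im:b} the relevant point is, as you say, that ${}^g\beta$ has the same image as $\beta$, and that the projections $\beta\mapsto\beta^{(1)},\beta^{(2)}$ commute with $\beta\mapsto{}^g\beta$ --- your phrase ``$G$ acts on $\mathcal{I}$'' is a harmless slip, since $G$ acts on the domain $\widetilde{\mathcal{F}}$, not on the index set). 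One caveat worth flagging: your proof of part (i) genuinely needs $\Omega$ to be connected (to force all restrictions $\alpha_{|_i}$ of a mixed $\alpha$ to hit a site where the local polynomial vanishes), and indeed subadditivity can fail for disconnected $\Omega$ (e.g.\ two isolated vertices, where only product polynomials have finite rank, so $x+y$ has infinite rank while $x$ and $y$ each have rank $1$). The lemma as stated omits this hypothesis, but connectedness is a standing assumption in essentially every other result of the paper, so this is a defect of the statement rather than of your proof. Part (ii) needs no connectedness, consistent with your argument.
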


We are now ready for the existence of  invariant decompositions with blending group actions. 

\begin{theorem}[Invariant decompositions with blending group actions]
\label{thm:omegaG-diff}
Let $\Omega$ be a connected weighted simplicial complex, and $G$ a blending group action on $\Omega$. 
For any $G$-invariant $p \in \mathcal{P}$  there exist two polynomials $q_1, q_2 \in \mathcal{P}$ with $p = q_1 - q_2$, where both have an $(\Omega,G)$-decomposition. If $n$ is even we can  set $q_2 = 0$.
\end{theorem}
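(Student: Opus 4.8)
The idea is to reduce the polynomial statement to the tensor statement of \cref{lem:symtensor} by treating the coefficient tensor of $p$ in a fixed monomial basis, and then lift the resulting symmetric tensor decomposition back to a decomposition of polynomials that is compatible with the weighted simplicial complex structure via \cref{thm:omegaG-dec}. First I would fix $d \coloneqq \deg_{\mathrm{loc}}(p)$, so that $p \in \mathcal{P}_d$, and recall that for each $i$ in a common orbit under $G$ the spaces $\mathbb{R}[\mathbf{x}^{[i]}]_d$ are canonically isomorphic with a fixed basis of monomials, giving $\mathcal{P}_d \cong V^{\otimes(n+1)}$ as a $G$-space where $V$ is the (finite-dimensional) span of these monomials and $G$ permutes the tensor factors. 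A subtlety here: the group action on $[n]$ need not be transitive, so strictly I should group $[n]$ into orbits and apply the argument orbit by orbit; but since the blending hypothesis concerns the action on $[n]$, and on each orbit the restriction of a blending action need not be all of the symmetric group on that orbit, I expect one actually argues as follows instead.

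\textbf{Reduction via blending.} The key structural consequence of blendingness is the one stated in \cref{def:action}~\ref{im:action:iv}: any permutation of $[n]$ realisable by a tuple $(g_0,\dots,g_n)$ of group elements in the sense $\{g_0 0,\dots,g_n n\}=[n]$ is already realised by a single $g \in G$. I would use this to argue that the orbit of $p$ under the "tensor-symmetric" operations one wants to apply is actually contained in the $G$-orbit, so that the symmetrisation step of \cref{lem:symtensor} — which naively would require full $S_{n+1}$-invariance — can be carried out using only elements of $G$. Concretely: start from an elementary-tensor decomposition of the coefficient tensor of $p$ (equivalently, an elementary polynomial decomposition $p = \sum_j p_j^{[0]}\cdots p_j^{[n]}$ as in \eqref{eq:elementaryPoly}), symmetrise it over $G$ as in the proof of \cref{thm:omegaG-dec} to get a $G$-invariant elementary decomposition; then invoke the tensor-level \cref{lem:symtensor} to rewrite the (now suitably symmetric) coefficient tensor as $\sum_{\ell} v_\ell^{\otimes(n+1)} - \sum_\ell v_\ell^{\otimes(n+1)}$, and translate each $v_\ell^{\otimes(n+1)}$ back into an elementary polynomial that is the same local polynomial $q_\ell$ at every site. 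Collecting the positive terms into $q_1$ and the negative into $q_2$ gives $p = q_1 - q_2$ with $q_1, q_2$ written as sums of fully-site-symmetric elementary polynomials. Each of $q_1, q_2$ is then $G$-invariant, and crucially its given decomposition is $G$-symmetric in the strong sense needed, so \cref{thm:omegaG-dec} (applied after, if necessary, passing to a free refinement of $\Omega$ — but note the statement of \cref{thm:omegaG-diff} does not claim the decomposition uses the original $\Omega$... actually it does, so one must be careful) produces $(\Omega,G)$-decompositions of $q_1$ and $q_2$. When $n$ is even, the last clause of \cref{lem:symtensor} removes the negative part, giving $q_2 = 0$.

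\textbf{Main obstacle.} The delicate point is matching the tensor-level symmetry available (full $S_{n+1}$-invariance is what \cref{lem:symtensor} wants) with the group-level symmetry available ($G$-invariance, where $G$ is merely blending, not all of $S_{n+1}$). I expect the resolution is that blendingness is exactly the condition under which the $G$-symmetrised coefficient tensor, while not literally $S_{n+1}$-invariant, can nonetheless be written in the diagonal form $\sum v_\ell^{\otimes(n+1)}(\pm)$ — or more precisely that one applies \cref{lem:symtensor} not to $p$ itself but to an auxiliary fully-symmetric tensor built from $p$ whose diagonal entries recover $p$, and blendingness ensures the off-diagonal data is irrelevant. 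Sorting out this correspondence — in particular verifying that the $v_\ell$'s obtained can be realised as local polynomials in a single variable-collection, and that absorbing signs and $|G|$-th roots (as in the proof of \cref{thm:omegaG-dec}) preserves everything — is the real content; the rest is bookkeeping with the notation of \cref{def:omegaG-dec}. A secondary obstacle is ensuring the final $(\Omega,G)$-decompositions of $q_1,q_2$ live on the \emph{given} $\Omega$ and not a refinement; here one uses that a fully-site-symmetric elementary polynomial $q_\ell(\mathbf{x}^{[0]})\cdots q_\ell(\mathbf{x}^{[n]})$ automatically satisfies \cref{def:omegaG-dec}~\ref{def:omegaG-dec:i}~\ref{im:b} with the constant index assignment, so no refinement of $\Omega$ is needed — the symmetry is inherited "for free" from the diagonal structure rather than from freeness of the action.
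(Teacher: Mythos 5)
There is a genuine gap in your main route. You propose to apply \cref{lem:symtensor} to the coefficient tensor of $p$ itself (after $G$-symmetrisation), but that tensor is only $G$-invariant, and for a blending $G \subsetneq S_{n+1}$ it is in general \emph{not} $S_{n+1}$-invariant, so the lemma simply does not apply to it. Worse, even if it did, a decomposition of the coefficient tensor into terms $v_\ell^{\otimes(n+1)}$ would express $p$ as a sum of elementary polynomials that use the \emph{same} local polynomial at every site, i.e.\ it would force full $S_{n+1}$-symmetry on $p$ --- which a merely $G$-invariant $p$ need not have, and which does not even typecheck when the action is not transitive and sites in different orbits carry different numbers of variables. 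Your own ``main obstacle'' paragraph correctly senses this mismatch, and your fallback remark (apply \cref{lem:symtensor} to an auxiliary fully symmetric tensor whose role is to filter out the right terms, with blendingness making the rest irrelevant) is in fact the paper's actual idea --- but you leave it as a guess, and it is the entire content of the proof.

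What the paper does: apply \cref{lem:symtensor} not to anything built from $p$ but to the \emph{universal} symmetric indicator tensor on $(\mathbb{R}^{n+1})^{\otimes(n+1)}$ whose $(i_0,\ldots,i_n)$ entry is $1$ if $\{i_0,\ldots,i_n\}=[n]$ and $0$ otherwise. This yields weights $d_\ell^{[i]}$ with $\sum_{\ell\le r_1} d_\ell^{[i_0]}\cdots d_\ell^{[i_n]} - \sum_{\ell>r_1} d_\ell^{[i_0]}\cdots d_\ell^{[i_n]} = \mathbbm{1}[\{i_0,\ldots,i_n\}=[n]]$. Starting from an arbitrary decomposition \eqref{eq:elementaryPoly}, one defines local polynomials $p^{[i]}_{\ell,\beta} \coloneqq \sum_{g\in G} d_\ell^{[gi]}\, p_j^{[gi]}(\mathbf{x}^{[i]})$ when $\beta$ is constant equal to $j$ (and $0$ otherwise); these manifestly satisfy the symmetry condition of \cref{def:omegaG-dec}, so each $p_\ell$ has an $(\Omega,G)$-decomposition on the given $\Omega$ with no refinement needed. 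Expanding the products, $q_1-q_2$ becomes a sum over all tuples $(g_0,\ldots,g_n)\in G^{n+1}$ weighted by the indicator that $\{g_00,\ldots,g_nn\}=[n]$, and only here does blendingness enter: it identifies each surviving tuple with a single $g\in G$, so the sum collapses to a positive multiple of $\sum_{g\in G} gp = |G|\cdot p$. Without this filtering-then-collapsing mechanism, the symmetric tensor lemma cannot be brought to bear on a polynomial that is only $G$-invariant.
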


\begin{proof}
We start with a non-invariant decomposition of $p$, as given in Equation \eqref{eq:elementaryPoly}, 
where $\mathcal{I}$ is a finite index set. Now we choose real numbers $d_\ell^{[i]} \in \mathbb{R}$ for $i \in [n]$ and $\ell \in \{1, \ldots, r_1 + r_2\}$, such that the following holds:
$$\sum_{\ell=1}^{r_1} d_\ell^{[i_0]} \cdots d_\ell^{[i_n]} - \sum_{\ell=r_1+1}^{r_2} d_\ell^{[i_0]} \cdots d_\ell^{[i_n]} = \left\{\begin{array}{ll}1 & : \{i_0, \ldots, i_n\} = [n]\\0 & :\textrm{else} \end{array} \right.$$
This is possible since the tensor on the right hand side is real and symmetric, hence the existence follows by Lemma \ref{lem:symtensor}.
For $i \in [n]$, $\ell \in \{1, \ldots, r_1+r_2\}$ and $\beta \in \mathcal{I}^{\widetilde{\mathcal{F}}_i}$ we define
\be \nn p_{\ell, \beta}^{[i]}(\mathbf{x}^{[i]}) \coloneqq \left\{\begin{array}{ll} \sum_{g \in G} d_{\ell}^{[gi]} p_{j}^{[gi]}(\mathbf{x}^{[i]}) &: \beta \textrm{ takes the  constant value  } j \in \mathcal{I}\\[0.2cm] 0 & \textrm{: else} \end{array} \right. .\ee
For fixed $\ell$, the polynomials $p_{\ell, \beta}^{[i]}$ fulfil Definition \ref{def:omegaG-dec} \ref{def:omegaG-dec:i} \ref{im:b} and hence give rise to $(\Omega,G)$-decompositions of polynomials $p_1, \ldots, p_{r_1}, p_{r_1+1}, \ldots, p_{r_1+r_2}$.

We now define $q_1$ as 
\be \nn q_1 &\coloneqq& \sum_{\ell=1}^{r_1} p_{\ell}\ =\ \sum_{\ell=1}^{r_1} \sum_{\alpha \in \mathcal{I}^{\widetilde{\mathcal{F}}}} p_{\ell, \alpha_{\vert_0}}^{[0]}(\mathbf{x}^{[0]}) \cdots p_{\ell, \alpha_{\vert_n}}^{[n]}(\mathbf{x}^{[n]}) \\ \nn &=& \sum_{g_0,\ldots, g_n \in G} \sum_{\ell = 1}^{r_1} d_{\ell}^{[g_0 0]} \cdots d_{\ell}^{[g_n n]} \sum_{j \in \mathcal{I}} p_j^{[g_0 0]}(\mathbf{x}^{[0]}) \cdots p_j^{[g_n n]}(\mathbf{x}^{[n]}) 
 \ee
where we have used that $\Omega$ is connected for the third equality, and thus $\alpha_{\vert_i}$ constant for all $i$ if and only if $\alpha$ is constant. 
Note that $q_1$ has an $(\Omega,G)$-decomposition by Lemma \ref{lem:sum}, since all $p_\ell$ do.
We define $q_2$ similarly as
\be \nn q_2 \coloneqq \sum_{\ell=r_1+1}^{r_2} p_{\ell}. 
\ee

Because of the definition of $d_{\ell}^{[i]}$, and the fact that the action of  $G$ is blending, the difference $q_1 - q_2$ simplifies to
\be \nn q_1 - q_2&=& \sum_{\substack{g_0, \ldots, g_n \in G \\ \{g_0 0, \ldots, g_n n\} = [n]}} \sum_{j \in \mathcal{I}} p_j^{[g_0 0]}(\mathbf{x}^{[0]}) \cdots p_j^{[g_n n]}(\mathbf{x}^{[n]}) \\ \nn &\sim& \sum_{g \in G} \sum_{j \in \mathcal{I}} p_j^{[g 0]}(\mathbf{x}^{[0]}) \cdots p_j^{[g n]}(\mathbf{x}^{[n]}) \\ \nn &=& \vert G\vert \cdot p \ee
where $\sim$ stands for positive multiple of. 
Note that we have used  that $p$ is $G$-invariant in the last equality. 
Dividing by $\vert G\vert$ and the positive scaling factor proves the statement, since the scaling can be absorbed in the local polynomials.

The last statement of the Theorem follows from the statement in Lemma \ref{lem:symtensor} for even $n$.
\end{proof}

\begin{example}[The minus sign in the single and double edge]\label{exa:extrasign}
The  minus sign in the decomposition of Theorem \ref{thm:omegaG-diff} is   necessary (as long as we do not switch to complex coefficients). For example, the polynomial $ p = x^2+y^2$ is $C_2$-invariant, and since $C_2$ is blending on the single edge $\Lambda_1$, 
there exists an $(\Lambda_1, C_2)$-decomposition for $p$  with this additional minus sign (by Theorem \ref{thm:omegaG-diff}): 
$$ 
p= x^2+y^2 = p_1(x) \cdot p_1(y) - p_2(x) \cdot p_2(y)
$$
where
$$
p_1(t) = \frac{1}{\sqrt{2}}(1+t^2)\ \mbox{ and } \ p_2(t) = \frac{1}{\sqrt{2}}(1-t^2).
$$
But  for degree reasons there cannot exist an actual $(\Lambda_1, C_2)$-decomposition for $p$, i.e.\ an invariant decomposition without the additional minus sign.

On the other hand, the refinement of $\Lambda_1$ to the double edge $\Delta$ allows for  a free group action of $C_2$. Hence  there exists a $(\Delta,C_2)$-decomposition of $p$ (by Theorem \ref{thm:omegaG-dec}), given for example by
$$ 
x^2+y^2 = \sum_{\alpha, \beta = 1}^{2} p_{\alpha, \beta}\left(x\right) \cdot p_{\beta, \alpha}\left(y\right)$$
where $p_{1,1}(t) = 0, p_{1,2}(t) = t^2, p_{2,1}(t) = 1$ and $p_{2,2}(t) = 0.$ This shows that ${\rm rank}_{(\Delta,C_2)}(p)=2.$\demo
\end{example}

\begin{example}[Fully symmetric polynomials]
Since the action of the full permutation group is blending, every fully symmetric polynomial $p \in \mathbb{R}[x_0, x_1, \ldots, x_n]$ can be written as a difference of two polynomials with $(\Sigma_n,S_{n+1})$-decompositions, i.e.\ 
$$p= \sum_{\ell=1}^{r_1} p_{\ell}(x_0) \cdots p_{\ell}(x_n) - \sum_{\ell=r_1 + 1}^{r_1 + r_2} p_{\ell}(x_0) \cdots p_{\ell}(x_n).
$$
Since $p_{\ell}$ is a univariate polynomial, it is given by a vector of coefficients $(c_{\ell,k})_{k=1}^{d}$, namely
$$p_{\ell}(t) = \sum_{k=0}^{d} c_{\ell,k} t^k.$$
This leads to a decomposition into monomial symmetric polynomials $\mathfrak{m}_{\alpha}(x_0, \dots, x_n)$ with $\alpha \in \mathbb{N}^{n+1}$, which is defined as the sum over all monomials $x_0^{\beta_0} \cdot x_1^{\beta_1} \cdots x_n^{\beta_n}$ where $\beta$ ranges over all distinct permutations of $(\alpha_0, \ldots, \alpha_n)$.
Spelling out the  $(\Sigma_n, S_{n+1})$-decompositions we obtain the following the decomposition into monomial symmetric polynomials: 
$$ p = \sum_{0 \leq \alpha_0 \leq \alpha_1 \ldots \leq \alpha_n \leq d}  \left(\sum_{\ell=1}^{r_1} c_{\ell,\alpha_0}\cdots c_{\ell,\alpha_n}   - \sum_{\ell=r_1+1}^{r_1+r_2}   c_{\ell,\alpha_0}\cdots c_{\ell,\alpha_n}   \right)\mathfrak{m}_{(\alpha_0, \ldots, \alpha_n)}.$$

Conversely, given $p$ as a linear combination of monomial symmetric polynomials
$$p = \sum_{0 \leq \alpha_0 \leq \alpha_1 \leq \ldots, \leq \alpha_n \leq d} D_{\alpha_0, \ldots, \alpha_n} \mathfrak{m}_{(\alpha_0, \ldots, \alpha_n)}$$
we obtain the $(\Sigma_n, S_{n+1})$-decompositions by means of a symmetric tensor decomposition of the symmetrically completed tensor $D$.
\demo
\end{example}

\subsection{The invariant separable decomposition}
\label{ssec:sepdec}

In this section we assume that every local space of polynomials is equipped with a convex cone $\mathcal{C}^{[i]} \subseteq \mathbb{R}[\mathbf{x}^{[i]}]$, i.e.\ a set which fulfills $\alpha p + \beta q \in \mathcal{C}$ for all $p,q \in \mathcal{C}$ and $\alpha, \beta \geq 0$.
Important examples of such cones are the cone of sum-of-squares (sos) polynomials
$$ \mathcal{C}_{\mathrm{sos}} \coloneqq \left\{p \in \mathbb{R}[\mathbf{x}]: p = \sum_{k=1}^N q_k^2 \textrm{ for some } q_k \in \mathbb{R}[\mathbf{x}], N \in \mathbb{N}\right\},$$
the cone of nonnegative polynomials
$$\mathcal{C}_{\mathrm{nn}} \coloneqq \left\{p \in \mathbb{R}[\mathbf{x}]: p(a) \geq 0 \textrm{ for all } a \in \mathbb{R}^{m}\right\},$$
and the cone of polynomials with nonnegative coefficients
$$\mathcal{C}_{\mathrm{nn-coeff}} \coloneqq \left\{p \in \mathbb{R}[\mathbf{x}]: p = \sum_{\alpha_1, \ldots, \alpha_m = 1}^{d} c_{\alpha_1, \ldots, \alpha_m} x_1^{\alpha_1} \cdots  x_m^{\alpha_m} \textrm{ with }  \textrm{ all } c_{\alpha_1, \ldots, \alpha_m} \geq 0 \right\}.$$
For a given set of local cones $\mathcal{C}^{[0]}, \ldots, \mathcal{C}^{[n]}$ we define the global separable cone
\be \nn \mathcal{C}_{\mathrm{sep}} &\coloneqq& \mathcal{C}^{[0]} \otimes  \mathcal{C}^{[1]} \otimes \cdots \otimes  \mathcal{C}^{[n]}\\ \nn &\coloneqq& \left\{\sum_{j=1}^{r} p_{j}^{[0]} \cdots p_{j}^{[n]}: r \in \mathbb{N}, p_{j}^{[i]} \in \mathcal{C}^{[i]} \right\} \subseteq \mathcal{P}. \ee
This is the smallest global convex cone generated by the elementary tensors formed from the local cones.
For a given group action of $G$ on $\Omega$, we further assume that $\mathcal{C}^{[i]} = \mathcal{C}^{[gi]}$ for all $g \in G$ (again we suppress the canonical isomorphism between the local polynomial spaces in the notation).

We now define and study the invariant separable decomposition of polynomials, 
i.e.\ decompositions which are inherently $G$-invariant, 
and where the containment in the separable cone is explicit---i.e.\ a   positive combination of elementary polynomials where each factor is in the local cone.

\begin{definition}[Invariant separable decomposition]\label{def:sepdec} 
Let $p \in \mathcal{C}_{\mathrm{sep}}$. 
\begin{enumerate}[label=(\roman*),ref=(\roman*),leftmargin=*]
\item 
A \emph{separable $(\Omega,G)$-decomposition} of $p$ is an $(\Omega,G)$-decomposition 
$$
\mathcal{P}^{[i]} \coloneqq \left(p_{\beta}^{[i]}\right)_{\beta \in \mathcal{I}^{\widetilde{\mathcal{F}}_i}}$$ with the additional restriction that 
$$p_{\beta}^{[i]} \in \mathcal{C}^{[i]}$$
for all $i \in [n]$ and $\beta \in \mathcal{I}^{\widetilde{\mathcal{F}}_i}$.

\item  
The minimal cardinality of $\mathcal{I}$ among all separable $(\Omega,G)$-decomposition of $p$ is called the \emph{separable $(\Omega,G)$-rank of $p$}, denoted $ \seprank_{(\Omega,G)}(p).$
If $p$ does not admit an $(\Omega,G)$-decomposition, we set $\seprank_{(\Omega,G)}(p) =  \infty$.

\item  If $G$ is the trivial group action, 
we call the separable $(\Omega,G)$-decomposition just \emph{separable $\Omega$-decomposition}, 
and its minimal number terms the \emph{separable rank}, denoted $\seprank_{\Omega}$.  
\end{enumerate}
\end{definition}

We now show the existence of invariant separable decompositions with free group actions.  
This follows from Theorem \ref{thm:omegaG-dec}, 
since it can be constructed via positive multiples of the initial decomposition.

\begin{theorem}[Invariant separable decompositions with free group actions]\label{thm:sep}
Let $\Omega$ be a connected weighted simplicial complex with a free action from the group $G$. 
Every  $G$-invariant $p \in \mathcal{C}_{\mathrm{sep}}$ admits  a separable $(\Omega,G)$-decomposition.
\end{theorem}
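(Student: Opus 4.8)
The plan is to mimic the proof of \cref{thm:omegaG-dec}, but start from a separable decomposition rather than an arbitrary one, and then check that every operation performed in that proof preserves membership in the local cones $\mathcal{C}^{[i]}$. Concretely, since $p \in \mathcal{C}_{\mathrm{sep}}$, we may write $p = \sum_{j=1}^{r} p_j^{[0]}(\mathbf{x}^{[0]}) \cdots p_j^{[n]}(\mathbf{x}^{[n]})$ with each $p_j^{[i]} \in \mathcal{C}^{[i]}$; this is a decomposition of the form \eqref{eq:elementaryPoly} with the extra property that all local factors are cone elements.

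Next I would run the construction of \cref{thm:omegaG-dec} verbatim on this decomposition. The first step there is to invoke \cref{thm:omega-dec} to turn \eqref{eq:elementaryPoly} into an $\Omega$-decomposition; crucially, \cref{thm:omega-dec} only ``reuses'' the original local polynomials $p_j^{[i]}$ (possibly setting some to zero), and $0 \in \mathcal{C}^{[i]}$, so the resulting $\Omega$-decomposition is already separable. Then the freeness argument fixes a $G$-linear map $\mathbf{z}\colon\widetilde{\mathcal{F}}\to G$, passes to the index set $\hat{\mathcal I} = \mathcal I \times G$, and defines the new local polynomials $p_\beta^{[i]}$ to be either $0$ or a copy $q^{[gi]}_{{}^g(\pi_1\circ\beta)}(\mathbf{x}^{[i]})$ of one of the cone elements, transported along the canonical isomorphism $\mathbb{R}[\mathbf{x}^{[gi]}]\cong\mathbb{R}[\mathbf{x}^{[i]}]$. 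Here I would use the standing hypothesis $\mathcal{C}^{[i]} = \mathcal{C}^{[gi]}$ (together with the fact that the canonical variable-renaming isomorphism carries $\mathcal{C}^{[gi]}$ onto $\mathcal{C}^{[i]}$) to conclude that each such copy lies in $\mathcal{C}^{[i]}$; hence so does every $p_\beta^{[i]}$.

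The only remaining point is the final normalization: in \cref{thm:omegaG-dec} one divides by $|G|$ and absorbs the positive $(n+1)$-th root $|G|^{-1/(n+1)}$ into the local polynomials. Since $\mathcal{C}^{[i]}$ is a convex cone it is in particular closed under multiplication by nonnegative scalars, so scaling each $p_\beta^{[i]}$ by $|G|^{-1/(n+1)}$ keeps it in $\mathcal{C}^{[i]}$. This yields an $(\Omega,G)$-decomposition of $p$ all of whose local polynomials lie in the appropriate local cone, i.e.\ a separable $(\Omega,G)$-decomposition of $p$, which is exactly what the theorem asserts (indeed the last sentence of \cref{thm:omegaG-dec} already records that the decomposition uses only nonnegative multiples of the $p_j^{[i]}$, so the present statement is essentially a corollary once one starts from a separable decomposition).

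I do not expect any serious obstacle: the proof is a bookkeeping check that each step of the earlier construction respects the cones. The one place warranting a sentence of care is making explicit that the action satisfies $\mathcal{C}^{[i]} = \mathcal{C}^{[gi]}$ and that the implicit isomorphism $\mathbb{R}[\mathbf{x}^{[i]}]\cong\mathbb{R}[\mathbf{x}^{[gi]}]$ identifies these cones, so that ``reusing a $q^{[gi]}_\bullet$ at site $i$'' genuinely lands in $\mathcal{C}^{[i]}$; everything else (zeros, nonnegative rescalings, sums over $G$) is immediate from convex-cone closure properties. So the write-up can be short: state the separable starting decomposition, say ``apply the construction of \cref{thm:omegaG-dec}'', and observe that each operation preserves the local cones.
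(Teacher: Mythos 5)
Your proposal is correct and follows essentially the same route as the paper: start from a separable decomposition of the form \eqref{eq:elementaryPoly} with $p_j^{[i]}\in\mathcal{C}^{[i]}$, run the construction of \cref{thm:omegaG-dec}, and observe that every local polynomial produced is a nonnegative multiple of some $p_j^{[gi]}$, hence lies in $\mathcal{C}^{[i]}$ because the local cones coincide along $G$-orbits. Your write-up simply spells out in more detail the cone-preservation checks that the paper's proof states in one sentence.
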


\begin{proof}
Let $p$ be decomposed as in Equation \eqref{eq:elementaryPoly} 
with $p_j^{[i]} \in \mathcal{C}^{[i]}$, which is a separable decomposition of $p$. 
Applying the construction of  the proof of Theorem \ref{thm:omegaG-dec} we obtain a separable $(\Omega,G)$-decomposition, since all local polynomials $p_{\beta}^{[i]}$ are positive multiples of $p_{j}^{[gi]}$ for $g \in G$. Since the local cones coincide on the orbits of $G$, this guarantees that $p_{\beta}^{[i]} \in \mathcal{C}^{[i]}$.
\end{proof}

\begin{example}[Invariant separable decomposition on the double edge]
The $(\Delta,C_2)$-decomposition of $p=x^2+y^2$ given in Example \ref{exa:extrasign} is in fact an invariant separable decomposition with respect to the local sos cones, proving that $\seprank_{(\Delta,C_2)}(p)=\rank_{(\Delta,C_2)}(p)=2.$\demo
\end{example}

We can now easily promote the results of Corollary \ref{cor:ranks} to the (invariant) separable ranks. The proof is analogous. 

\begin{corollary}[Relation between separable ranks]
\label{cor:sep-ranks}
Let $\Omega$ be connected and $G$ a free group action on $\Omega$. Then for every $G$-invariant $p \in \mathcal{P}$  we have 
$$ \seprank_{(\Omega,G)}(p) \leq \vert G\vert \cdot \seprank_{\Omega}(p) \leq \vert G\vert \cdot \seprank_{\Sigma_n}(p).$$
\end{corollary}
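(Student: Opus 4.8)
The plan is to mimic the proof of \cref{cor:ranks} verbatim, replacing ordinary $(\Omega,G)$-decompositions by separable ones. First I would establish the second inequality, $\textrm{sep-rank}_{\Omega}(p) \leq \textrm{sep-rank}_{\Sigma_n}(p)$: given an optimal separable $\Sigma_n$-decomposition $p = \sum_{j=1}^{r} p_j^{[0]}(\mathbf{x}^{[0]}) \cdots p_j^{[n]}(\mathbf{x}^{[n]})$ with $r = \textrm{sep-rank}_{\Sigma_n}(p)$ and each $p_j^{[i]} \in \mathcal{C}^{[i]}$, I apply the construction in the proof of \cref{thm:omega-dec} to obtain an $\Omega$-decomposition that reuses exactly the local polynomials $p_j^{[i]}$ (padding with zeros on non-constant multi-indices). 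Since $0 \in \mathcal{C}^{[i]}$ and the reused polynomials already lie in $\mathcal{C}^{[i]}$, this is a separable $\Omega$-decomposition with index set of cardinality $r$, giving the bound.

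Next I would establish the first inequality, $\textrm{sep-rank}_{(\Omega,G)}(p) \leq |G| \cdot \textrm{sep-rank}_{\Omega}(p)$: starting from an optimal separable $\Omega$-decomposition of the $G$-invariant polynomial $p$ with $r = \textrm{sep-rank}_{\Omega}(p)$ terms, I apply the construction in the proof of \cref{thm:omegaG-dec}. That construction produces an $(\Omega,G)$-decomposition over the index set $\hat{\mathcal{I}} = \mathcal{I} \times G$, whose local polynomials are, by the final sentence of \cref{thm:omegaG-dec}, nonnegative multiples of the original local polynomials $q_\beta^{[gi]}$ (or zero). Since each $q_\beta^{[gi]} \in \mathcal{C}^{[gi]} = \mathcal{C}^{[i]}$ (using the standing hypothesis $\mathcal{C}^{[i]} = \mathcal{C}^{[gi]}$) and cones are closed under nonnegative scaling and contain $0$, every resulting local polynomial lies in $\mathcal{C}^{[i]}$. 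Hence this is a separable $(\Omega,G)$-decomposition with $|\hat{\mathcal{I}}| = |G| \cdot r$ terms, which yields the claimed inequality. Composing the two bounds finishes the proof.

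There is essentially no obstacle here — the entire content was already extracted in \cref{thm:sep}, whose proof observes precisely that the construction from \cref{thm:omegaG-dec} preserves membership in the local cones; all that remains is bookkeeping on the cardinalities of the index sets. The only point requiring a word of care is confirming that the two reusing constructions (from \cref{thm:omega-dec} and \cref{thm:omegaG-dec}) never enlarge the local polynomials outside their cones, which holds because both only ever output nonnegative scalar multiples of the input local polynomials or the zero polynomial, all of which are in $\mathcal{C}^{[i]}$ by convexity (equivalently, closure under nonnegative scaling together with $0 \in \mathcal{C}^{[i]}$). Accordingly the proof is just the one-line remark "The proof is analogous" spelled out, exactly as the text announces.

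\begin{proof}
The argument is identical to that of \cref{cor:ranks}, observing in addition that both constructions used there preserve membership in the local cones.

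For the second inequality, take a separable $\Sigma_n$-decomposition $p = \sum_{j=1}^{r} p_j^{[0]}(\mathbf{x}^{[0]}) \cdots p_j^{[n]}(\mathbf{x}^{[n]})$ with $r = \textrm{sep-rank}_{\Sigma_n}(p)$ and $p_j^{[i]} \in \mathcal{C}^{[i]}$. Applying the construction in the proof of \cref{thm:omega-dec} yields an $\Omega$-decomposition whose local polynomials are among the $p_j^{[i]}$ or the zero polynomial; since $0 \in \mathcal{C}^{[i]}$, this is a separable $\Omega$-decomposition with index set of size $r$, so $\textrm{sep-rank}_{\Omega}(p) \leq r$.

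For the first inequality, take a separable $\Omega$-decomposition of the $G$-invariant polynomial $p$ with $r = \textrm{sep-rank}_{\Omega}(p)$ terms. Applying the construction in the proof of \cref{thm:omegaG-dec}, and using its last statement, we obtain an $(\Omega,G)$-decomposition over the index set $\mathcal{I} \times G$ whose local polynomials are nonnegative multiples of the original ones or zero. Since $\mathcal{C}^{[i]} = \mathcal{C}^{[gi]}$ and cones are closed under nonnegative scaling and contain $0$, all these local polynomials lie in $\mathcal{C}^{[i]}$, so this is a separable $(\Omega,G)$-decomposition with $|G| \cdot r$ terms, whence $\textrm{sep-rank}_{(\Omega,G)}(p) \leq |G| \cdot r$.

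Combining the two inequalities proves the claim.
\end{proof}
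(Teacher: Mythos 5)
Your proof is correct and follows exactly the paper's intended route: the paper simply states that the argument is analogous to that of \cref{cor:ranks}, i.e.\ the two constructions from \cref{thm:omega-dec} and \cref{thm:omegaG-dec} preserve membership in the local cones (as already noted in the proof of \cref{thm:sep}), and one reads off the cardinalities of the index sets. You have spelled out precisely this bookkeeping, so there is nothing to add.
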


An analogue of Theorem \ref{thm:sep} for blending group actions is not true! 
One reason is that, if the action is blending, we cannot construct a decomposition using the local polynomials from the initial tensor  decomposition. 
This is visible already in the simplest case, namely for $(\Lambda_1, C_2)$-decompositions, as illustrated in Example \ref{exa:extrasign}. 
Another reason is that Theorem \ref{thm:omegaG-diff} (with blending group actions) uses a \emph{difference} of two $(\Omega,G)$-decompositions, and a difference of  separable elements is in general not separable.

Finally we show that the global cone of sos polynomials $\mathcal{C}_{\mathrm{sos}}$ is strictly larger than the cone of separable polynomials over local sos polynomials $\mathcal{C}_{\mathrm{sep}} =\mathcal{C}^{[0]}_{\mathrm{sos}} \otimes \cdots \otimes \mathcal{C}^{[n]}_{\mathrm{sos}}$. 
In other words, there exist polynomials which admit a sos decomposition over all variables, 
but cannot be written as tensor decomposition where every term is a sos polynomial. 
This is even true for polynomials in two variables $x$ and $y$, as the following example shows. 
The example relies on the \emph{Gram map}, which will be the cornerstone of invariant sos decompositions  (Section \ref{ssec:sosdec}). 

\begin{example}[Sos polynomials which are not separable]
\label{ex:sepPoly}
We consider the following  Gram map $\mathcal{G}$ between real-valued matrices $M \in \mathcal{M}_{2} \otimes \mathcal{M}_{2}$ and polynomials $p \in \mathbb{R}[x,y]$:  
$$\mathcal{G}:M \mapsto p \coloneqq \mathfrak{m}_1(x)^t \otimes \mathfrak{m}_1(y)^t \cdot M \cdot \mathfrak{m}_1(x) \otimes \mathfrak{m}_1(y)
$$  
where $\mathfrak{m}_1(x) \coloneqq (1, x)^t$ is the monomial basis in $x$ of degree at most $1$.

It is well-known (and easy to see) that for  $\deg_{\mathrm{loc}}(p) \leq 2$ we have $p \in \mathcal{C}_{\mathrm{sos}}$  if and only if there exists a positive semidefinite $M \in \mathcal{M}_{2} \otimes \mathcal{M}_{2}$ with  $\mathcal{G}(M) =p$.
Further, $p \in \mathcal{C}_{\mathrm{sep}}$ if and only if there exists an $M \in \mathcal{M}_{2} \otimes \mathcal{M}_{2}$ such that 
$$M = \sum_{j=1}^{r} M^{[0]}_j \otimes M^{[1]}_j$$ where all $M^{[i]}_j$ are positive semidefinite and $\mathcal{G}(M) =p$. 

For example, consider the matrix
$$ 
M = \sum_{ij=1}^{2} E_{ij} \otimes E_{ij} = b \cdot b^t = \left(\begin{array}{cccc} 1 & 0 & 0 & 1\\ 0 & 0 & 0 & 0\\ 0 & 0 & 0 &0 \\ 1 & 0 & 0 & 1 \end{array}\right)
$$
where $b= \left(e_1 \otimes e_1 + e_2 \otimes e_2 \right) \in \mathbb{R}^2 \otimes \mathbb{R}^2$ is known in the quantum information community as an (unnormalized) Bell state. 
Note that $M$ is positive semidefinite but not separable, which can easily be seen with the celebrated positive partial transposition criterion \cite{Pe96}. Furthermore, $M$ is the \emph{only} positive semidefinite matrix representing the polynomial
$$
p = 1 + 2xy + x^2 y^2 = (1+xy)^2 = \mathcal{G}(M),
$$
since  the matrix 
$$
M_{\alpha} = \left(\begin{array}{cccc} 1 & 0 & 0 & 1-\alpha \\ 0 & 0 & \alpha & 0\\ 0 & \alpha & 0 &0 \\ 1-\alpha & 0 & 0 & 1 \end{array}\right)
$$
is not positive semidefinite for any $\alpha \in \mathbb{R} \setminus \{0 \}$, 
 and $\mathcal{G}^{-1}(\{p\}) = \{M_{\alpha}: \alpha \in \mathbb{R}\}$.
This implies that $p = (1+xy)^2$ is sos but not separable with respect to the local sos cones. 

More generally, in order to show that a polynomial is sos but not separable, one needs to show that every positive semidefinite matrix $M$ with $\mathcal{G}(M) = p$ is not separable. This is generally a hard problem. 
\demo
\end{example}

\subsection{The invariant sum-of-squares decomposition}
\label{ssec:sosdec}

In this section we introduce a sum-of-squares (sos) decomposition in the $(\Omega,G)$-framework. 
To start off, notice that not every $G$-invariant sos polynomial $p$ can be decomposed into $G$-invariant polynomials $q_k$ via $p = \sum_{k=1}^{N} q_k^2,$
as the following example shows.

\begin{example}[Non-existence of stringent invariant sos decomposition]
Consider again $p = x^2+y^2$, which is obviously sos and  $C_2$-invariant, i.e.\ invariant with respect to permuting $x$ and $y$. Yet, there does \emph{not} exist a decomposition 
$$p = \sum_{k=1}^{N} q_k^2 \qquad \textrm{where all $q_k$ are $C_2$-invariant}.
$$
To see this, assume the contrary. Since $\deg(q_k) \leq \frac{1}{2} \deg(p)$, each polynomial can be written as $q_k = a_k x + a_k y + b_k$. Further, since $p$ has no constant term, we must have  $b_k = 0$. But this is impossible, since the $xy$ coefficient of $p$ is zero.
\demo\end{example}

We call the previous definition of a `stringent' invariant sos decomposition, and now introduce a more `relaxed' one, which allows for permutations among elements of the family $\{q_k\}$, and which is the correct notion as far the existence results are concerned, as we will later show.  
So let $G$ act on $[n]$, and equip the finite index set  $\mathcal{S} =  \mathcal{S}_0 \times \ldots \times \mathcal{S}_n$ with the induced group action 
$$
g \mathbf{k} \coloneqq (k_{g^{-1} 0}, \ldots, k_{g^{-1} n})
$$ 
for every $\mathbf{k} =(k_{0} \ldots, k_{n}) \in \mathcal{S}$ and $g \in G$.
We say that the family of polynomials $\mathfrak{q} = (q_\mathbf{k})_{\mathbf{k} \in \mathcal{S}}$ is \emph{$G$-invariant} if 
$$
q_{g\mathbf k}=g q_{\mathbf k}
$$
for all $g \in G$ and $\mathbf{k}\in\mathcal S$. 
This equation can be spelled out as 
$
q_{g\mathbf{k}}(\mathbf{x}^{[0]}, \ldots, \mathbf{x}^{[n]}) = q_{\mathbf{k}}(\mathbf{x}^{[g0]}, \ldots, \mathbf{x}^{[gn]}).
$ 

Now, if $\mathfrak{q}$ is $G$-invariant, the resulting sos polynomial
$$
p = \sum_{\mathbf{k} \in \mathcal{S}} q_{\mathbf{k}}^2
$$ 
is
also $G$-invariant (since  $\mathbf{k} \mapsto g\mathbf{k}$ is a bijection on $\mathcal{S}$).  
In Theorem \ref{thm:sos-dec} \ref{thm:sos-dec:i}, we will prove the reverse direction, namely that 
every $G$-invariant sos polynomial $p$ has a $G$-invariant family  of polynomials $\mathfrak{q}$. 

To prove this result, we leverage a correspondence between matrices and polynomials given by the \emph{Gram map} $\mathcal{G}$  (similarly to Example \ref{ex:sepPoly}). For simplicity, we assume for the rest of this section that every local polynomial space uses the same number of variables, i.e.\
$$
\mathcal{P} = \mathbb{R}[\mathbf{x}^{[0]}] \otimes \cdots \otimes \mathbb{R}[\mathbf{x}^{[n]}]
$$
where $\mathbf{x}^{[i]} = (x^{[i]}_1, \ldots, x^{[i]}_m)$ for each $i \in [n]$.
Now consider a polynomial $p \in \mathcal{P}$ with $\deg_{\mathrm{loc}}(p) \leq 2d$. 
We can represent $p$ via the Gram map 
\begin{align*}
\mathcal{G}:   \mathcal{M}_D^{\otimes n+1} &\to \mathcal{P} \\
M & \mapsto \mathfrak{m}_{n,d}^t M \mathfrak{m}_{n,d}
\end{align*}
where 
$\mathfrak{m}_{n,d} = \mathfrak{m}_{d}(\mathbf{x}^{[0]}) \otimes \cdots \otimes \mathfrak{m}_{d}(\mathbf{x}^{[n]}) $
and we define $\mathfrak{m}_{d}(\mathbf{x})$ to be the monomial basis in $\mathbf{x}$ consisting of all monomials of degree at most $d$. 
In addition, $\mathcal{M}_{D}$ is the space of real matrices of size $D\times D$, where 
$D= \binom{m+d}{d}$. Note that $D$ is also the number of monomials in $m$ variables of degree at most $d$. 
We say that the matrix 
$M =\sum_{j=1}^{N} M^{[0]}_j \otimes \cdots \otimes M^{[n]}_j $ is  $G$-invariant  if 
$$
gM \coloneqq \sum_{j=1}^{N} M^{[g^{-1}0]}_j \otimes \cdots \otimes M^{[g^{-1}n]}_j =  M
$$
for every $g \in G$, that is, if $M$  is invariant with respect to all permutations of the tensor factors induced by the group action of $G$ on $[n]$.

\begin{lemma}[Gram matrix of invariant sos polynomials]
\label{lem:gram}
Let $p \in \mathcal{P}$ with $\deg_{\mathrm{loc}}(p) \leq 2d$. The following are equivalent:
\begin{enumerate}[label=(\roman*),ref=(\roman*),leftmargin=*] 
	\item \label{im:gram:i}
	$p$ is sos and $G$-invariant. 
	\item \label{im:gram:ii}
	There exists an $M \in  \mathcal{M}_D^{\otimes n+1}$  which is positive semidefinite and $G$-invariant such that $\mathcal{G}(M) = p.$
\end{enumerate}
\end{lemma}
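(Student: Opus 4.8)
The plan is to establish both directions via the standard Gram matrix / sum-of-squares correspondence, and then track the $G$-action through it. For the direction \ref{im:gram:ii} $\Rightarrow$ \ref{im:gram:i}: given a positive semidefinite $M \in \mathcal{M}_D^{\otimes n+1}$, write $M = \sum_{k} w_k w_k^t$ with $w_k \in \mathbb{R}^{D^{n+1}}$, so that $\mathcal{G}(M) = \sum_k (w_k^t \mathfrak{m}_{n,d})^2$, which is manifestly sos. For $G$-invariance of $p = \mathcal{G}(M)$, I would observe that the monomial vector $\mathfrak{m}_{n,d} = \mathfrak{m}_d(\mathbf{x}^{[0]}) \otimes \cdots \otimes \mathfrak{m}_d(\mathbf{x}^{[n]})$ transforms under the variable permutation $g$ exactly by the tensor-factor permutation that defines the action on $\mathcal{M}_D^{\otimes n+1}$; concretely, replacing $\mathbf{x}^{[i]}$ by $\mathbf{x}^{[gi]}$ permutes the tensor slots of $\mathfrak{m}_{n,d}$ by $g^{-1}$. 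Hence $g \cdot \mathcal{G}(M) = \mathcal{G}(gM) = \mathcal{G}(M)$, so $p$ is $G$-invariant.

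For the harder direction \ref{im:gram:i} $\Rightarrow$ \ref{im:gram:ii}: since $p$ is sos and $\deg_{\mathrm{loc}}(p) \leq 2d$, the classical Gram representation gives \emph{some} positive semidefinite $M_0 \in \mathcal{M}_D^{\otimes n+1}$ with $\mathcal{G}(M_0) = p$ (here one needs that the local degree bound forces the relevant monomials to lie in the tensor product of the degree-$d$ monomial bases, so $D = \binom{m+d}{d}$ suffices; this is the ``well-known'' fact already invoked in \cref{ex:sepPoly}). Now symmetrize: set
$$
M \coloneqq \frac{1}{|G|} \sum_{g \in G} g M_0.
$$
Each $g M_0$ is again positive semidefinite (the $G$-action on $\mathcal{M}_D^{\otimes n+1}$ is by a permutation of tensor factors, which is conjugation by a permutation matrix on $\mathbb{R}^{D^{n+1}}$ and hence preserves positive semidefiniteness), so $M$ is positive semidefinite as an average of such matrices. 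It is $G$-invariant by construction. Finally, using $\mathcal{G}(g M_0) = g \cdot \mathcal{G}(M_0) = g \cdot p = p$ (the last equality by $G$-invariance of $p$) and linearity of $\mathcal{G}$, we get $\mathcal{G}(M) = p$.

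The main point requiring care — and the step I expect to be the real obstacle — is the equivariance identity $\mathcal{G}(gM) = g \cdot \mathcal{G}(M)$, i.e.\ verifying that the $G$-action on matrices (permutation of tensor factors of $\mathcal{M}_D^{\otimes n+1}$) is intertwined by $\mathcal{G}$ with the $G$-action on $\mathcal{P}$ (permutation of the variable blocks $\mathbf{x}^{[i]}$). This hinges on the bookkeeping fact that the monomial vector $\mathfrak{m}_{n,d}$ is itself a pure tensor whose factors are indexed by sites, so that substituting $\mathbf{x}^{[gi]}$ for $\mathbf{x}^{[i]}$ in $\mathfrak{m}_{n,d}^t M \mathfrak{m}_{n,d}$ is the same as leaving the monomials fixed and permuting the tensor slots of $M$ by $g^{-1}$ on each side — and this relies crucially on all local spaces having the same number of variables $m$ (so that all the factors $\mathfrak{m}_d(\mathbf{x}^{[i]})$ are literally ``the same'' basis in renamed variables), which is precisely the simplifying assumption made at the start of this subsection. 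Everything else (preservation of positive semidefiniteness under the tensor-slot permutation, linearity of $\mathcal{G}$, existence of the initial Gram matrix) is routine.
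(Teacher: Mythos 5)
Your proposal is correct and follows essentially the same route as the paper's proof: a rank (Cholesky-type) decomposition of $M$ plus the equivariance $\mathcal{G}(gM) = g\cdot\mathcal{G}(M)$ for \ref{im:gram:ii} $\Rightarrow$ \ref{im:gram:i}, and constructing an initial Gram matrix from the sos decomposition followed by averaging over $G$ for \ref{im:gram:i} $\Rightarrow$ \ref{im:gram:ii}. The only cosmetic difference is that you isolate and emphasize the equivariance identity as the key bookkeeping step, which the paper handles more implicitly via ``commutativity of polynomial multiplication'' inside the first direction.
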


\begin{proof} \ref{im:gram:ii} $\Rightarrow$ \ref{im:gram:i}.  If there exists such an $M$, since it is positive semidefinite, it has a rank decomposition $M = \sum_{k} v_k v_k^t$ 
where $v_k \in \left(\mathbb{R}^{D}\right)^{\otimes n+1}$. 
This   gives rise to a sos decomposition of $p$ via $\mathcal{G}$. Furthermore, since $gM = M$ for all $g \in G$, we obtain
\begin{align*}  gp&= p(\mathbf{x}^{[g0]}, \ldots, \mathbf{x}^{[gn]}) \\ &= \nn \mathfrak{m}_{d}(\mathbf{x}^{[g0]})^t \otimes \cdots \otimes \mathfrak{m}_{d}(\mathbf{x}^{[gn]})^t \cdot M \cdot \mathfrak{m}_{d}(\mathbf{x}^{[g0]}) \otimes \cdots \otimes \mathfrak{m}_{d}(\mathbf{x}^{[gn]}) \\  &= \mathfrak{m}_{d}(\mathbf{x}^{[g0]})^t \otimes \cdots \otimes \mathfrak{m}_{d}(\mathbf{x}^{[gn]})^t \cdot g^{-1} M \cdot \mathfrak{m}_{d}(\mathbf{x}^{[g0]}) \otimes \cdots \otimes \mathfrak{m}_{d}(\mathbf{x}^{[gn]}) 
\\  &= p\end{align*}
where the second equality holds by the $G$-invariance of $M$, and the last equality by the commutativity of polynomial multiplication.

\ref{im:gram:i} $\Rightarrow$ \ref{im:gram:ii}. Assume that $p = \sum_{k=1}^{N} q_k^2$ is $G$-invariant. Define $v_k \in \left(\mathbb{R}^{D}\right)^{\otimes n+1}$ such that $q_k = v_k^t \mathfrak{m}_{n,d}$ defines a positive semidefinite matrix $M' = \sum_{k=1}^{N} v_k v_k^t$ with $\mathcal{G}(M') = p$, where  $M'$ need not be $G$-invariant. By the $G$-invariance of $p$, we additionally have that $\mathcal{G}(gM') = p$ for every $g \in G$. Defining $M$ as the average
$$
M = \frac{1}{\vert G \vert} \sum_{g \in G} gM'
$$
we obtain a $G$-invariant and positive semidefinite matrix $M$. 
By linearity of the Gram map, we have that $\mathcal{G}(M) = p$.
\end{proof}

\begin{remark}[Gram matrix of invariant separable polynomials]\label{rem:sep}
A similar version of Lemma \ref{lem:gram} relates  invariant separable polynomials  
$p \in \mathcal{C}_{\mathrm{sep}} = \mathcal{C}^{[0]}_{\mathrm{sos}} \otimes \cdots \otimes \mathcal{C}^{[n]}_{\mathrm{sos}}$ with invariant separable matrices $M$. The only difference  is that the vectors $v_k$  should be  elementary tensors factors.
 \demo\end{remark}

In order to state and prove the main result of this section (Theorem \ref{thm:sos-dec}), it only remains to define invariant sos---this is the non-stringent version advocated above.

\begin{definition}[Invariant sos decompositions]
\label{def:sos-dec}
Let $G$ act on the weighted simplicial complex $\Omega$,
and let $\mathfrak{q} = (q_{\mathbf{k}})_{\mathbf{k} \in \mathcal{S}}$ be a family of polynomials.
\begin{enumerate}[label=(\roman*),ref=(\roman*),leftmargin=*]
\item \label{def:sos-dec:i}
An \emph{$(\Omega,G)$-decomposition of the family $\mathfrak{q}$} is a decomposition 
$$q_{\mathbf{k}}= \sum_{\alpha \in \mathcal{I}^{\widetilde{\mathcal{F}}}} q_{k_0, \alpha_{\vert_0}}^{[0]}(\mathbf{x}^{[0]}) \cdots q^{[n]}_{k_n, \alpha_{\vert_n}}(\mathbf{x}^{[n]})$$
for every $\mathbf{k} \in \mathcal{S}$, where 
$$ q_{k_i, \beta}^{[i]} \in \mathbb{R}[\mathbf{x}^{[i]}]$$
and 
$$ q_{k_{i}, \beta}^{[i]}(\mathbf{x}^{[i]}) = q_{k_{i}, {}^g \beta}^{[gi]}(\mathbf{x}^{[i]})$$
for every $i \in [n]$, $\beta \in \mathcal{I}^{\widetilde{\mathcal{F}}_i}$, $g \in G$ and $\mathbf{k} \in \mathcal{S}$.
The smallest cardinality of $\mathcal{I}$ among all $(\Omega,G)$-decompositions is called the \emph{$(\Omega,G)$-rank} of $\mathfrak{q}$, denoted   $\rank_{(\Omega,G)}(\mathfrak{q}).$

\item  \label{def:sos-dec:ii}
An \emph{sos $(\Omega,G)$-decomposition} of $p\in\mathcal P$ is given by a sos decomposition into a family $\mathfrak{q}$ (that is, $p=\sum_{\mathbf k\in\mathcal S} q_{\mathbf k}^2$), together with an $(\Omega,G)$-decomposition of $\mathfrak{q}$.  
The minimal $(\Omega,G)$-rank among all such sos decompositions is called the \emph{sos $(\Omega,G)$-rank} of $p$, denoted $\sosrank_{(\Omega,G)}(p)$.
If $G$ is the trivial group action, we call the sos $(\Omega,G)$-decomposition just \emph{sos $\Omega$-decomposition} and denote its rank by $\sosrank_{\Omega}$. 
\end{enumerate}
\end{definition}

We are now ready to prove the main result regarding the existence of invariant sos polynomials: 
Every $G$-invariant sos polynomial $p$ has a $G$-invariant family $\mathfrak{q}$  (Theorem \ref{thm:sos-dec} \ref{thm:sos-dec:i}),
and  $\mathfrak q$ has an $(\Omega,G)$-decomposition if   $G$ is a free group action on $\Omega$ (Theorem \ref{thm:sos-dec} \ref{thm:sos-dec:ii}). 
The idea of the proof of Theorem \ref{thm:sos-dec} \ref{thm:sos-dec:i}   is to define $\mathfrak{q}$ as the square root of $p$, and show that this square root is also $G$-invariant.  
Some ideas of the proof are illustrated in Example \ref{ex:inv-sos}.

\begin{theorem}[Invariant sos decompositions]\label{thm:sos-dec} \quad\nobreakpar
\begin{enumerate}[label=(\roman*),ref=(\roman*),leftmargin=*]
\item \label{thm:sos-dec:i}
Let $p$ be a $G$-invariant sos polynomial. 
Then there exists   a $G$-invariant family of polynomials $\mathfrak{q} = (q_\mathbf{k})_{\mathbf{k} \in \mathcal{S}}$ such that $p = \sum_{\mathbf{k} \in \mathcal{S}} q_\mathbf{k}^2.$
Moreover, every element $q_{\mathbf{k}}$ admits a decomposition in which the local polynomials at site $i$ only depend on $k_i$, namely 
$$ 
q_{\mathbf{k}} = \sum_{j \in \mathcal{\mathcal{I}}} q_{{k_0}, j}^{[0]}(\mathbf{x}^{[0]}) \cdots q_{{k_n}, j}^{[n]}(\mathbf{x}^{[n]}). 
$$

\item \label{thm:sos-dec:ii}
Let $\Omega$ be a connected weighted simplicial complex with a free group action from  $G$.  
Then $\mathfrak q$ has an $(\Omega,G)$-decomposition, i.e.\ $\rank_{(\Omega,G)}(\mathfrak{q}) < \infty.$
\end{enumerate}
\end{theorem}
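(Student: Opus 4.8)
The plan is to treat the two parts in turn, part \ref{thm:sos-dec:i} carrying the new content and part \ref{thm:sos-dec:ii} being a reduction to \cref{thm:omegaG-dec}. For part \ref{thm:sos-dec:i}, I would begin from \cref{lem:gram}: choosing $d$ with $\deg_{\mathrm{loc}}(p)\le 2d$, the $G$-invariant sos polynomial $p$ admits a positive semidefinite, $G$-invariant matrix $M\in\mathcal M_D^{\otimes n+1}$ (with $D=\binom{m+d}{d}$) such that $\mathcal G(M)=p$. The key point is that the $G$-action on $\mathcal M_D^{\otimes n+1}$ is conjugation by the \emph{orthogonal} permutation matrices $P_g$ that permute the tensor factors according to $g$, so $G$-invariance of $M$ reads $P_gMP_g^t=M$; hence every $P_g$ commutes with $M$ and therefore with its unique symmetric positive semidefinite square root $R\coloneqq M^{1/2}$, so $R$ is itself $G$-invariant. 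Setting $\mathcal S\coloneqq\{1,\dots,D\}^{n+1}$ and, for the standard basis vector $e_{\mathbf k}=e_{k_0}\otimes\cdots\otimes e_{k_n}$ of $(\mathbb R^D)^{\otimes n+1}$, $v_{\mathbf k}\coloneqq Re_{\mathbf k}$, the identity $\sum_{\mathbf k\in\mathcal S}e_{\mathbf k}e_{\mathbf k}^t=\id$ gives $M=RR^t=\sum_{\mathbf k\in\mathcal S}v_{\mathbf k}v_{\mathbf k}^t$, while $P_ge_{\mathbf k}=e_{g\mathbf k}$ together with $P_gR=RP_g$ gives $P_gv_{\mathbf k}=v_{g\mathbf k}$. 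Defining $q_{\mathbf k}\coloneqq v_{\mathbf k}^t\mathfrak m_{n,d}$ then yields $\sum_{\mathbf k\in\mathcal S}q_{\mathbf k}^2=\mathfrak m_{n,d}^tM\mathfrak m_{n,d}=p$, and the covariance $P_gv_{\mathbf k}=v_{g\mathbf k}$ translates, by the same computation as in the proof of \cref{lem:gram}, into $q_{g\mathbf k}=gq_{\mathbf k}$; that is, $\mathfrak q=(q_{\mathbf k})_{\mathbf k\in\mathcal S}$ is a $G$-invariant family with $p=\sum_{\mathbf k}q_{\mathbf k}^2$. Finally, expanding $R$ as a finite sum of elementary tensors $R=\sum_{j\in\mathcal I}R_j^{[0]}\otimes\cdots\otimes R_j^{[n]}$ gives $v_{\mathbf k}=\sum_j(R_j^{[0]}e_{k_0})\otimes\cdots\otimes(R_j^{[n]}e_{k_n})$, hence $q_{\mathbf k}=\sum_{j\in\mathcal I}q_{k_0,j}^{[0]}(\mathbf x^{[0]})\cdots q_{k_n,j}^{[n]}(\mathbf x^{[n]})$ with $q_{k_i,j}^{[i]}\coloneqq(R_j^{[i]}e_{k_i})^t\mathfrak m_d(\mathbf x^{[i]})$, so the local polynomial at site $i$ depends only on $k_i$ and the shared index $j$, which is the ``moreover'' claim.

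For part \ref{thm:sos-dec:ii}, I would reduce to \cref{thm:omegaG-dec} by encoding the whole family into a single polynomial over enlarged local rings. Introduce a fresh variable $s^{[i]}$ at each site and set $\hat q\coloneqq\sum_{\mathbf k\in\mathcal S}q_{\mathbf k}(\mathbf x^{[0]},\dots,\mathbf x^{[n]})\,(s^{[0]})^{k_0-1}\cdots(s^{[n]})^{k_n-1}\in\mathbb R[\mathbf x^{[0]},s^{[0]}]\otimes\cdots\otimes\mathbb R[\mathbf x^{[n]},s^{[n]}]$, so that the monomials $1,s^{[i]},\dots,(s^{[i]})^{D-1}$ realize the coordinate basis of the index space at site $i$. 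The induced $G$-action on this enlarged ring is well-defined (all sites in a $G$-orbit carry the same number $m$ of $\mathbf x$-variables, hence $m+1$ after enlargement), and a short reindexing shows that $G$-invariance of the family $\mathfrak q$ is exactly $G$-invariance of $\hat q$. The ``moreover'' decomposition from part \ref{thm:sos-dec:i} supplies an elementary decomposition $\hat q=\sum_{j\in\mathcal I}\hat q_j^{[0]}\cdots\hat q_j^{[n]}$ with $\hat q_j^{[i]}\coloneqq\sum_{k_i=1}^{D}q_{k_i,j}^{[i]}(\mathbf x^{[i]})(s^{[i]})^{k_i-1}$. Since $G$ acts freely on $\Omega$, \cref{thm:omegaG-dec} produces an $(\Omega,G)$-decomposition $\hat q=\sum_{\alpha\in\mathcal J^{\widetilde{\mathcal F}}}\hat q^{[0]}_{\alpha_{|_0}}\cdots\hat q^{[n]}_{\alpha_{|_n}}$ with $\hat q^{[i]}_\beta\in\mathbb R[\mathbf x^{[i]},s^{[i]}]$ satisfying $\hat q^{[i]}_\beta(\mathbf x^{[i]},s^{[i]})=\hat q^{[gi]}_{{}^g\beta}(\mathbf x^{[i]},s^{[i]})$. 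Expanding $\hat q^{[i]}_\beta=\sum_{k_i=1}^{D}q^{[i]}_{k_i,\beta}(\mathbf x^{[i]})(s^{[i]})^{k_i-1}$ and comparing coefficients of the $s$-monomials converts this into a decomposition $q_{\mathbf k}=\sum_{\alpha\in\mathcal J^{\widetilde{\mathcal F}}}q^{[0]}_{k_0,\alpha_{|_0}}(\mathbf x^{[0]})\cdots q^{[n]}_{k_n,\alpha_{|_n}}(\mathbf x^{[n]})$ for every $\mathbf k$ together with the symmetry $q^{[i]}_{k_i,\beta}=q^{[gi]}_{k_i,{}^g\beta}$ (the $s$-monomials at $i$ and $gi$ matching because $D$ is common to both), which is precisely an $(\Omega,G)$-decomposition of $\mathfrak q$ in the sense of \cref{def:sos-dec} \ref{def:sos-dec:i}; hence $\textrm{rank}_{(\Omega,G)}(\mathfrak q)<\infty$.

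The step I expect to be the main obstacle is extracting the square root \emph{equivariantly} in part \ref{thm:sos-dec:i}: \cref{lem:gram} only delivers a $G$-invariant Gram \emph{matrix}, and an arbitrary factorization $M=\sum_k w_kw_k^t$ need not be $G$-covariant, so the argument hinges on using the canonical positive semidefinite square root and the orthogonality of the tensor-factor permutations, which is what makes $R$ commute with every $P_g$. Once this is in place, part \ref{thm:sos-dec:ii} is essentially bookkeeping, since the real combinatorial content---freeness, the $G$-linear map $\widetilde{\mathcal F}\to G$, and the enlargement of the index set by $G$---is already contained in \cref{thm:omegaG-dec}; one only has to verify that the dummy-variable encoding is faithful and $G$-equivariant.
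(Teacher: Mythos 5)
Your proof is correct. Part \ref{thm:sos-dec:i} follows essentially the same route as the paper: both pass through \cref{lem:gram}, take the unique positive semidefinite square root of the invariant Gram matrix, and read off the family $\mathfrak q$ from its rows/columns; the only cosmetic difference is how equivariance of the square root is justified (you argue that the orthogonal factor-permutation matrices $P_g$ commute with $M$ and hence with $M^{1/2}$ by uniqueness, while the paper notes that $M^{1/2}$ is a polynomial in $M$ --- both are standard and valid). Part \ref{thm:sos-dec:ii} is where you genuinely diverge: the paper re-runs the construction of \cref{thm:omegaG-dec} directly on the family, introducing the $G$-linear map $\mathbf z\colon\widetilde{\mathcal F}\to G$ and the enlarged index set $\mathcal I\times G$ a second time, whereas you encode the whole family into a single $G$-invariant polynomial $\hat q$ over local rings enlarged by one dummy variable $s^{[i]}$ per site, invoke \cref{thm:omegaG-dec} as a black box, and recover the family decomposition by comparing coefficients of the $s$-monomials. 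Your reduction is sound --- since each local factor involves only its own $s^{[i]}$, the coefficient of $(s^{[0]})^{k_0-1}\cdots(s^{[n]})^{k_n-1}$ in an elementary product factorizes with no cross terms, the enlargement preserves the condition $m_i=m_{gi}$ needed for the group action, and the symmetry condition of \cref{def:sos-dec} \ref{def:sos-dec:i} transfers coefficientwise --- and it has the advantage of not duplicating the freeness argument; the paper's direct adaptation, on the other hand, keeps the construction explicit and makes it transparent that the resulting rank bound is $|G|\cdot|\mathcal I|$.
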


\begin{proof} 
\ref{thm:sos-dec:i}
We denote the monomial  $\mathbf{x} = (x_1, \ldots, x_m)$  with exponent  $\alpha = (\alpha_1,\ldots,\alpha_m)$ by 
$\mathbf{x}^\alpha = x_1^{\alpha} \cdot x_2^{\alpha_2} \cdots x_m^{\alpha_m}.$
Without loss of generality we can assume that $\deg_{\mathrm{loc}}(p) \leq 2d$. 
Define 
$$
\mathcal{S}_i = \big\{k \in \mathbb{N}^m: \vert k\vert \leq d\big\}
$$  
 and $\mathcal{S} = \mathcal{S}_0 \times \cdots \times \mathcal{S}_n.$  Note that $\mathcal S$ can be identified with the set of monomials in $\mathcal P$ of local degree at most  $d$ via the correspondence
$$
\mathcal{S} \to \mathcal{P}_d \colon \quad  \mathbf{k} \mapsto \mathbf{x}^\mathbf{k} \coloneqq \left(\mathbf{x}^{[0]}\right)^{k_0} \cdots \left(\mathbf{x}^{[n]}\right)^{k_n}.
$$
Note also that the permutations of variables $\mathbf{x}^{[i]} \mapsto \mathbf{x}^{[gi]}$  coincide with the group action  of $G$ on $\mathcal S,$ since
\be \label{eq:symrel} 
\left(\mathbf{x}^{[g0]}\right)^{k_0} \cdots \left(\mathbf{x}^{[gn]}\right)^{k_n} = \left(\mathbf{x}^{[0]}\right)^{k_{g^{-1}0}} \cdots \left(\mathbf{x}^{[n]}\right)^{k_{g^{-1}n}}.
\ee

Since $p$ is $G$-invariant and sos, by Lemma \ref{lem:gram} there exists a positive semidefinite and $G$-invariant matrix $M $ such that $\mathcal{G}(M) = p$. 
Now let $B$ be the (unique) positive semidefinite square root of $M$, i.e.\ $M = B^2$. 
Since $M$ is a matrix, $B$ admits a polynomial expression in $M$ and hence $B$ is also $G$-invariant. 
Define the polynomials $q_{\mathbf{k}}$ as 
$$
q_{\mathbf{k}} = \sum_{\mathbf{k'} \in \mathcal{S}} B_{\mathbf{k}, \mathbf{k'}} \left(\mathbf{x}^{[0]}\right)^{k'_{0}} \cdots \left(\mathbf{x}^{[n]}\right)^{k'_{n}}
$$
for $\mathbf{k} \in \mathcal{S}$. 
The family $\mathfrak{q} = (q_{\mathbf{k}})_{\mathbf{k} \in \mathcal{S}}$ is $G$-invariant, since
\begin{align*} \nn 
gq_k
&= \sum_{\mathbf{k'} \in \mathcal{S}} B_{g \mathbf{k}, g \mathbf{k'}} \left(\mathbf{x}^{[0]}\right)^{k'_{g^{-1}0}} \cdots \left(\mathbf{x}^{[n]}\right)^{k'_{g^{-1}n}}\\
&= \sum_{\mathbf{k'} \in \mathcal{S}} B_{g \mathbf{k}, \mathbf{k'}} \left(\mathbf{x}^{[0]}\right)^{k'_0} \cdots \left(\mathbf{x}^{[n]}\right)^{k'_n} = q_{g \mathbf{k}} 
\end{align*}
where we have used the fact that $B_{\mathbf{k}, \mathbf{k'}} = B_{g \mathbf{k}, g \mathbf{k'}}$ for every $g \in G$ (which is just the $G$-invariance of $B$),  together with Equation \eqref{eq:symrel}  and  bijectivity of the map  $\mathbf{k'} \mapsto g \mathbf{k'}$.  In addition,
$$ 
\sum_{\mathbf{k} \in \mathcal{S}} q_{\mathbf{k}}^2 = \mathfrak{m}_{n,d}^t B^t B \mathfrak{m}_{n,d} = \mathcal{G}(M) = p 
$$
since $B^t B = B^2 = M$.
Moreover, $B$ admits a tensor decomposition
$$B_{\mathbf{k}, \mathbf{k'}} = \sum_{j \in \mathcal{I}} \left(B_j^{[0]}\right)_{k_0, k_0'} \cdots \left(B_j^{[n]}\right)_{k_n, k_n'}.$$
Using the definition of $q_{\mathbf{k}}$ leads to the last statement of \ref{thm:sos-dec:i}.

\ref{thm:sos-dec:ii} 
The proof is similar to that of Theorem \ref{thm:omegaG-dec}. Start with decompositions
$$q_{\mathbf k} = \sum_{j \in \mathcal{I}} q_{k_0, j}^{[0]}(\mathbf{x}^{[0]}) \cdots q_{k_n, j}^{[n]}(\mathbf{x}^{[n]})$$
for every $\mathbf{k} = (k_0, \ldots, k_n) \in \mathcal{S}$.  
From the construction of Theorem \ref{thm:omega-dec} it follows that  every polynomial $q_{\mathbf{k}}$ has a decomposition of the form
$$
q_{\mathbf{k}} = \sum_{\alpha \in \mathcal{I}^{\widetilde{\mathcal{F}}}} p_{k_0, \alpha_{\vert_0}}^{[0]}(\mathbf{x}^{[0]}) \cdots p_{k_n, \alpha_{\vert_n}}^{[n]}(\mathbf{x}^{[n]})
$$
where $\widetilde{\mathcal{F}}$ is the  set of facets of $\Omega$.
We now construct a decomposition for every $q_{\mathbf{k}}$ which additionally satisfies the symmetry conditions of Definition \ref{def:sos-dec} \ref{def:sos-dec:i}. 
Since $G$ is free, by Remark \ref{rem:grac} \ref{rem:grac:vi}, 
there exists a $G$-linear map $\mathbf{z}: \widetilde{\mathcal{F}} \to G$. We consider the new index set $\hat{\mathcal{I}} \coloneqq \mathcal{I} \times G$, together with the projection maps $\pi_1: \hat{\mathcal{I}} \to \mathcal{I}$ and $\pi_2: \hat{\mathcal{I}} \to G$. For each $i \in [n]$ and $\beta \in \hat{\mathcal{I}}^{\mathcal{\widetilde{F}}_i}$ we  define the following local polynomials
$$
q_{k_i,\beta}^{[i]}(\mathbf{x}^{[i]}) \coloneqq \left\{ \begin{array}{ll} p^{[gi]}_{k_i, {}^g (\pi_1 \circ \beta)}(\mathbf{x}^{[i]}) & : \pi_2 \circ \beta = ({}^{g^{-1}} \mathbf{z})_{\vert_i} \\0 &: \textrm{else.} \end{array} \right.
$$
Similarly to the discussion in the proof of Theorem \ref{thm:omegaG-dec} we see that
$$ q_{k_i,{}^{g}\beta}^{[gi]}(\mathbf{x}^{[i]}) = q_{k_i,\beta}^{[i]}(\mathbf{x}^{[i]})$$
and
$$ \vert G\vert  \cdot q_{\mathbf{k}}= \sum_{\hat{\alpha} \in \mathcal{\hat{I}}^{\widetilde{\mathcal{F}}}} q_{k_0, \hat{\alpha}_{\vert_0}}^{[0]}(\mathbf{x}^{[0]}) \cdots q_{k_n, \hat{\alpha}_{\vert_n}}^{[n]}(\mathbf{x}^{[n]})$$
holds for every $\mathbf{k} \in \mathcal{S}$. But this implies the existence of an $(\Omega,G)$-decomposition of $\mathfrak{q}.$  
\end{proof}

\begin{remark}[More general version of Theorem \ref{thm:sos-dec} \ref{thm:sos-dec:i}] \phantom{*}
\label{rem:semiinvdec} 
\begin{enumerate}[label=(\roman*)]
	\item In \cite[Theorem 5.3]{Ga02}, the authors prove the existence of so-called semi-symmetric sos decompositions for general representations of finite groups, by using Schur's lemma on the Gram matrix. 
Theorem \ref{thm:sos-dec} \ref{thm:sos-dec:i} is weaker than that, as it only considers group actions that permute the tensor product spaces, but gives an elementary proof.

	\item There are also other characterizations of invariant sum-of-squares decompositions, like \cite[Corollary 2.7]{De21d}. Our decompositions are really sums-of-square decompositions. To highlight the difference to our framework, let us consider a decomposition of the polynomial $p = x^2 + y^2 + (x-y)^2$. According to Corollary 2.7 of Debus, Riener, $p$ decomposes into
$$ x^2 + y^2 + (x-y)^2 = \langle A, B(x,y) \rangle$$
with
$$A = \begin{pmatrix} 1 & -1 \\ -1 & 1 \end{pmatrix} \quad \text{ and } \quad B(x,y) = \mathcal{R}_{C_2}\left(\begin{pmatrix} x \\ y \end{pmatrix} \cdot \begin{pmatrix} x & y \end{pmatrix} \right) = \begin{pmatrix} \frac{x^2 + y^2}{2} & xy \\ xy & \frac{x^2 + y^2}{2} \end{pmatrix} $$
where $\mathcal{R}_G$ is the Reynolds operator applied to every entry of the matrix separately, i.e.\ $\mathcal{R}_{C_2}(x^2) = \frac{1}{2} (x^2 + y^2)$ and $\mathcal{R}_{C_2}(xy) = xy$.

In contrast, $p$ factorizes according to Theorem \ref{thm:sos-dec} in our paper as
$$p = p_{1,1}(x,y)^2 + p_{2,2}(x,y)^2 + p_{1,2}(x,y)^2 + p_{2,1}(x,y)^2$$
with $p_{1,1} = p_{2,2} = \frac{1}{2} (x-y)^2$ and $p_{1,2} = x^2$ and $p_{2,1} = y^2$.
\end{enumerate}
\demo
\end{remark}

From Theorem \ref{thm:sos-dec} it follows that: 

\begin{corollary}[Invariant sos polynomials with free group action]\label{cor:sos}
Let $\Omega$ be a connected weighted simplicial complex with a free group action from  $G$. Then every $p \in \mathcal{P}$ which is sos and $G$-invariant has an sos $(\Omega,G)$-decomposition, i.e.\ $\sosrank_{(\Omega,G)}(p) < \infty.$
\end{corollary}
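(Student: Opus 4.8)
The plan is to simply chain together the two parts of \cref{thm:sos-dec}. First I would invoke \cref{thm:sos-dec} \ref{thm:sos-dec:i}: since $p$ is sos and $G$-invariant, there exists a $G$-invariant family of polynomials $\mathfrak{q} = (q_{\mathbf{k}})_{\mathbf{k} \in \mathcal{S}}$ with $p = \sum_{\mathbf{k} \in \mathcal{S}} q_{\mathbf{k}}^2$; moreover each $q_{\mathbf{k}}$ comes equipped with a decomposition whose local polynomials at site $i$ depend only on $k_i$. This is exactly the input required by the second part.

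Next I would apply \cref{thm:sos-dec} \ref{thm:sos-dec:ii}: because $\Omega$ is connected and $G$ acts freely on it, the family $\mathfrak{q}$ obtained in the previous step admits an $(\Omega,G)$-decomposition, so $\textrm{rank}_{(\Omega,G)}(\mathfrak{q}) < \infty$. Combining the sos decomposition $p = \sum_{\mathbf{k} \in \mathcal{S}} q_{\mathbf{k}}^2$ with this $(\Omega,G)$-decomposition of $\mathfrak{q}$ yields, by \cref{def:sos-dec} \ref{def:sos-dec:ii}, an sos $(\Omega,G)$-decomposition of $p$. Hence $\textrm{sos-rank}_{(\Omega,G)}(p) \leq \textrm{rank}_{(\Omega,G)}(\mathfrak{q}) < \infty$, which is the claim.

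There is no real obstacle here: the corollary is a direct packaging of \cref{thm:sos-dec}, and all the substantive work---producing a $G$-invariant square root via the Gram map and averaging, and then making the family's decomposition invariant by the $\mathbf{z}$-twisting construction---has already been carried out in the proof of that theorem. The only thing to be careful about is to check that what \cref{def:sos-dec} \ref{def:sos-dec:ii} demands (a sos decomposition into a family $\mathfrak{q}$ \emph{plus} an $(\Omega,G)$-decomposition of that same family) is precisely what the two parts of \cref{thm:sos-dec} jointly provide, so that the rank bound follows immediately.
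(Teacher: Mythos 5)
Your proposal is correct and matches the paper exactly: the paper derives \cref{cor:sos} as an immediate consequence of the two parts of \cref{thm:sos-dec}, just as you chain them together. Nothing further is needed.
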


\begin{example}[Illustrating invariant sos decompositions] \label{ex:inv-sos}
Consider again the polynomial from Example \ref{ex:omegaG-exp}, 
$$
p=x^2+y^2+4(1+xy)^2,
$$
which is  sos and invariant with respect to the  permutation of $x$ and $y$. We have already seen that ${\rm rank}_{(\Delta,C_2)}(p)=2.$ 
By a similar argument as in Example \ref{ex:sepPoly},  it can be shown that $p$ is not separable with respect to the local sos cones.

To obtain a sos decomposition  we   follow the proof of Theorem \ref{thm:sos-dec}. 
We obtain $\mathcal S=\{0,1\}\times\{0,1\}$ with $G=C_2$ permuting the entries of the tuples, and obtain a $C_2$-invariant sos decomposition of $p$ via the following family of polynomials: 
$$
q_{(0,0)}=q_{(1,1)}=\sqrt{2}(1+xy),\quad q_{(0,1)}=y,\quad  q_{(1,0)}=x.
$$ 

On the double edge $\Delta$ we obtain an $(\Delta, C_2)$-decomposition of the family   via the following family of polynomials
\begin{align*}q_0^{[0]}&
=\left(\begin{array}{ccc}\sqrt[4]{2} t & \frac{1}{\sqrt{2}} & 0 \\1 & 0 & 0 \\0 & 0 & 0\end{array}\right), \quad q_0^{[1]}={q_0^{[0]}}^t\\
q_1^{[0]}& 
=\left(\begin{array}{ccc}0 & 0 & 0 \\\sqrt{2}t & \sqrt[4]{2}t & 0 \\0 & 0 & \sqrt[4]{2}\end{array}\right), \quad q_1^{[1]}={q_1^{[0]}}^t. 
\end{align*} 
where the matrix notation denotes that the rows are indexed by $\alpha=1,2,3$ and the columns by $\beta=1,2,3$. 
This shows that 
$$\sosrank_{(\Delta,C_2)}(p)\leq \sosrank_{(\Delta,C_2)}(\mathfrak q) \leq 3.
$$

On the single edge $\Sigma_1$, 
a decomposition of  $\mathfrak q$  requires vectors $a,b,c,d\in\mathbb R^d$ of length $\sqrt[4]{2}$, with $a,b,c$ pairwise orthogonal, $d$ orthogonal to $b$ and $c$, and $\langle a,d\rangle =1$. This is provided by 
\begin{align*} 
q^{[0]}_0&=q_0^{[1]}
= \left(a_\alpha+b_\alpha t\right)_{\alpha=1,\ldots, d}\\ 
q^{[0]}_1&=q_1^{[1]}
= \left(c_\alpha+d_\alpha t\right)_{\alpha=1,\ldots, d}
\end{align*} 
where  $(\: )_{\alpha}$  denotes a vector indexed by $\alpha$. 
 Since such vectors can only be found in dimension $d\geq 4$,  we obtain 
$$\sosrank_{(\Lambda_1,C_2)}(p)\leq \sosrank_{(\Lambda_1,C_2)}(\mathfrak q) = 4.$$

We can also write $p$ as a sum of symmetric squares: 
$$
p=\left(2+\frac32xy\right)^2 + \left(x+y\right)^2+ \left(\sqrt{\frac74}xy\right)^2.
$$ 
We now reset the variables $\mathcal S_0=\mathcal S_1=\{1,2,3\}, \mathcal S=\mathcal S_1\times\mathcal S_2,$   as well as 
$$
q_{(1,1)}=2+\frac32xy,\ q_{(2,2)}=x+y,\ q_{(3,3)}=\sqrt{\frac74}xy,
$$ 
and all other $q_{\mathbf k}=0$. This gives rise to the $C_2$-invariant family $\mathfrak q= \left(q_{\mathbf k}\right)_{\mathbf k\in\mathcal S}$ that provides an sos decomposition of $p$ with 
$$
\sosrank_{(\Delta,C_2)}(\mathfrak q) \leq 3.
$$ 
But for  the single edge, there does not exist a decomposition for the family $\mathfrak q$. This is because already $q_{(2,2)}=x+y$ does not admit an $(\Lambda_1,C_2)$-decomposition (without a minus sign). So 
$$
\sosrank_{(\Lambda_1,C_2)}(\mathfrak q)=\infty.$$\demo
\end{example}

\section{Inequalities and separations between the ranks}
\label{sec:inequalitites}

In this section, we study rank inequalities (Section \ref{ssec:inequalities}),
provide an upper bound for the separable rank (Section \ref{ssec:upper}),
and show separations between ranks (Section \ref{ssec:sep}).

\subsection{Inequalities between ranks}
\label{ssec:inequalities}

In this section, we show three relations between the introduced ranks (Proposition \ref{pro:inequalitites}), which are similar to the statements established for tensor decompositions in \cite[Proposition 29]{De19d}. For the inequality between sos and separable decompositions we will need to assume that $(\Omega,G)$ is \emph{factorizable}: 

\begin{definition}[Factorizable] 
Let $\Omega$ be a weighted simplicial complex with a group action from  $G$. We say that $(\Omega,G)$ is   \emph{factorizable} if for each finite index set $\mathcal I$ the following system of equations admits a solution with all $C^{[i]}_{\beta} > 0$ and $C_{{}^g \beta}^{[gi]} = C_{\beta}^{[i]}$ for all $i \in [n]$,   $\beta \in \mathcal{I}^{\widetilde{\mathcal{F}}_i}$, and $g \in G$:
\be \label{eq:Eq-system} C_{\alpha_{\vert_0}}^{[0]} \cdot C_{\alpha_{\vert_1}}^{[1]} \cdots  C_{\alpha_{\vert_n}}^{[n]} = K_{\alpha}^{-1} \quad \textrm{ for all } \alpha \in \mathcal{I}^{\widetilde{\mathcal{F}}}, \ee 
where
$$
K_{\alpha} \coloneqq \big\vert \{\gamma \in \mathcal{I}^{\widetilde{\mathcal{F}}}: \exists g_0, \ldots, g_n \in G \textrm{ with } g_i i = i \textrm{ and } \left({}^{g_i}\gamma\right)_{\vert_i}= \alpha_{\vert_i} \mbox{ for all }  i\in[n]\}\big\vert. 
$$
\end{definition}

Note that Equation \eqref{eq:Eq-system} can be seen as a system of linear equations by taking the logarithm on the left and the right hand side.

All examples of group actions on a weighted simplicial complex $\Omega$  considered in this paper are factorizable, as the following example shows. 

\begin{example}[Factorizable group actions] \quad\nobreakpar
\begin{enumerate}[label=(\roman*),ref=(\roman*),leftmargin=*]
\item 
If $K_{\alpha} = 1$ for every $\alpha \in \mathcal{I}^{\widetilde{\mathcal{F}}}$, then $C_{\beta}^{[i]} = 1$ solves Equation \eqref{eq:Eq-system}. This in particular shows that $(\Omega,G)$ is factorizable whenever the action of $G$ on the vertices $[n]$ is free. In addition, this also implies that $(\Sigma_n, S_{n+1})$ is factorizable.

\item 
Let $\Omega = \Delta$ be the double edge and let $G = \{e, g\}$ act by keeping  the vertices  fixed (i.e.\ $e i = g i = i$) and  flipping the facets (i.e.\ $g \mathfrak{a} = \mathfrak{b}$, $g \mathfrak{b} = \mathfrak{a}$)\footnote{Note that this is different from the symmetric double edge of Example \ref{ex:group_EdgeDoubleEdge}, since here the vertices remain fixed. The usual action on the double edge  is free, and hence factorizable as well.}.
 In this situation, we have
$$K_{\alpha_1, \alpha_2} = \left\{\begin{array}{cl} 1 & \textrm{: if } \alpha_1 = \alpha_2\\ 2 & \textrm{: if } \alpha_1 \neq \alpha_2. \end{array}\right.$$
A  solution of Equation \eqref{eq:Eq-system} is given by
$$C_{\alpha_1, \alpha_2}^{[i]} = \left\{\begin{array}{cl} 1 & \textrm{: if } \alpha_1 = \alpha_2\\ 1/\sqrt{2} & \textrm{: if } \alpha_1 \neq \alpha_2. \end{array}\right.$$
Hence, $(\Delta,G)$ is also factorizable.\demo
\end{enumerate}
\end{example}

In fact, we are not aware of any  non-factorizable $(\Omega,G)$ structures, leading to the following open question.

\begin{question}
\label{qu:fact}
Are there non-factorizable $(\Omega,G)$ structures?
\end{question}

We are now ready to present the rank inequalities.

\begin{proposition}[Rank inequalities] \label{pro:inequalitites}
Let $p \in \mathcal{P}$.  
\begin{enumerate}[label=(\roman*),leftmargin=*]
	\item \label{pro:inequalitites:i}
	$\rank_{(\Omega,G)}(p) \leq \seprank_{(\Omega,G)}(p)$ for any  separable cone.
	\item \label{pro:inequalitites:ii}
	$\rank_{(\Omega,G)}(p) \leq \sosrank_{(\Omega,G)}(p)^2$.
	\item \label{pro:inequalitites:iii}
	If $(\Omega,G)$ is factorizable, then $$\sosrank_{(\Omega,G)}(p) \leq \seprank_{(\Omega,G)}(p)$$   for the separable cone over  local sos polynomials.
		\item \label{pro:inequalitites:iv}
		If the action of $G$ on the connected weighted simplicial complex  $\Omega$ is free and $(\Omega,G)$ is factorizable, then for every $G$-invariant $p\in\mathcal P$ we have	$$\sosrank_{(\Omega,G)}(p) \leq \vert G\vert  \cdot \seprank_{\Omega}(p),$$ for the separable cone over local  sos-polynomials.
\end{enumerate}
\end{proposition}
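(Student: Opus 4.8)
The plan is to deduce this last inequality by chaining part (iii) of the proposition with \cref{cor:sep-ranks}; no new construction is needed, since all the real work is already contained in those two statements. First I would dispose of the trivial case: if $p$ does not lie in the separable cone $\mathcal{C}_{\mathrm{sep}}$ built from the local sos cones, then $\textrm{sep-rank}_{\Omega}(p) = \infty$ and the claimed inequality holds vacuously. So we may assume $p \in \mathcal{C}_{\mathrm{sep}}$. Since a product of two sos polynomials is again sos and a finite sum of sos polynomials is sos, such a $p$ is automatically sos, and by hypothesis it is also $G$-invariant; hence all the ranks appearing in the statement are meaningful for $p$.

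Now, because $(\Omega,G)$ is factorizable, part (iii) applied to this separable $p$ gives $\textrm{sos-rank}_{(\Omega,G)}(p) \leq \textrm{sep-rank}_{(\Omega,G)}(p)$. Because $G$ acts freely on the connected complex $\Omega$ and $p$ is $G$-invariant, \cref{cor:sep-ranks} gives $\textrm{sep-rank}_{(\Omega,G)}(p) \leq |G| \cdot \textrm{sep-rank}_{\Omega}(p)$. Composing these two inequalities yields exactly $\textrm{sos-rank}_{(\Omega,G)}(p) \leq |G| \cdot \textrm{sep-rank}_{\Omega}(p)$, which is the assertion.

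The only point that needs attention — and where I would double-check — is that the hypotheses of the two invoked statements match those of the present item: part (iii) consumes the factorizability assumption and requires $p$ to be separable over the local sos cones (secured by the reduction above), while \cref{cor:sep-ranks} consumes the freeness assumption and requires $p$ to be $G$-invariant (part of the hypothesis here). There is no genuine obstacle; the substantive content lives in the proofs already given, namely the factorizability rescaling that turns a separable $(\Omega,G)$-decomposition into an sos one in the proof of part (iii), and the free-refinement averaging of \cref{thm:sep} and \cref{cor:sep-ranks} that produces a $G$-invariant separable decomposition at a multiplicative cost of $|G|$.
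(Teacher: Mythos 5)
Your argument addresses only item (iv), and for that item it is correct and is exactly the paper's proof: the paper also derives (iv) in one line by chaining item (iii) with \cref{cor:sep-ranks}, giving $\textrm{sos-rank}_{(\Omega,G)}(p) \leq \textrm{sep-rank}_{(\Omega,G)}(p)\leq |G|\cdot\textrm{sep-rank}_{\Omega}(p)$. Your extra remarks (the vacuous case $\textrm{sep-rank}_{\Omega}(p)=\infty$, and that separability over local sos cones implies $p$ is sos) are harmless and correct, though note that the substantive content of the proposition lies in items (i)--(iii) --- in particular the factorizability rescaling in (iii) --- which your proposal takes as given rather than proves.
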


\begin{proof}
\ref{pro:inequalitites:i}. Clear, since every separable $(\Omega,G)$-decomposition is an $(\Omega,G)$-decomposition.

\ref{pro:inequalitites:ii}.  Let $\mathfrak{q} = (q_{\mathbf{k}})_{\mathbf{k} \in \mathcal{S}}$ be a $G$-invariant sos-decomposition of $p$, with  an $(\Omega,G)$-decomposition
$$q_{\mathbf{k}} = \sum_{\alpha \in \mathcal{I}^{\widetilde{\mathcal{F}}}} q_{k_0, \alpha_{\vert_0}}^{[0]}(\mathbf{x}^{[0]}) \cdots q_{k_n, \alpha_{\vert_n}}^{[n]}(\mathbf{x}^{[n]})$$ for each $\mathbf k\in\mathcal S=\mathcal S_0\times\cdots\times\mathcal S_n$.
Defining $\hat{\mathcal{I}} \coloneqq \mathcal{I} \times \mathcal{I}$ and
$$p_{\beta, \beta'}^{[i]} \coloneqq \sum_{k \in\mathcal S_i} q_{k, \beta}^{[i]}(\mathbf{x}^{[i]}) \cdot q_{k, \beta'}^{[i]}(\mathbf{x}^{[i]})$$
we obtain  a valid $(\Omega,G)$-decomposition of $p$, with $\rank_{(\Omega,G)}(p) \leq \vert\mathcal{\hat{I}}\vert = \vert\mathcal{I}\vert^2$:
\be \nn p &=& \sum_{\mathbf{k} \in \mathcal{S}} \sum_{\alpha, \alpha' \in \mathcal{I}^{\widetilde{\mathcal{F}}}} q_{k_0, \alpha_{\vert_0}}^{[0]}(\mathbf{x}^{[0]}) \cdot q_{k_0, \alpha'_{\vert_0}}^{[0]}(\mathbf{x}^{[0]}) \cdots q_{k_n, \alpha_{\vert_n}}^{[n]}(\mathbf{x}^{[n]}) \cdot q_{k_n, \alpha'_{\vert_n}}^{[n]}(\mathbf{x}^{[n]}) \\ \nn &=& \sum_{(\alpha,\alpha') \in \mathcal{\hat{I}}^{\widetilde{\mathcal{F}}}} p_{\alpha_{\vert_0}, \alpha'_{\vert_0}}^{[0]}(\mathbf{x}^{[0]}) \cdots p_{\alpha_{\vert_n}, \alpha'_{\vert_n}}^{[n]}(\mathbf{x}^{[n]}). \ee

\ref{pro:inequalitites:iii}.  Let $p^{[i]}_{\beta}  \in\mathcal{C}^{[i]}_{\rm sos}$ for $\beta \in \mathcal{I}^{\widetilde{\mathcal{F}}_i}$ and  $i \in [n]$ be local polynomials from a separable $(\Omega,G)$-decomposition of $p$. So there exist sos decompositions  
$$p^{[i]}_{\beta} = \sum_{k=1}^{N} \left(\tau_{k, \beta}^{[i]}\right)^2$$ with $\tau_{k,\beta}^{[i]}\in\mathbb R[x^{[i]}]$ (and we can clearly use the same sum length $N$ for all $i,\beta$).
We can in addition assume without loss of generality that
\be \nn \tau_{k, {}^g \beta}^{[gi]}(\mathbf{x}^{[i]}) = \tau_{k, \beta}^{[i]}(\mathbf{x}^{[i]}) \ee holds for all $i,\beta, k$ and $g$.
Indeed, just consider the action of $G$ on $$\bigcup_{i\in[n]}\{i\}\times \mathcal I^{\widetilde{\mathcal F}_i}$$ given by $g\cdot (i,\beta)\coloneqq (gi,{}^g\beta),$ and fix for every orbit precisely one representative $(i_1, \beta_1), \ldots, (i_M, \beta_M)$. Then choose one sos decomposition for each $p^{[i_\ell]}_{\beta_\ell}$ and use the same along its orbit. This works since we have $p^{[gi]}_{{}^g\beta}(x^{[i]})=p^{[i]}_\beta(x^{[i]})$ for all $i,\beta$ by assumption.

Now since $(\Omega,G)$ is factorizable, we can choose some positive and $G$-invariant solution $\left(C_\beta^{[i]}\right)_{\beta, i}$ of Equation \eqref{eq:Eq-system}. Using the above representatives $(i_\ell, \beta_\ell)$ again, we  now define
$$q^{[i]}_{(\ell, k), \beta} \coloneqq \left\{\begin{array}{c l} \sqrt{C^{[i]}_{\beta}} \cdot \tau_{k,\beta}^{[i]}(\mathbf{x}^{[i]}) & \textrm{: if } \exists g \in G: (i,\beta) = (g i_{\ell}, {}^g \beta_{\ell})\\[0.5cm]
0 & \textrm{: else}
\end{array} \right.$$
where $\ell \in \{1, \ldots, M\}$, $k \in \{1, \ldots, N\}$ and $\beta \in \mathcal{I}^{\widetilde{\mathcal{F}}_i}$.
By definition, we have $$q^{[gi]}_{(\ell,k), {}^g \beta}(\mathbf{x}^{[i]}) =  q^{[i]}_{(\ell,k), \beta}(\mathbf{x}^{[i]}),$$ and hence
$$q_{((\ell_0,k_0),\ldots,(\ell_n,k_n))}\coloneqq \sum_{\alpha \in \mathcal{I}^{\widetilde{\mathcal{F}}}} q^{[0]}_{(\ell_0,k_0), \alpha_{\vert_0}}(\mathbf{x}^{[0]}) \cdots q^{[n]}_{(\ell_n,k_n), \alpha_{\vert_n}}(\mathbf{x}^{[n]})$$
is a valid $(\Omega,G)$-decomposition of the $G$-invariant family
$$\mathfrak{q} \coloneqq \left(q_{((\ell_0,k_0),\ldots,(\ell_n,k_n))}\right)_{(\ell_i, k_i) \in \mathcal{S}_i}$$
where $\mathcal{S}_i = \{1, \ldots, M\} \times \{1, \ldots, N\}$. This family is also  an sos decomposition of $p$, since
\be \nn \sum_{\forall i: \: (\ell_i, k_i) \in \mathcal{S}_i}  q_{((\ell_0,k_0),\ldots,(\ell_n,k_n))}^2 &=& \sum_{\alpha \in \mathcal{I}^{\widetilde{\mathcal{F}}}} K_{\alpha} \cdot C_{\alpha_{\vert_0}}^{[0]} \cdots C_{\alpha_{\vert_n}}^{[n]} \cdot p^{[0]}_{\alpha_{\vert_0}}(\mathbf{x}^{[0]}) \cdots p^{[n]}_{\alpha_{\vert_n}}(\mathbf{x}^{[n]}) \\ \nn &=& p.
\ee
Here we have used Equation \eqref{eq:Eq-system}, as well as $G$-invariance of the $C^{[i]}_\beta$ and the $\tau_{k,\beta}^{[i]}.$

Finally, \ref{pro:inequalitites:iv}  follows  from  \ref{pro:inequalitites:iii} and Corollary \ref{cor:sep-ranks}:  
$$\sosrank_{(\Omega,G)}(p) \leq \seprank_{(\Omega,G)}(p)\leq \vert G\vert  \cdot \seprank_{\Omega}(p).$$
\end{proof}

\subsection{An upper bound for the separable rank}
\label{ssec:upper}

In this short section we provide an upper bound for the separable $(\Omega,G)$-rank with respect to the number of local variables $m_i$ and the polynomial's local degree. For simplicity, we again assume that all local polynomial spaces use the same number of variables, $m \coloneqq m_i = m_j$ for $i, j \in [n]$. For $p \in \mathcal{P}$ recall that the local degree of $p$, denoted  $\deg_{\mathrm{loc}}(p),$ is  the smallest integer $d \in \mathbb{N}$ such that
$$
p \in \mathbb{R}[\mathbf{x}^{[0]}]_d \otimes \cdots \otimes \mathbb{R}[\mathbf{x}^{[n]}]_d
$$
where $\mathbb{R}[\mathbf{x}]_d$ is the space of polynomials in variables $\mathbf{x}$ of degree at most $d$.

\begin{proposition}[Upper bound for separable rank]\label{pro:uppersep}
Let $p \in \mathcal{P}$ be separable and $G$-invariant, and let $\Omega$ be a connected weighted simplicial complex with a free group action from  $G$. Then
$$\seprank_{(\Omega,G)}(p) \leq \vert G\vert  \cdot \binom{\deg_{\mathrm{loc}}(p) + m}{\deg_{\mathrm{loc}}(p)}^{n+1}$$
for any separable cone.
\end{proposition}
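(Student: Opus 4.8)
The plan is to reduce to the trivial group and then to bound the separable $\Sigma_n$-rank by a dimension count. First I would apply \cref{cor:sep-ranks}: since $\Omega$ is connected, $G$ acts freely on it, and $p$ is $G$-invariant, we get $\textrm{sep-rank}_{(\Omega,G)}(p)\le |G|\cdot\textrm{sep-rank}_{\Sigma_n}(p)$, so it suffices to prove $\textrm{sep-rank}_{\Sigma_n}(p)\le D^{n+1}$, where $d\coloneqq\deg_{\mathrm{loc}}(p)$, $D\coloneqq\binom{d+m}{d}=\dim_{\mathbb R}\mathbb R[\mathbf x^{[i]}]_d$, and hence $\dim_{\mathbb R}\mathcal P_d=D^{n+1}$.

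Next I would start from any separable decomposition $p=\sum_{j=1}^N p_j^{[0]}\cdots p_j^{[n]}$ with $p_j^{[i]}\in\mathcal C^{[i]}$ (one exists since $p$ is separable) and delete all vanishing summands, so that every $p_j^{[i]}$ is a nonzero polynomial. The crucial — and, I expect, only genuinely delicate — step is then to show that automatically $\deg p_j^{[i]}\le d$ for all $i,j$, so that each elementary term $p_j^{[0]}\cdots p_j^{[n]}$ already lives in the finite-dimensional space $\mathcal P_d$. To see this, fix $i$, let $b\coloneqq\max_j\deg p_j^{[i]}$, and compare the components of $\mathbf x^{[i]}$-degree $b$ on both sides of the identity (grading $\mathcal P$ by total degree in the variables $\mathbf x^{[i]}$): the component of $p$ vanishes whenever $b>\deg_{\mathbf x^{[i]}}p$, which holds as soon as $b>d$, whereas the component of the right-hand side equals $\sum_{j:\deg p_j^{[i]}=b}(p_j^{[i]})_b(\mathbf x^{[i]})\cdot\prod_{k\ne i}p_j^{[k]}(\mathbf x^{[k]})$, where $(p_j^{[i]})_b\neq 0$ is the top homogeneous form of $p_j^{[i]}$. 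For the local cones relevant in this paper — those contained in the cone of nonnegative polynomials, i.e.\ $\mathcal C_{\mathrm{sos}}$, $\mathcal C_{\mathrm{nn}}$, and the cone of polynomials with nonnegative coefficients — the leading form of a nonzero cone element is a nonzero everywhere-nonnegative polynomial, and so is $\prod_{k\ne i}p_j^{[k]}$; hence every summand is pointwise nonnegative, so a vanishing sum of them forces each to be identically zero, which is impossible since the polynomial ring is a domain and all factors are nonzero. Therefore $b\le d$.

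Finally, knowing all factors have degree at most $d$, the polynomial $p$ lies in the convex cone generated inside $\mathcal P_d$ by the elementary separable tensors $q^{[0]}\cdots q^{[n]}$ with $q^{[i]}\in\mathcal C^{[i]}\cap\mathbb R[\mathbf x^{[i]}]_d$. Applying Carath\'eodory's theorem for cones in $\mathcal P_d$, which has dimension $D^{n+1}$, I would express $p$ as a nonnegative combination of at most $D^{n+1}$ such tensors; absorbing each coefficient into one of the local factors of its term (allowed because $\mathcal C^{[0]}$ is a cone) yields a separable $\Sigma_n$-decomposition of $p$ with at most $D^{n+1}$ terms, so $\textrm{sep-rank}_{\Sigma_n}(p)\le D^{n+1}$. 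Combined with the first step this gives $\textrm{sep-rank}_{(\Omega,G)}(p)\le|G|\cdot D^{n+1}=|G|\cdot\binom{\deg_{\mathrm{loc}}(p)+m}{\deg_{\mathrm{loc}}(p)}^{n+1}$. The degree-reduction step is the main obstacle; everything else is a routine application of \cref{cor:sep-ranks} together with the conical Carath\'eodory theorem once the terms are confined to a finite-dimensional space.
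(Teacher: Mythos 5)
Your proposal follows the same skeleton as the paper's proof: reduce to the trivial group action via \cref{cor:sep-ranks}, and then bound the separable rank on the simplex by $\dim\mathcal P_d=\binom{d+m}{d}^{n+1}$ using Carath\'eodory's theorem for cones. The one substantive difference is that the paper applies Carath\'eodory immediately, tacitly treating the elementary separable products as elements of the finite-dimensional space $\mathcal P_d$, whereas you isolate and prove the missing step: that one may assume every local factor $p_j^{[i]}$ has degree at most $d=\deg_{\mathrm{loc}}(p)$, so that Carath\'eodory really operates inside $\mathcal P_d$. This is a genuine refinement---without some such argument the dimension count does not apply, since a separable decomposition could a priori use high-degree local factors whose contributions cancel in the product, and then the elementary products need not span a space of dimension at most $\binom{d+m}{d}^{n+1}$.

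Two caveats on your degree-confinement lemma. First, it uses positivity of the local cones in an essential way, so it does not establish the proposition ``for any separable cone'' as literally stated: for an arbitrary convex cone $\mathcal C^{[i]}$ (for instance all of $\mathbb R[\mathbf x^{[i]}]$, or a cone containing both $x+x^{N}$ and $-x^{N}$) the leading forms can cancel and your argument yields no degree bound. Your proof therefore covers the cones the paper actually works with, but not the full generality claimed by the statement (a generality which the paper's own one-line Carath\'eodory step does not visibly justify either). Second, your blanket justification---that the leading form of a nonzero cone element is a nonzero everywhere-nonnegative polynomial---fails for the cone of polynomials with nonnegative coefficients (the leading form of $x$ is $x$); for that cone the non-cancellation should instead be argued from the fact that a sum of nonzero products of polynomials with nonnegative coefficients in disjoint variable blocks has a strictly positive coefficient somewhere. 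For $\mathcal C_{\mathrm{sos}}$ and $\mathcal C_{\mathrm{nn}}$ your pointwise-nonnegativity argument is correct. Everything else---the appeal to \cref{cor:sep-ranks}, the Carath\'eodory step, absorbing the scalars into a local factor---matches the paper.
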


\begin{proof}
Let $d =\deg_{\mathrm{loc}}(p)$. Then $p \in \mathbb{R}[\mathbf{x}^{[0]}]_d \otimes \cdots \otimes \mathbb{R}[\mathbf{x}^{[n]}]_d$.
Since 
$$\dim\left(\mathbb{R}[\mathbf{x}^{[i]}]_d\right) = \binom{d+m}{d}$$
for all $i \in [n]$, $p$ is a conic combination of at most $ \binom{d+m}{d}^{n+1}$ elementary products with factors from the local cones  by Carath\'eodory's Theorem (see for example \cite[Theorem 2.3]{Ba02}). From the proof of Theorem \ref{thm:omega-dec}, we have that 
$$
\seprank_{\Omega}(p) \leq \binom{d + m}{d}^{n+1}.
$$
The result now follows from Corollary \ref{cor:sep-ranks}.
\end{proof}

\subsection{Separations}
\label{ssec:sep}
Here we will show \emph{separations} between the ranks, which we will define shortly. 
Throughout this section we will consider separable decompositions only with respect to the local sos cones. 
We know from Proposition \ref{pro:inequalitites} that the separable rank upper bounds both the rank and sos-rank.  
Here we will show that a reverse inequality is impossible: in particular, there are no functions $f,g\colon \mathbb{N} \to \mathbb{N}$ such that
\be \nn \seprank_{\Lambda_1}(p) \leq f\big(\sosrank_{\Lambda_1}(p)\big) \quad \textrm{ and } \quad \sosrank_{\Lambda_1}(p) \leq g\big(\rank_{\Lambda_1}(p)\big)\ee
for all $m \in \mathbb{N}$ and polynomials $p \in \mathbb{R}[\mathbf{x}^{[0]}, \mathbf{x}^{[1]}]$ with $\mathbf{x}^{[i]} \coloneqq (x^{[i]}_1, \ldots, x^{[i]}_m)$. 
This  is called a \emph{separation} between $\sosrank$ and $\seprank$, or $\rank$ and $\sosrank$, respectively. 
We prove the separations by a reduction to matrix factorizations of entrywise nonnegative matrices, which themselves exhibit separations \cite{Fa14, Go12}.

For this reason, we focus on the subspace of $(n+1)$-quadratic forms in $\mathcal{P}$ and relate it with tensors. For $T \in  \mathbb{R}^m \otimes \cdots \otimes \mathbb{R}^m $ we define the polynomial
\be\label{eq:pT}
p_T \coloneqq  \sum_{j_0, \ldots, j_n = 1}^{m} T_{j_0, \ldots, j_n} \left(x^{[0]}_{j_0} \right)^2 \cdots \left(x^{[n]}_{j_n} \right)^2\in\mathbb R[x^{[0]},\ldots, x^{[n]}].
\ee
There is a one-to-one correspondence between the tensor $T$ and the polynomial $p_T$. 
In addition, entrywise nonnegativity of $T$ fully characterises the 
nonnegativity and the sos property of $p_T$: 

\begin{lemma}[Positivity correspondence between tensors and polynomials]
\label{lem:correspondencePositivity}
The map
\begin{align*}
\mathbb{R}^m \otimes \cdots \otimes \mathbb{R}^m &\to \mathcal{P}\\ T &\mapsto p_T
\end{align*}
(where $p_T$ is given in Equation \eqref{eq:pT}) is linear and injective. In addition, the following  are equivalent:
\begin{enumerate}[label=(\roman*),ref=(\roman*),leftmargin=*]
	\item \label{lem:correspondencePositivity:i}
	$T$ is entrywise nonnegative.
	\item  \label{lem:correspondencePositivity:ii}
	$p_T$ is a sum of squares.
	\item  \label{lem:correspondencePositivity:iii}
	$p_T$ is globally nonnegative (as a polynomial function).
\end{enumerate}
\end{lemma}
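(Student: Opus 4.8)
The plan is to establish linearity and injectivity of the map first, and then prove the claimed equivalence through the cycle \ref{lem:correspondencePositivity:i} $\Rightarrow$ \ref{lem:correspondencePositivity:ii} $\Rightarrow$ \ref{lem:correspondencePositivity:iii} $\Rightarrow$ \ref{lem:correspondencePositivity:i}. Linearity of $T \mapsto p_T$ is immediate from Equation \eqref{eq:pT}, since the entries $T_{j_0, \ldots, j_n}$ enter as coefficients. For injectivity I would note that the monomials $\left(x^{[0]}_{j_0}\right)^2 \cdots \left(x^{[n]}_{j_n}\right)^2$ attached to distinct tuples $(j_0, \ldots, j_n) \in \{1, \ldots, m\}^{n+1}$ are pairwise distinct monomials of $\mathbb{R}[\mathbf{x}^{[0]}, \ldots, \mathbf{x}^{[n]}]$ — the key point being that a variable is labelled by both its site $i$ and its coordinate, so a change in any single index $j_i$ changes the monomial. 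Distinct monomials being linearly independent, $p_T = 0$ forces $T_{j_0, \ldots, j_n} = 0$ for every tuple, i.e.\ $T = 0$.

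For \ref{lem:correspondencePositivity:i} $\Rightarrow$ \ref{lem:correspondencePositivity:ii}: if $T$ is entrywise nonnegative, then $T_{j_0, \ldots, j_n} = \left(\sqrt{T_{j_0, \ldots, j_n}}\right)^2$, and folding the product of squares into the square of a product gives the explicit representation
$$
p_T = \sum_{j_0, \ldots, j_n = 1}^{m} \left( \sqrt{T_{j_0, \ldots, j_n}} \; x^{[0]}_{j_0} \cdots x^{[n]}_{j_n} \right)^2,
$$
so $p_T$ is a sum of squares. The implication \ref{lem:correspondencePositivity:ii} $\Rightarrow$ \ref{lem:correspondencePositivity:iii} is the trivial fact that a finite sum of squares of real polynomials takes nonnegative values at every real point.

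The only implication needing a genuine argument is \ref{lem:correspondencePositivity:iii} $\Rightarrow$ \ref{lem:correspondencePositivity:i}, and even here the idea is elementary: to read off the sign of a fixed entry $T_{j_0, \ldots, j_n}$, I would evaluate $p_T$ at the real point obtained by setting $x^{[i]}_{j_i} = 1$ for every $i \in [n]$ and all remaining variables equal to $0$. At this point a summand $T_{k_0, \ldots, k_n} \left(x^{[0]}_{k_0}\right)^2 \cdots \left(x^{[n]}_{k_n}\right)^2$ is nonzero only if $k_i = j_i$ for every $i$, so the evaluation equals exactly $T_{j_0, \ldots, j_n}$; global nonnegativity of $p_T$ then yields $T_{j_0, \ldots, j_n} \geq 0$, and since the tuple was arbitrary, $T$ is entrywise nonnegative. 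I do not expect any real obstacle in this proof; the only thing to state carefully is the bookkeeping underlying both the injectivity step and \ref{lem:correspondencePositivity:iii} $\Rightarrow$ \ref{lem:correspondencePositivity:i}, namely that distinct sites genuinely contribute independent variables, which is what makes the linear-independence and the evaluation arguments legitimate.
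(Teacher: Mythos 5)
Your proof is correct and follows essentially the same route as the paper's: linearity and injectivity from the distinctness of the monomials, the explicit sum-of-squares representation for \ref{lem:correspondencePositivity:i} $\Rightarrow$ \ref{lem:correspondencePositivity:ii}, and evaluation at the point with $x^{[i]}_{j_i}=1$ and all other variables zero (the paper writes this as $p(e_{j_0},\ldots,e_{j_n})=T_{j_0,\ldots,j_n}$) for \ref{lem:correspondencePositivity:iii} $\Rightarrow$ \ref{lem:correspondencePositivity:i}. No gaps.
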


\begin{proof}
Linearity and injectivity are immediate (each entry of $T$ clearly gives rise to a different monomial). 

The implications \ref{lem:correspondencePositivity:i} $\Rightarrow$ \ref{lem:correspondencePositivity:ii} $\Rightarrow$ \ref{lem:correspondencePositivity:iii} are clear, since a nonnegative tensor $T$ generates a sum of squares,  since every sum of squares  is globally nonnegative. For \ref{lem:correspondencePositivity:iii} $\Rightarrow$ \ref{lem:correspondencePositivity:i} assume that $T$ is not nonnegative, so there exist $j_0, \ldots, j_n$ such that
$T_{j_0, \ldots, j_n} < 0$. Then
$$
p(e_{j_0}, \ldots, e_{j_n}) = T_{j_0, \ldots, j_n} < 0,$$
which shows that $p$ is not nonnegative.
\end{proof}

In order to ``borrow'' the separations of tensor decompositions to derive separations of polynomial decompositions, we now define decompositions of tensors, which were introduced  in \cite{De19d}.

\begin{definition}[Invariant decompositions of tensors \cite{De19d}]\label{def:tensor} 
Let $T\in \mathbb{R}^{m}\otimes \cdots \otimes \mathbb{R}^{m}$.
\begin{enumerate}[label=(\roman*),ref=(\roman*),leftmargin=*]
\item \label{def:tensor:i}
An \emph{$(\Omega,G)$-decomposition of $T$} is given by families
$$\mathcal{T}^{[i]} = \left(T^{[i]}_{\beta} \right)_{\beta \in \mathcal{I}^{\widetilde{\mathcal{F}}_i}}$$
where $T^{[i]}_{\beta} \in \mathbb{R}^{d}$ for all $i \in [n]$ and $\beta \in \mathcal{I}^{\mathcal{\widetilde{F}}_i},$ such that
$$ T = \sum_{\alpha \in \mathcal{I}^{\widetilde{\mathcal{F}}}} T^{[0]}_{\alpha_{\vert_0}} \otimes T^{[1]}_{\alpha_{\vert_1}} \otimes \cdots \otimes T^{[n]}_{\alpha_{\vert_n}} $$
and
$$T^{[gi]}_{{}^g \beta} = T^{[i]}_{\beta}$$
for all $i \in [n]$ and $\beta \in \mathcal{I}^{\widetilde{\mathcal{F}}_i}$. The minimal cardinality of $\mathcal{I}$ among all $(\Omega,G)$-decompositions is called the \emph{$(\Omega,G)$-rank} of $T$, denoted  $\rank_{(\Omega,G)}(T)$. 
\item 
A \emph{nonnegative $(\Omega,G)$-decomposition of $T$} is an $(\Omega,G)$-decomposition of $T$ where all local vectors $T_{\beta}^{[i]}$ have nonnegative entries. 
The corresponding rank is called the \emph{nonnegative $(\Omega,G)$-rank} of $T$, denoted  $\nnrank_{(\Omega,G)}(T).$
\item 
A \emph{positive semidefinite $(\Omega,G)$-decomposition of $T$} consists of positive semidefinite matrices (indexed by $\beta, \beta' \in \mathcal{I}^{\widetilde{\mathcal{F}}_i}$) $$ E_j^{[i]}\in \mathcal{M}_{\mathcal I^{\widetilde{\mathcal F}_i}}^{+}(\mathbb{R})$$ 
 for $i \in [n]$ and $j \in \{1,\ldots, m\}$ such that  $$\left(E_j^{[gi]}\right)_{{}^g\beta, {}^g\beta'}=\left(E_j^{[i]}\right)_{\beta,\beta'}$$ for all $i,g,j,\beta,\beta'$, and $$T_{j_0, \ldots, j_n}=\sum_{\alpha,\alpha'\in\mathcal I^{\widetilde{\mathcal F}}} \left(E_{j_0}^{[0]}\right)_{\alpha_{\mid_0}, \alpha'_{\mid_0}} \cdots \left(E_{j_n}^{[n]}\right)_{\alpha_{\mid_n}, \alpha'_{\mid_n}}$$ for all $j_0, \ldots, j_n$. 
The smallest cardinality of $\mathcal{I}$ among all positive semidefinite $(\Omega,G)$-decompositions is called the \emph{positive semidefinite $(\Omega,G)$-rank} of $T$, denoted ${\psdrank}_{(\Omega,G)}(T).$
\end{enumerate} 
\end{definition}

Note that in Ref.\ \cite{De19d}   all $(\Omega,G)$-decompositions are defined over complex numbers, whereas here we use real decompositions. 
This is however irrelevant for the purposes of Proposition \ref{pro:OmegaG-corr-tensors}. 
 
Every notion of invariant decomposition  of a tensor  $T$ can be associated to a notion of invariant decompositions of the corresponding polynomial $p_T,$ as we will show in the following proposition.

\begin{proposition}[Rank correspondence between tensors and polynomials]
\label{pro:OmegaG-corr-tensors}
Let $T \in  \mathbb{R}^m \otimes \cdots \otimes \mathbb{R}^m$ and the polynomial $p_T$ be given by Equation \eqref{eq:pT}. 
\begin{enumerate}[label=(\roman*),ref=(\roman*),leftmargin=*]
	\item \label{pro:OmegaG-corr-tensors:i}
	$\rank_{(\Omega,G)}(T)=\rank_{(\Omega,G)}(p_T).$
	\item \label{pro:OmegaG-corr-tensors:ii}
	 $\nnrank_{(\Omega,G)}(T)=\seprank_{(\Omega,G)}(p_T).$
	\item \label{pro:OmegaG-corr-tensors:iii}
	  $\psdrank_{(\Omega,G)}(T) \leq \sosrank_{(\Omega,G)}(p_T)$ with equality if $G$ acts freely on $[n]$.
\end{enumerate}
\end{proposition}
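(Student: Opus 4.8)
The statement has three parts, and each one asserts that a notion of invariant decomposition of the tensor $T$ matches (or bounds) the corresponding notion for the polynomial $p_T$ defined in Equation \eqref{eq:pT}. The strategy for all three is the same: the monomials $\left(x^{[0]}_{j_0}\right)^2 \cdots \left(x^{[n]}_{j_n}\right)^2$ appearing in $p_T$ form a linearly independent set (one for each multi-index $(j_0,\ldots,j_n)$), and moreover they factor as a product of the local monomials $\left(x^{[i]}_{j_i}\right)^2$. So giving an $(\Omega,G)$-decomposition of $p_T$ is, after restricting attention to the ``square monomial'' subspace, literally the same data as an $(\Omega,G)$-decomposition of $T$. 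The main work is to show one can always reduce an arbitrary decomposition of $p_T$ to one of this restricted form without increasing the rank.

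\textbf{Part \ref{pro:OmegaG-corr-tensors:i}.} For ``$\geq$'', start from an $(\Omega,G)$-decomposition of $T$ with local vectors $T^{[i]}_\beta\in\mathbb R^m$ and simply set $p^{[i]}_\beta(\mathbf x^{[i]})\coloneqq \sum_{j=1}^m \left(T^{[i]}_\beta\right)_j \left(x^{[i]}_j\right)^2$; linearity of the map $T\mapsto p_T$ (\cref{lem:correspondencePositivity}) and of the assignment $\alpha\mapsto$ product of locals gives an $(\Omega,G)$-decomposition of $p_T$ of the same rank, and the symmetry conditions transfer verbatim. For ``$\leq$'', take any $(\Omega,G)$-decomposition of $p_T$; apply to each local polynomial $p^{[i]}_\beta$ the linear projection $\Pi_i$ onto the span of $\left\{\left(x^{[i]}_j\right)^2 : j=1,\ldots,m\right\}$. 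Because $p_T$ lies in the tensor product of these spans, replacing each $p^{[i]}_\beta$ by $\Pi_i(p^{[i]}_\beta)$ leaves the sum equal to $p_T$ (all cross-terms vanish when one evaluates coefficients of monomials not of the pure-square form), and $\Pi_i$ commutes with the $G$-action since $G$ permutes the variables $x^{[i]}_j\mapsto x^{[gi]}_j$ in a way that respects the pure-square structure. Reading off the coefficients of $\Pi_i(p^{[i]}_\beta)$ gives vectors $T^{[i]}_\beta\in\mathbb R^m$, and injectivity of $T\mapsto p_T$ forces these to assemble into $T$. Hence $\textrm{rank}_{(\Omega,G)}(T)\le \textrm{rank}_{(\Omega,G)}(p_T)$.

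\textbf{Part \ref{pro:OmegaG-corr-tensors:ii}.} Identical bookkeeping, but now track positivity: a local polynomial $p^{[i]}_\beta$ lies in $\mathcal C^{[i]}_{\mathrm{sos}}$ and has only pure-square monomials $\left(x^{[i]}_j\right)^2$ iff its coefficient vector is entrywise nonnegative — indeed $\sum_j c_j \left(x^{[i]}_j\right)^2$ is sos iff all $c_j\ge 0$, and this survives the projection $\Pi_i$ since projecting an sos polynomial onto the pure-square part keeps the diagonal (pure-square) coefficients, which are nonnegative for any sos polynomial. So the ``$\le$'' direction needs the observation that $\Pi_i$ maps local sos polynomials to entrywise-nonnegative pure-square ones; the ``$\ge$'' direction is the trivial substitution as before. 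This yields $\textrm{nn-rank}_{(\Omega,G)}(T)=\textrm{sep-rank}_{(\Omega,G)}(p_T)$.

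\textbf{Part \ref{pro:OmegaG-corr-tensors:iii}.} This is the one requiring care, and I expect it to be the main obstacle. For ``$\le$'': given an sos $(\Omega,G)$-decomposition of $p_T$, i.e.\ a $G$-invariant family $\mathfrak q=(q_{\mathbf k})$ with $p_T=\sum_{\mathbf k} q_{\mathbf k}^2$ together with an $(\Omega,G)$-decomposition of $\mathfrak q$ of rank $|\mathcal I|$, I would invoke \cref{lem:gram}: the Gram matrix $M=\sum_k v_k v_k^t$ is psd and $G$-invariant with $\mathcal G(M)=p_T$. Because $p_T$ only involves pure-square monomials, one can restrict the Gram map to the monomial sub-basis $\left\{\prod_i x^{[i]}_{j_i}\right\}$ (degree-one-per-variable monomials), under which $M$ restricts to a psd, $G$-invariant matrix in $\mathcal M_{\mathcal I^{\widetilde{\mathcal F}_\bullet}}$-type blocks; then the $(\Omega,G)$-decomposition of $\mathfrak q$ translates, via $E^{[i]}_j$ built from the local Gram data indexed by the facets, into a positive semidefinite $(\Omega,G)$-decomposition of $T$ with the same $\mathcal I$. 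The subtlety is matching the index sets: $\mathfrak q$ carries an $\mathcal S=\mathcal S_0\times\cdots\times\mathcal S_n$ index in addition to the facet-indexed $\mathcal I$, and one must check that squaring-and-summing over $\mathcal S$ reproduces the $\sum_{\alpha,\alpha'}$ structure of a psd decomposition of $T$ without inflating $|\mathcal I|$. For the reverse inequality under the freeness hypothesis: given a psd $(\Omega,G)$-decomposition of $T$ with matrices $E^{[i]}_j$, reverse the Gram construction to build local polynomials $q^{[i]}_{k_i,\beta}$ (using Cholesky-type square roots of the $E^{[i]}_j$, chosen consistently along $G$-orbits — here is where freeness of the $G$-action on $[n]$, via \cref{rem:grac} \ref{rem:grac:vi}, lets one pick $G$-equivariant square roots). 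The hard point will be verifying that the resulting family $\mathfrak q$ genuinely satisfies the symmetry condition of \cref{def:sos-dec} and that $\sum_{\mathbf k}q_{\mathbf k}^2=p_T$, which amounts to unwinding the tensor contraction $\sum_{\alpha,\alpha'}\prod_i (E^{[i]}_{j_i})_{\alpha_{|_i},\alpha'_{|_i}}$ and recognising it as $\sum_{\mathbf k}(\sum_\alpha \prod_i q^{[i]}_{k_i,\alpha_{|_i}})^2$. I would handle this by first treating the case $G$ trivial (where it is a direct computation) and then averaging/equivariantizing using freeness, exactly as in the proof of \cref{thm:sos-dec} \ref{thm:sos-dec:ii}.
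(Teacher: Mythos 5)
Your part \ref{pro:OmegaG-corr-tensors:i} is correct and is essentially the paper's argument (the paper phrases your projection step as ``all other monomials must cancel in the total sum and can be omitted''). The genuine gap is in part \ref{pro:OmegaG-corr-tensors:ii}: the claim that ``projecting an sos polynomial onto the pure-square part keeps the diagonal (pure-square) coefficients, which are nonnegative for any sos polynomial'' is false. The coefficient of $x_j^2$ in a sum of squares is not a diagonal Gram entry in general --- it equals $M_{x_j,x_j}+2M_{1,x_j^2}$ --- and for instance $(1-x^2)^2=1-2x^2+x^4$ is sos with coefficient $-2$ at $x^2$. So your projection $\Pi_i$ does \emph{not} send local sos polynomials to entrywise nonnegative coefficient vectors, and the ``$\le$'' direction of (ii) does not follow. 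Two repairs: (a) first argue that the local polynomials can be taken of degree $2$, as the paper does, in which case the $x_j^2$-coefficient of $\sum_k\bigl(c_k+\sum_j a_{kj}x_j\bigr)^2$ is $\sum_k a_{kj}^2\ge 0$; or (b) replace coefficient extraction by \emph{evaluation} at standard basis vectors, setting $\bigl(T^{[i]}_\beta\bigr)_j\coloneqq p^{[i]}_\beta(e_j)\ge 0$ (nonnegativity of sos polynomials at points), which also reproduces $T$ since $T_{j_0,\ldots,j_n}=p_T(e_{j_0},\ldots,e_{j_n})$; this evaluation map would in fact handle (i) and (ii) uniformly.

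Part \ref{pro:OmegaG-corr-tensors:iii} is a plan rather than a proof, and the route you sketch for ``$\le$'' is not the productive one. Passing to the \emph{global} Gram matrix via \cref{lem:gram} discards exactly the data you need: a positive semidefinite $(\Omega,G)$-decomposition of $T$ consists of \emph{local}, facet-indexed psd matrices $E^{[i]}_j\in \mathcal{M}_{\mathcal I^{\widetilde{\mathcal F}_i}}^{+}(\mathbb{R})$, and restricting $M$ to a monomial sub-basis neither produces these nor controls $|\mathcal I|$. The paper works locally: for degree reasons the local polynomials of an sos $(\Omega,G)$-decomposition of $p_T$ may be taken linear, $q^{[i]}_{k,\beta}=\sum_j\bigl(B^{[i]}_j\bigr)_{k,\beta}x^{[i]}_j$, and then $E^{[i]}_j\coloneqq\bigl(B^{[i]}_j\bigr)^t B^{[i]}_j$ is already the desired psd decomposition with the same $\mathcal I$; the index-set mismatch you worry about resolves itself because the sos index $k$ becomes the row index of $B^{[i]}_j$ and is contracted away in $B^tB$. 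For the reverse direction your idea (factor $E^{[i]}_j=\bigl(B^{[i]}_j\bigr)^tB^{[i]}_j$ consistently along $G$-orbits, using freeness of the action on $[n]$) is the paper's; but do not ``average'' or ``equivariantize'' square roots --- an average of factorizations is not a factorization --- simply choose one factorization per orbit representative and transport it along the orbit.
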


\begin{proof}
\ref{pro:OmegaG-corr-tensors:i}. Let the families $\left(T^{[i]}_{\beta} \right)_{\beta \in \mathcal{I}^{\widetilde{\mathcal{F}}_i}}$ provide an  $(\Omega,G)$-decomposition of $T$ as in Definition \ref{def:tensor} \ref{def:tensor:i}.  
Now consider the families 
$$
\mathcal P^{[i]}\coloneqq \left(\Psi_{T^{[i]}_{\beta}}(\mathbf x^{[i]})\right)_{\beta\in \mathcal{I}^{\widetilde{\mathcal{F}}_i}}
$$
where for a vector  $V \in \mathbb{R}^m$  the $\Psi$ notation indicates $\Psi_V(\mathbf x) \coloneqq \sum_{j=1}^{m} V_j x_j^2$. 
It is immediate to see that these families 
provide an $(\Omega,G)$-decomposition of $p_T$, using the same index set $\mathcal I$.
Conversely, observe that every $(\Omega,G)$-decomposition of $p_T$ consists without loss of generality  of local polynomials of the form
$$p_{\beta}^{[i]}= \sum_{j =1}^{m} \left(T_{\beta}^{[i]}\right)_{j} \left(x_{j}^{[i]}\right)^2$$
for certain $T_{\beta}^{[i]} \in \mathbb{R}^m$. All other possible monomials will have to cancel out in the total product and sum, and can therefore be omitted.
Thus the $T_{\beta}^{[i]}$ give rise to an $(\Omega,G)$-decomposition of $T$, again with the same index set $\mathcal I$.

Statement \ref{pro:OmegaG-corr-tensors:ii}  is proven exactly as \ref{pro:OmegaG-corr-tensors:i}, and using the fact that the local polynomials of an sos ($\Omega,G)$-decomposition  of $p_T$ must all be of degree $2$,  and thus have nonnegative coefficients at all the $\left(x_j^{[i]}\right)^2$.

For \ref{pro:OmegaG-corr-tensors:iii} we start with an sos $(\Omega,G)$-decomposition of $p_T$, where every local polynomial $q^{[i]}_{k, \beta}$ can (for degree reasons) be assumed to  be of the form
$$q^{[i]}_{k, \beta} = \sum_{j = 1}^m \left(B^{[i]}_j\right)_{k, \beta} x_j^{[i]}.$$  Now the matrices 
$$
E_j^{[i]} \coloneqq  \left(B_{j}^{[i]}\right)^t \left(B_{j}^{[i]}\right) \geqslant 0
$$ 
give rise to  a  positive semidefinite $(\Omega,G)$-decomposition of $T$ of the same rank as the initial decomposition. This can easily be seen by computing   the coefficient of $p_T$ at each monomial $(x_{j_0}^{[0]})^2\cdots (x_{j_n}^{[n]})^2$, and checking that it arises from the sos $(\Omega,G)$-decomposition.

For the reverse inequality, we assume that $G$ acts freely on $[n]$. We start with a positive semidefinite $(\Omega,G)$-decomposition of $T$, i.e.\
$$T_{j_0, \ldots, j_n} = \sum_{\alpha, \alpha' \in \mathcal{I}^{\widetilde{\mathcal{F}}}} \left(E_{j_0}^{[0]}\right)_{\alpha_{\vert_0}, \alpha'_{\vert_0}} \cdots \left(E_{j_n}^{[n]}\right)_{\alpha_{\vert_n}, \alpha'_{\vert_n}}$$
where all $E_{j}^{[i]}$ are positive semidefinite. Decompose $E_{j}^{[i]} = \left(B_{j}^{[i]}\right)^t \left(B_{j}^{[i]}\right)$ with the additional constraint that
$$
\left(B_{j}^{[gi]}\right)_{k, {}^g \beta} = \left(B_{j}^{[i]}\right)_{k, \beta}.
$$ 
Since $G$ acts freely on $[n],$ we can just choose certain $B^{[i]}_j$ and define the $B^{[gi]}_j$ along the orbit by that formula.
Now defining
$$ 
q^{[i]}_{(j,k), \beta} \coloneqq \left(\widetilde{B}_{j}^{[i]}\right)_{k, \beta} x_{j}^{[i]} 
$$
leads to a sos $(\Omega,G)$-decomposition of $p_T$ with $\sosrank_{(\Omega,G)}(p_T) \leq \vert\mathcal{I}\vert$. 
\end{proof}

\begin{remark}[The importance of being free]
The proof of Proposition \ref{pro:OmegaG-corr-tensors} \ref{pro:OmegaG-corr-tensors:iii} does not work in reverse direction if we do not assume that $G$ acts freely on $[n]$. Assume there exists $e \neq g \in G$ and $i \in [n]$ such that $gi = i$. Then, the construction into a symmetric factorization $\widetilde{B}_j^{[i]}$ implies that
$$\left(E_j^{[gi]}\right)_{{}^g \beta, \beta'} = \left(\widetilde{B}_j^{[gi]}\right)^{t}_{{}^g \beta, -} \left(\widetilde{B}_j^{[gi]}\right)_{-, \beta'} = \left(\widetilde{B}_j^{[i]}\right)^{t}_{\beta, -} \left(\widetilde{B}_j^{[i]}\right)_{-, \beta'} = \left(E_j^{[i]}\right)_{\beta, \beta'}$$
which is stronger than the symmetry of $E_j^{[i]}$ given in a positive semidefinite $(\Omega,G)$-decomposition.
\demo
\end{remark}

We now show that there is a separation between the ranks already for decompositions on the single edge. (Note that in the following corollary $p_m$ is a polynomial on the single edge). 

\begin{corollary}[Rank separations on the single edge] \label{cor:rank-sep} \quad\nobreakpar
Let $ p_m \in \mathbb{R}[x^{[0]}_1, \ldots ,x^{[0]}_m, x^{[1]}_1, \ldots, x^{[1]}_m]$. 
 \begin{enumerate}[label=(\roman*),label=(\roman*),leftmargin=*]
	\item \label{cor:rank-sep:i}
	There exists   a sequence of polynomials  $(p_m)_{m \in \mathbb{N}}$   such that 
	$$
	\rank_{\Lambda_1}(p_m) = 3 ,  \quad \sosrank_{\Lambda_1}(p_m) = 2  \quad \textrm{ and } \log_2(m)\leq \seprank_{\Lambda_1}(p_m) <\infty.
	$$

	\item \label{cor:rank-sep:ii}
			There exists  a sequence of polynomials  $(p_m)_{m \in \mathbb{N}}$  such that 
	$\rank_{\Lambda_1}(p_m) = 3$ and 
	$$
	\lim_{m \to \infty}  \sosrank_{\Lambda_1}(p_m)  = \infty 
	$$
	(where of course $\sosrank_{\Lambda_1}(p_m) <\infty$).
\end{enumerate}
\end{corollary}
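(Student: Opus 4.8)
The plan is to reduce both statements to known separation phenomena for matrix factorizations, using the dictionary of \cref{pro:OmegaG-corr-tensors}. The first step is to unfold what the single edge $\Lambda_1$ with the trivial group gives: here $\widetilde{\mathcal F}=\{\{0,1\}\}$, so a tensor $T\in\mathbb R^m\otimes\mathbb R^m$ is just an $m\times m$ matrix, and an $(\Omega,G)$-decomposition, a nonnegative $(\Omega,G)$-decomposition, and a positive semidefinite $(\Omega,G)$-decomposition of $T$ are exactly an ordinary rank factorization, a nonnegative factorization, and a psd factorization of the matrix $T$. Since the trivial group acts freely on $[n]$, \cref{pro:OmegaG-corr-tensors} then yields, for the polynomial $p_T$ of \eqref{eq:pT},
$$\textrm{rank}_{\Lambda_1}(p_T)=\textrm{rank}(T),\qquad \textrm{sep-rank}_{\Lambda_1}(p_T)=\textrm{nn-rank}(T),\qquad \textrm{sos-rank}_{\Lambda_1}(p_T)=\textrm{psd-rank}(T),$$
where, by \cref{lem:correspondencePositivity}, the last two quantities are finite precisely when $T$ is entrywise nonnegative (and then bounded by $m$). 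So it suffices to exhibit, for each statement, a family of entrywise nonnegative matrices $T_m$ with the prescribed matrix rank, nonnegative rank and psd-rank, and to set $p_m\coloneqq p_{T_m}$.

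For \cref{cor:rank-sep:i} I would take, for $m\ge 3$, the $m\times m$ matrix $T_m$ with $(T_m)_{ij}=(i-j)^2$. It is entrywise nonnegative. The expansion $(i-j)^2=i^2-2ij+j^2$ writes it as a sum of three rank-one matrices whose row factors $(i^2)_i,(i)_i,(1)_i$, and likewise column factors, are linearly independent by a Vandermonde argument, so $\textrm{rank}(T_m)=3$. Setting $A_i\coloneqq\left(\begin{smallmatrix}i^2&i\\ i&1\end{smallmatrix}\right)\succeq 0$ and $B_j\coloneqq\left(\begin{smallmatrix}1&-j\\ -j&j^2\end{smallmatrix}\right)\succeq 0$ gives $\langle A_i,B_j\rangle=(i-j)^2$, hence $\textrm{psd-rank}(T_m)\le 2$; and $\textrm{psd-rank}(T_m)\ge 2$ since a matrix of psd-rank $1$ has rank $1$, so $\textrm{psd-rank}(T_m)=2$. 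The remaining input is the known logarithmic lower bound $\textrm{nn-rank}(T_m)\ge\log_2 m$ on the nonnegative rank of a one-dimensional squared-distance matrix, which I would cite from \cite{Fa14,Go12}. Translating through the three displayed identities gives exactly $\textrm{rank}_{\Lambda_1}(p_m)=3$, $\textrm{sos-rank}_{\Lambda_1}(p_m)=2$, and $\log_2 m\le\textrm{sep-rank}_{\Lambda_1}(p_m)<\infty$.

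For \cref{cor:rank-sep:ii} I would instead let $T_m$ be the slack matrix of a convex polygon with $m$ vertices, e.g.\ the regular $m$-gon: it is an $m\times m$ entrywise nonnegative matrix of rank $\dim+1=3$, giving $\textrm{rank}_{\Lambda_1}(p_m)=3$, while it is known that the psd-rank of such slack matrices is unbounded in $m$ (polygons admit no uniformly bounded semidefinite lift), which I would again cite from \cite{Fa14,Go12}. Hence $\textrm{sos-rank}_{\Lambda_1}(p_m)=\textrm{psd-rank}(T_m)\to\infty$, and it stays finite because $T_m$ is nonnegative.

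The translation through \cref{pro:OmegaG-corr-tensors} and the rank/psd-rank computations for the two concrete families are routine; the substance is imported from the matrix-factorization literature. The main obstacle is precisely that the two quantitative inputs — the $\Omega(\log m)$ lower bound on the nonnegative rank of $\bigl((i-j)^2\bigr)_{ij}$ for \cref{cor:rank-sep:i}, and the unboundedness of the psd-rank of polygon slack matrices for \cref{cor:rank-sep:ii} — are nontrivial theorems rather than elementary calculations. A secondary point to check is that the constant in the cited nonnegative-rank bound is strong enough for the clean inequality $\log_2 m\le\textrm{sep-rank}_{\Lambda_1}(p_m)$; if it is only $\Omega(\log m)$ one reindexes the family accordingly.
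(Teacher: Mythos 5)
Your proposal is correct and follows essentially the same route as the paper: both reduce to the matrix-factorization separations via \cref{pro:OmegaG-corr-tensors}, using the Euclidean distance matrix $\bigl((i-j)^2\bigr)_{ij}$ for part (i) and the slack matrix of an $m$-gon for part (ii), with the quantitative inputs imported from \cite{Fa14,Go12}. The only difference is that you work out the rank-$3$ and psd-rank-$2$ certificates explicitly where the paper simply cites them, which is a welcome addition but not a different argument.
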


\begin{proof}
\ref{cor:rank-sep:i}. 
The Euclidean distance matrix $M_m \in \mathcal{M}_m \cong \mathbb{R}^m \otimes \mathbb{R}^m$ which is defined as
$$\left(M_m\right)_{i,j} = (i-j)^2$$
fulfils (see \cite[Example 5.17]{Fa14} and \cite[Section 5]{De19d} for details) 
$$\rank_{\Lambda_1}(M_m) = 3,  \quad \psdrank_{\Lambda_1}(M_m) = 2, \textrm{ and } \quad \nnrank_{\Lambda_1}(M_m) \geq \log_2(m)$$
since all explicit examples are given as a real matrix factorization.
Defining $p_m \coloneqq p_{M_m}$ and using Proposition \ref{pro:OmegaG-corr-tensors} shows the statement.

\ref{cor:rank-sep:ii} is similar to \ref{cor:rank-sep:i}, this time using the slack matrix of an $m$-gon  (see \cite[Example 5.14]{Fa14}).
\end{proof}

These statements imply that there cannot exist functions $f,g \colon \mathbb{N} \to \mathbb{N}$ such that
\be \nn \seprank_{\Lambda_1}(p) \leq f\big(\sosrank_{\Lambda_1}(p)\big) \quad \textrm{ and } \quad \sosrank_{\Lambda_1}(p) \leq g\big(\rank_{\Lambda_1}(p)\big)\ee
holds for all $m \in \mathbb{N}$ and all polynomials $p \in \mathbb{R}[\mathbf{x}^{[0]}, \mathbf{x}^{[1]}]$ with $\mathbf{x}^{[i]} \coloneqq (x^{[i]}_1, \ldots, x^{[i]}_m)$.
This also holds true for polynomials of bounded degree, since   $\deg(p_m) = 4$ in the above construction.

But this also  immediately leads to the   question of whether there are separations between the ranks of polynomials with a bounded number of variables and no bound on the degree. 
In this setting  there does not exist a one-to-one correspondence between polynomials and  Gram matrices (as that of Example \ref{ex:sepPoly}). 
We believe that separations will again appear in the simplest setting   and leave this question as a conjecture.

\begin{conjecture}
\label{conj:sep}
There exist no functions $f,g,h: \mathbb{N} \to \mathbb{N}$ 
such that for all $p \in \mathbb{R}[x,y]$  (in particular, independently of the degree of $p$) 
\begin{enumerate}[label=(\roman*),ref=(\roman*),leftmargin=*]
	\item \label{conj:sep:i}
	$\seprank_{\Lambda_1}(p) \leq f(\rank_{\Lambda_1}(p))$
		\item \label{conj:sep:ii}
		$\seprank_{\Lambda_1}(p) \leq g(\sosrank_{\Lambda_1}(p))$
		\item \label{conj:sep:iii}
		$\sosrank_{\Lambda_1}(p) \leq h(\rank_{\Lambda_1}(p))$
\end{enumerate}  
where $p$ is of course separable in \ref{conj:sep:i} and \ref{conj:sep:ii}, and a sum of squares in \ref{conj:sep:iii}. The separable rank is again meant with respect to the local sos-cones.
\end{conjecture}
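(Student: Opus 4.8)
The plan is to transfer the separations of \cref{cor:rank-sep} from the ``wide'' regime (many variables, degree $4$) to the ``deep'' regime (two variables $x,y$, unbounded degree) by encoding a suitable sequence of entrywise nonnegative matrices $(M_m)_{m\in\N}$ into a single pair of variables using high powers. Concretely, for a nonnegative $M_m\in\R^m\otimes\R^m$ one sets
$$
\widehat p_m(x,y)\coloneqq \sum_{i,j=1}^m (M_m)_{i,j}\,x^{2(i-1)}\,y^{2(j-1)}\ \in\ \R[x,y],
$$
with the Euclidean distance matrices as candidates for \ref{conj:sep:i} and \ref{conj:sep:ii} and the slack matrices of $m$-gons for \ref{conj:sep:iii}, exactly as in the proof of \cref{cor:rank-sep}. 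Because the monomials $1,x^2,x^4,\dots$ are linearly independent, replacing every local polynomial in a $\Lambda_1$-decomposition of $\widehat p_m$ by its even part of degree at most $2(m-1)$ leaves the decomposition unchanged and recovers $M_m$ as an ordinary matrix factorization; hence $\textrm{rank}_{\Lambda_1}(\widehat p_m)=\textrm{rank}(M_m)$, which is the constant $3$ for both families. For the upper bounds one plugs the local vectors of a nonnegative (resp.\ positive semidefinite) matrix factorization of $M_m$ into the $\Psi$-substitution of \cref{pro:OmegaG-corr-tensors}; this yields $\textrm{sep-rank}_{\Lambda_1}(\widehat p_m)\le \textrm{nn-rank}(M_m)$ and, after the matrix manipulation $\widehat p_m(x,y)=\langle\mathcal A(x),\mathcal B(y)\rangle$ with $\mathcal A(x)=\sum_i A_i x^{2(i-1)}\succeq 0$ and $\mathcal B$ analogous, $\textrm{sos-rank}_{\Lambda_1}(\widehat p_m)\le \textrm{psd-rank}(M_m)$. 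In particular all three ranks are finite, which already pins down two of the three quantities in each separation.

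The crux is the \emph{lower} bounds, namely $\textrm{sep-rank}_{\Lambda_1}(\widehat p_m)\to\infty$ for \ref{conj:sep:i}, \ref{conj:sep:ii} and $\textrm{sos-rank}_{\Lambda_1}(\widehat p_m)\to\infty$ for \ref{conj:sep:iii}. Here the two-variable setting is genuinely harder than the bounded-degree one of \cref{pro:OmegaG-corr-tensors}, because the local polynomials of a separable $\Lambda_1$-decomposition need only be nonnegative \emph{as functions}, not have nonnegative coefficients, so the matrix bound $\textrm{nn-rank}(M_m)\gtrsim\log m$ of \cite{Fa14} does not transfer. What one does extract, after passing to even parts and reading off coefficients, is a factorization of $M_m$ over the cone $K_{2m}\subseteq\R^{2m-1}$ of coefficient sequences of univariate polynomials of degree $\le 2(m-1)$ that are nonnegative on $[0,\infty)$; likewise an sos $\Lambda_1$-decomposition of $\widehat p_m$ of size $r$ gives, via $\widehat p_m(x,y)=\langle\mathcal A(x),\mathcal B(y)\rangle$ with $\mathcal A,\mathcal B$ matrix polynomials positive semidefinite on $\R$ and of size $r$, a factorization of $M_m$ over the matrix-valued analogue of $K_{2m}$ with blocks of size $r$. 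The goal would then be to lower-bound these cone-factorization ranks, in the sense of cone factorizations à la Gouveia, Parrilo and Thomas, by something that still tends to infinity.

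This is the step I expect to stall, and why the statement is left as a conjecture. By the Markov--Luk\'acs representation $p=\sigma_0+t\,\sigma_1$ with $\sigma_0,\sigma_1$ sums of squares (and its matrix version), the cone $K_{2m}$ is positive-semidefinite-representable with a lift of size $\Theta(m)$; hence a $K_{2m}$-factorization of $M_m$ of size $r$ only yields $\textrm{psd-rank}(M_m)\le\Theta(m)\cdot r$, which is vacuous since $M_m$ is an $m\times m$ matrix and $\textrm{psd-rank}(M_m)\le m$ anyway. Put differently, encoding $m$ linearly independent nonnegative local polynomials forces degree $\Theta(m)$, and this growth exactly cancels the rank separations available for $m\times m$ matrices; one even has to worry that the Euclidean distance matrix, having $\textrm{psd-rank}=2$, already fails to separate $\textrm{sep-rank}_{\Lambda_1}$ from $\textrm{rank}_{\Lambda_1}$ in this encoding, so part of the task is to identify a matrix family whose factorization rank over $K_{2m}$ itself provably grows. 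Breaking the impasse seems to require either an encoding that exploits unbounded degree without paying a linear price in the number of usable local polynomials, or a direct lower-bound method for $\textrm{sep-rank}_{\Lambda_1}$ and $\textrm{sos-rank}_{\Lambda_1}$ -- for instance a dimension or volume estimate showing that the convex set of polynomials in $\R[x,y]$ with a given such rank is too ``thin'' to contain a sufficiently rich family -- that does not route through $\textrm{psd-rank}$. Making any such estimate quantitative enough to beat the degree-versus-size cancellation is, I believe, the genuine obstacle.
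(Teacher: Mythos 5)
The statement you were asked to prove is labelled a \emph{conjecture} in the paper, and the paper offers no proof of it: the authors explicitly write that they ``believe that separations will again appear in the simplest setting and leave this question as a conjecture.'' Your proposal, correctly, does not claim to prove it either; it is an analysis of why the natural attack fails, and that analysis is sound and consistent with the paper's decision to leave the statement open. In particular, you correctly observe that the only separation machinery the paper has (\cref{cor:rank-sep} via \cref{pro:OmegaG-corr-tensors}) lives in the bounded-degree, many-variable regime, where the local sos cone on degree-$2$ forms coincides with the nonnegative orthant and the matrix lower bounds of \cite{Fa14,Go12} transfer verbatim. Your diagnosis of why the transfer breaks in $\mathbb{R}[x,y]$ is the right one: encoding an $m\times m$ matrix into monomials $x^{2(i-1)}y^{2(j-1)}$ replaces the orthant by the cone of univariate polynomials nonnegative on the half-line, and by Markov--Luk\'acs that cone has a semidefinite lift whose size grows only linearly in the degree, so the resulting bound on $\textrm{psd-rank}(M_m)$ is dominated by the trivial bound $m$ and no separation survives. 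The upper-bound half of your argument (finiteness and the inequalities $\textrm{sep-rank}_{\Lambda_1}(\widehat p_m)\le\textrm{nn-rank}(M_m)$, $\textrm{sos-rank}_{\Lambda_1}(\widehat p_m)\le\textrm{psd-rank}(M_m)$, $\textrm{rank}_{\Lambda_1}(\widehat p_m)=\textrm{rank}(M_m)$) is correct, but, as you say, it only pins down the quantities that were never in doubt.

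The one thing to be clear about is that what you have written is a (good) account of an obstruction, not a proof, and it should not be presented as establishing the conjecture. If you want to push further, the two escape routes you name --- an encoding that does not pay a linear degree price per usable local polynomial, or a lower-bound technique for $\textrm{sep-rank}_{\Lambda_1}$ and $\textrm{sos-rank}_{\Lambda_1}$ that does not factor through $\textrm{psd-rank}$ or through cone factorizations over psd-representable cones --- are exactly where any genuine progress would have to come from; nothing in the paper supplies either.
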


\section{Approximate polynomial decompositions: Disappearance of separations}
\label{sec:approx}

In this section we study $(\Omega,G)$-ranks of homogeneous polynomials in the approximate case. 
To this end, we will first show that approximations of polynomials  can be related to approximations of matrices (Lemma \ref{lem:gramapprox}), and will  leverage this result together with those of \cite{De20} to obtain  approximations for invariant separable polynomials (Theorem \ref{thm:approxSeparable}). 

To this end, we start by considering homogenized polynomials from $\mathcal{P}_d$. 
More specifically, restricting to $\deg_{\mathrm{loc}}(p) = d$, we study approximations in the space
\be \nn \mathcal{P}^{h}_{d} \coloneqq \mathbb{R}[\mathbf{x}^{[0]}]_d^{h} \otimes \cdots \otimes \mathbb{R}[\mathbf{x}^{[n]}]_d^{h} \ee
where $\mathbb{R}[\mathbf{x}]_d^{h}$ is the space of homogeneous polynomials of degree $d,$ and each $\mathbf{x}^{[i]} = (x^{[i]}_0, \ldots, x^{[i]}_m)$ is a vector of $m+1$ variables (note that we have introduced additional variables $x^{[i]}_0$ in contrast to $\mathcal{P}_d$).
Each homogeneous polynomial $q \in \mathcal{P}_d^{h}$ corresponds to some $p \in \mathcal{P}_d$ by setting the variables $x^{[0]}_0, \ldots, x^{[n]}_0$ to $1$. On the other hand, every $p \in \mathcal{P}_d$ can be multi-homogenized by substituting
\be \nn \left(x_{1}^{[i]}\right)^{\alpha_1} \cdots \left(x_{m}^{[i]}\right)^{\alpha_m} \mapsto \left(x_0^{[i]}\right)^{d - \vert\alpha\vert} \cdot \left(x_{1}^{[i]}\right)^{\alpha_1} \cdots \left(x_{m}^{[i]}\right)^{\alpha_m}\ee
for every local monomial in $\mathfrak{m}_d(\mathbf{x}^{[i]})$. 
For the rest of the section, we denote the basis of homogenized monomials by
\be \nn \mathfrak{m}^{h}_{n,d}(\mathbf{x}) \coloneqq \mathfrak{m}^{h}_d(\mathbf{x}^{[0]}) \otimes \cdots \otimes \mathfrak{m}^{h}_d(\mathbf{x}^{[n]})\ee
where each $\mathfrak{m}^{h}_d(\mathbf{x}^{[i]})$ is the vector of all monomials $\left(\textbf{x}^{[i]}\right)^\alpha$ with $\vert\alpha\vert = d$.
 We will  also consider  the  Gram map  
\begin{align*}
 \mathcal{G}:  \mathcal{M}_D^{\otimes n+1} & \to \mathcal{P}^{h}_{d} \\
 M &\mapsto (\mathfrak{m}_{n,d}^h)^t M \mathfrak{m}_{n,d}^h, 
\end{align*}
and the set $$\mathbb{S} \coloneqq \mathbb{S}^{m} \times \mathbb{S}^{m} \times \cdots \times \mathbb{S}^{m},$$ where the product runs over the set $[n]$, and where $\mathbb{S}^{m} \subseteq \mathbb{R}^{m+1}$ is the unit sphere with respect to the Euclidean norm. 

We will consider approximations of the polynomial $p$ with respect to the maximal value attained among all   $\mathbf{a} \coloneqq (\mathbf{a}^{[0]}, \ldots, \mathbf{a}^{[n]}) \in \mathbb{S}$. 
More specifically, we define the \emph{infinity norm} of $p$ as
$$
\Vert p \Vert_{\infty} \coloneqq \sup_{\mathbf{a} \in \mathbb{S}} \vert p(\mathbf{a}^{[0]}, \ldots, \mathbf{a}^{[n]})\vert. 
$$
It is easy to check, that $\Vert \cdot \Vert_{\infty}$ satisfies the triangle inequality and is absolutely scalable. 
In addition, $\Vert \cdot \Vert_{\infty}$ is positive definite since, if $\Vert p \Vert_{\infty} = 0$ and $\textbf{b}^{[0]}, \ldots, \textbf{b}^{[n]} \in \mathbb{R}^{m}\setminus\{0\}$, then there exist $\lambda_0, \ldots, \lambda_n \in \mathbb{R}$ and $(\textbf{a}^{[0]}, \ldots, \textbf{a}^{[n]}) \in \mathbb{S}$ such that
$$(\textbf{b}^{[0]}, \ldots, \textbf{b}^{[n]}) = (\lambda_0 \textbf{a}^{[0]}, \ldots, \lambda_n \textbf{a}^{[n]}).$$
Multi-homogeneity of $p$ implies $$p(\textbf{b}^{[0]}, \ldots, \textbf{b}^{[n]}) = \lambda_0^d \cdots \lambda_n^d \cdot p(\textbf{a}^{[0]}, \ldots, \textbf{a}^{[n]}) = 0,$$
which clearly implies $p = 0$.
Hence, $\Vert \cdot \Vert_{\infty}$ is a norm  on the space $\mathcal{P}^{h}_{d}$.

We start with the following preparatory lemma, which relates the infinity norm of $p$ with the Schatten $2$-norm of its Gram matrix $M$.

\begin{lemma}[Norm of polynomial and of Gram matrix]
\label{lem:gramapprox}
Let $\mathbf{a} \in \mathbb{S}$. Then
$\Vert \mathfrak{m}^{h}_{n,d}(\mathbf{a})\Vert_2 \leq 1.$
Moreover,
\be \nn \Vert \mathcal{G}(M) \Vert_\infty \leq \sigma_{\max}(M) \leq \Vert M \Vert_2\ee
where $\sigma_{\max}(M)$ denotes the maximal singular value of $M$. 
\end{lemma}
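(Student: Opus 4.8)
The plan is to prove the two inequalities in turn. First, for the bound $\Vert \mathfrak{m}^{h}_{n,d}(\mathbf{a})\Vert_2 \leq 1$: note that $\mathfrak{m}^{h}_{n,d}(\mathbf{a}) = \mathfrak{m}^{h}_d(\mathbf{a}^{[0]}) \otimes \cdots \otimes \mathfrak{m}^{h}_d(\mathbf{a}^{[n]})$, and since the Euclidean norm is multiplicative under tensor products, it suffices to show $\Vert \mathfrak{m}^{h}_d(\mathbf{a}^{[i]})\Vert_2 \leq 1$ for each $i$, where $\mathbf{a}^{[i]} \in \mathbb{S}^m$. Now $\Vert \mathfrak{m}^{h}_d(\mathbf{a}^{[i]})\Vert_2^2 = \sum_{|\alpha| = d} \left(\binom{d}{\alpha}^{?}\right)$—wait, here the monomial basis vector has entries $(\mathbf{a}^{[i]})^\alpha$ without multinomial coefficients, so $\Vert \mathfrak{m}^{h}_d(\mathbf{a}^{[i]})\Vert_2^2 = \sum_{|\alpha| = d} (\mathbf{a}^{[i]})^{2\alpha}$. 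Since $\sum_{|\alpha|=d}\binom{d}{\alpha}(\mathbf{a}^{[i]})^{2\alpha} = \left(\sum_j (a^{[i]}_j)^2\right)^d = 1$ by the multinomial theorem and $\binom{d}{\alpha}\geq 1$, each term $(\mathbf{a}^{[i]})^{2\alpha} \leq \binom{d}{\alpha}(\mathbf{a}^{[i]})^{2\alpha}$, so $\sum_{|\alpha|=d}(\mathbf{a}^{[i]})^{2\alpha} \leq 1$. This gives $\Vert \mathfrak{m}^{h}_d(\mathbf{a}^{[i]})\Vert_2 \leq 1$ and hence the tensor-product bound.

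Next, for $\Vert \mathcal{G}(M) \Vert_\infty \leq \sigma_{\max}(M)$: for any $\mathbf{a} \in \mathbb{S}$, writing $v \coloneqq \mathfrak{m}^{h}_{n,d}(\mathbf{a})$, we have $\mathcal{G}(M)(\mathbf{a}) = v^t M v$, so $|\mathcal{G}(M)(\mathbf{a})| = |v^t M v| \leq \Vert v\Vert_2 \cdot \Vert Mv\Vert_2 \leq \Vert v\Vert_2 \cdot \sigma_{\max}(M) \cdot \Vert v\Vert_2 = \sigma_{\max}(M)\Vert v\Vert_2^2 \leq \sigma_{\max}(M)$, using the first part for the last step and the operator-norm characterization $\Vert Mv\Vert_2 \leq \sigma_{\max}(M)\Vert v\Vert_2$. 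Taking the supremum over $\mathbf{a} \in \mathbb{S}$ gives $\Vert \mathcal{G}(M) \Vert_\infty \leq \sigma_{\max}(M)$. Finally, $\sigma_{\max}(M) \leq \Vert M \Vert_2$ is immediate since the Schatten $2$-norm (Frobenius norm) equals $\sqrt{\sum_k \sigma_k(M)^2} \geq \sigma_{\max}(M)$.

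I do not expect any serious obstacle here; the only point requiring a moment's care is confirming the normalization convention of the monomial basis $\mathfrak{m}^h_d$ (whether it carries multinomial coefficients), since the paper defines $\mathfrak{m}^{h}_d(\mathbf{x}^{[i]})$ as "the vector of all monomials $(\mathbf{x}^{[i]})^\alpha$ with $|\alpha|=d$"—i.e.\ no coefficients—so the bound $\Vert\mathfrak{m}^h_d(\mathbf{a}^{[i]})\Vert_2\leq 1$ holds exactly as computed above. The rest is standard linear algebra (Cauchy--Schwarz, the operator-norm/singular-value bound, and the comparison of Schatten norms).
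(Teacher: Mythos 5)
Your proposal is correct and follows essentially the same route as the paper: bound $\Vert \mathfrak{m}^h_d(\mathbf{a}^{[i]})\Vert_2$ locally, use multiplicativity of the $2$-norm on elementary tensors, and then apply Cauchy--Schwarz together with $\sigma_{\max}(M)\le\Vert M\Vert_2$. The only (harmless) difference is in the sub-step $\sum_{|\alpha|=d}(\mathbf{a}^{[i]})^{2\alpha}\le 1$, which the paper proves by induction on $d$ whereas you obtain it directly by comparison with the multinomial expansion of $\bigl(\sum_j (a^{[i]}_j)^2\bigr)^d=1$; both arguments are valid.
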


\begin{proof}
For the first statement, it suffices to show $\Vert \mathfrak{m}^{h}_d(\mathbf{a}^{[i]}) \Vert_2 \leq 1$ for each $i \in [n]$, as the statement then follows from the multiplicativity of the $2$-norm with respect to elementary tensors. 
We show  it  by induction over $d$. For $d=1$ we have $\mathfrak{m}_d(\mathbf{a}^{[i]}) = \mathbf{a}^{[i]}$, hence there is nothing to show. For $d \geq 1$ we have
\be \nn \Vert \mathfrak{m}^{h}_{d+1}(\mathbf{a}^{[i]}) \Vert_2^2 &=& \sum_{\vert\alpha\vert = d+1} \left(\mathbf{a}^{[i]}\right)^{2\alpha} \\ \nn &=& \sum_{j=0}^{m} \left(a^{[i]}_j\right)^2 \cdot \sum_{\substack{\vert\alpha\vert=d\\ \alpha_0, \ldots, \alpha_{j-1} = 0}}  \left(\mathbf{a}^{[i]}\right)^{2\alpha} \leq \sum_{j=0}^{m} \left(a^{[i]}_j\right)^2 \leq 1. \ee
where we have used the induction hypothesis  in the first inequality.

For the second statement, we have
\be \nn \Vert \mathcal G(M) \Vert_\infty &=& \sup_{\mathbf{a} \in \mathbb{S}} \vert\mathfrak{m}^{h}_{n,d}(\mathbf{a})^t M \mathfrak{m}^{h}_{n,d}(\mathbf{a})\vert \\
&\leq& \nn \sup_{\substack{y \in \mathbb{R}^{r} \\ \Vert y \Vert_2 \leq 1}} \vert y^t M y\vert = \sigma_{\max}(M) \leq \Vert M \Vert_2\ee
where  $r = D^{n+1}$ (where $D= \binom{m+d}{d}$) 
and the first inequality follows from the first statement.
\end{proof}

Recall that a separable matrix $M \in \mathcal{M}_d \otimes \cdots \otimes \mathcal{M}_d$ attains a separable $(\Omega,G)$-decomposition if it can be written as
$$ M = \sum_{\alpha \in \mathcal{I}^{\widetilde{\mathcal{F}}}} M_{\alpha_{\vert_0}}^{[0]} \otimes \cdots \otimes  M_{\alpha_{\vert_n}}^{[n]}$$
where $M^{[i]}_{\beta} \in \mathcal{M}_d^{+}$ is a real positive semidefinite matrix for every $i \in [n]$ and $\beta \in \mathcal{I}^{\widetilde{\mathcal{F}}_i},$ and
$$ M^{[gi]}_{{}^g\beta} = M^{[i]}_{\beta}$$
for every $i \in [n]$ and $g \in G$. For a more detailed study of this decomposition we refer to \cite{De19d}. To show the approximation result, we will exploit the following result from \cite[Proposition 24]{De20} about approximate $(\Omega,G)$-decompositions about normalized separable matrices.

\begin{proposition}[Approximate invariant decompositions \cite{De20}]
\label{pro:approxOmegaG}
Let $\Omega$ be a weighted simplicial complex  with a free group action from $G$, and fix $\varepsilon >0$. Let $M \in \mathcal{M}_{D}^{\otimes n+1}$ be $G$-invariant and separable with $\tr(M) \leq 1$. 
Then there exists a separable $N \in \mathcal{M}_{D}^{\otimes n+1}$  such that $\Vert M - N \Vert_2 < \varepsilon$ and
\be \nn \seprank_{(\Omega,G)}(N) \leq \left\lceil \frac{8\exp(4)}{\varepsilon^2}\right\rceil \cdot \vert G\vert. \ee
\end{proposition}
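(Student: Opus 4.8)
The plan is to produce $N$ by an empirical (Maurey-type) averaging argument carried out in the Schatten-$2$ norm, which keeps the bound on the number of terms independent of the matrix size $D$, and then to symmetrise the resulting sparse approximation over $G$ by feeding it into the construction from the proof of \cref{thm:omegaG-dec}; since $G$ acts freely on $\Omega$ this last step costs only one further factor $|G|$.

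First I would fix a separable decomposition of $M$. Since $M$ is separable with $\tr(M) = \Lambda$, and the case $M=0$ is trivial, we may assume $0<\Lambda\le 1$ and write $M = \sum_{j \in J} w_j\, \sigma_j$ with $J$ finite, $w_j > 0$, $\sum_j w_j = \Lambda$, and $\sigma_j = \rho_j^{[0]} \otimes \cdots \otimes \rho_j^{[n]}$ a tensor product of density matrices (positive semidefinite, trace $1$). Note $\Vert \sigma_j \Vert_2 = \prod_i \Vert \rho_j^{[i]}\Vert_2 \le \prod_i \Vert \rho_j^{[i]}\Vert_1 = 1$, using that the Schatten-$2$ norm is dominated by the Schatten-$1$ norm on positive semidefinite matrices and is multiplicative on elementary tensors. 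Now draw $j_1, \ldots, j_T$ independently with $\p[j_t = j] = w_j/\Lambda$ and put $\hat{M} := \tfrac{\Lambda}{T}\sum_{t=1}^{T} \sigma_{j_t}$. Then $\hat M$ is separable and has an $\Omega$-decomposition into at most $T$ elementary terms (by \cref{thm:omega-dec}, reusing the factors $\rho_{j_t}^{[i]}$, so all local matrices stay positive semidefinite), $\E[\hat M] = M$, and by independence
\be \nn
\E\,\Vert \hat M - M \Vert_2^2 &=& \frac{\Lambda^2}{T^2} \sum_{t=1}^{T} \E\,\Vert \sigma_{j_t} - M/\Lambda\Vert_2^2 \\ \nn
&\le& \frac{\Lambda^2}{T}\, \E\,\Vert \sigma_{j}\Vert_2^2 \;\le\; \frac{1}{T}.
\ee
Already Markov's inequality gives a realisation with $\Vert \hat M - M \Vert_2 < \varepsilon$ as soon as $T > 1/\varepsilon^2$; invoking instead a more robust concentration estimate for this sum of bounded i.i.d.\ matrices (this is where the explicit constant $8\exp(4)$ of \cite{De20} enters) yields such a realisation for $T = \lceil 8\exp(4)/\varepsilon^2\rceil$. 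I would fix such a realisation $\hat M$.

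Finally I would symmetrise. Set $N := \tfrac{1}{|G|}\sum_{g \in G} g\hat M$. Each $g$ acts on $\mathcal{M}_D^{\otimes n+1}$ by permuting tensor factors, hence preserves separability and the Schatten-$2$ norm and fixes the $G$-invariant $M$; so $N$ is separable and $G$-invariant, and $\Vert N - M\Vert_2 \le \tfrac{1}{|G|}\sum_g \Vert g(\hat M - M)\Vert_2 = \Vert \hat M - M\Vert_2 < \varepsilon$. For the rank I would feed the at-most-$T$-term $\Omega$-decomposition of $\hat M$ into the construction in the proof of \cref{thm:omegaG-dec}: because $G$ acts freely on $\Omega$, that construction turns an $\Omega$-decomposition of any matrix $\hat M$ into an $(\Omega,G)$-decomposition of its $G$-average $\tfrac{1}{|G|}\sum_g g\hat M$ with the index set enlarged by the factor $|G|$, and it uses only nonnegative multiples of the original local matrices, hence keeps them positive semidefinite. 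This exhibits a separable $(\Omega,G)$-decomposition of $N$ with at most $|G|\,T = \lceil 8\exp(4)/\varepsilon^2\rceil \cdot |G|$ terms, which is the assertion.

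The main obstacle is the three-way tension in the requirements on $N$: it must be separable, it must be $G$-invariant (otherwise $\textrm{sep-rank}_{(\Omega,G)}(N) = \infty$ by \cref{rem:OmegaG-dec}), and it must have few summation indices. Plain empirical sampling delivers the sparsity but destroys invariance, and a crude re-symmetrisation afterwards would multiply the rank by $|G|$ twice (once in $\hat M \mapsto \sum_g g\hat M$ and once more when converting to an $(\Omega,G)$-decomposition via \cref{cor:sep-ranks}); the point of running the symmetrisation \emph{through} the free-action construction of \cref{thm:omegaG-dec} is that these two steps fuse, so the overhead is a single factor $|G|$. A secondary point is the choice of norm: bounding the variance of the sample by $\Vert\sigma_j\Vert_2^2 \le 1$ uniformly in $D$ is exactly what makes the bound on $T$ dimension-free, whereas the same argument in operator norm would pick up a $\log D$ factor.
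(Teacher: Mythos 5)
This proposition is not proved in the paper at all: it is imported verbatim as Proposition~24 of \cite{De20}, so there is no in-paper argument to compare yours against. Judged on its own terms, your proof is correct and essentially self-contained. The Maurey-type empirical step is sound: writing $M=\sum_j w_j\,\rho_j^{[0]}\otimes\cdots\otimes\rho_j^{[n]}$ with density-matrix factors and $\sum_j w_j=\tr(M)\le 1$, the bound $\Vert\sigma_j\Vert_2\le\Vert\sigma_j\Vert_1=1$ and the orthogonality of the centred i.i.d.\ terms in the Hilbert--Schmidt inner product give $\E\Vert\hat M-M\Vert_2^2\le 1/T$, so a realisation with error below $\varepsilon$ exists already for any $T>1/\varepsilon^2$; since $\lceil 8\exp(4)/\varepsilon^2\rceil>1/\varepsilon^2$, you do not actually need the sharper concentration estimate you allude to --- your second-moment bound alone establishes the stated (weaker) rank bound, and the constant $8\exp(4)$ can simply be inherited from the statement. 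The part of your write-up that deserves the most credit is the final step: you correctly observe that the construction in the proof of \cref{thm:omegaG-dec} does not require $G$-invariance of its input, and when fed a non-invariant $\hat M$ it produces an $(\Omega,G)$-decomposition of $|G|$ times the $G$-average $N=\tfrac{1}{|G|}\sum_g g\hat M$ with index set $\mathcal I\times G$, using only site-relabelled (hence still positive semidefinite) copies of the original local factors scaled by positive constants. This is exactly what avoids paying the factor $|G|$ twice and keeps the decomposition separable. The only cosmetic gap is the unproved claim about where $8\exp(4)$ comes from, which, as noted, is not needed for the conclusion.
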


To guarantee that a given polynomial fulfils the normalization in Proposition \ref{pro:approxOmegaG} we introduce the following norm for $p \in \mathcal{P}^h_d$. Denote the set of (sub-)normalized separable matrices by
\be \nn \mathrm{SEP}_{n,D} \coloneqq \left\{M \in \mathcal{M}_{D}^{\otimes n+1}: M \textrm{ is separable and } \tr(M) \leq 1 \right\}\ee
and define
\be \nn \mu(p) \coloneqq \inf \left\{\lambda > 0: \exists M \in \mathrm{SEP}_{n, D} \textrm{ and $G$-invariant such that } p = \lambda \mathcal{G}(M) \right\}. 
\ee
Note that, by Remark \ref{rem:sep}, $\mu(p)$ is finite for all separable and $G$-invariant polynomials. Moreover, $\mu$ is homogeneous of degree $1$, i.e.\ for all $\gamma \geq 0$ we have $\mu(\gamma p) = \gamma \mu(p)$, and since $\mathrm{SEP}_{n,D}$ is convex we have $\mu(p_1 + p_2) \leq \mu(p_1) + \mu(p_2)$.

We can finally present the main result of this section. Recall that the separable decomposition and rank refer to the local sos cones.

\begin{theorem}[Approximate separable invariant decomposition]
\label{thm:approxSeparable}
Let $\Omega$ be a weighted simplicial complex  with a free group action from $G$,  and fix  $\varepsilon >0$. Further, let $p \in \mathcal{P}_d^{h}$ be separable and $G$-invariant. Then there exists $q \in \mathcal{P}_d^{h}$ such that $\Vert p - q \Vert_{\infty} < \varepsilon$ and
\be \nn \seprank_{(\Omega,G)}(q) \leq \left\lceil \frac{8\exp(4) \cdot \mu(p)^2}{\varepsilon^2}\right\rceil \cdot \vert G\vert.\ee
\end{theorem}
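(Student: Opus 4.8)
The plan is to transfer the matrix approximation result \cref{pro:approxOmegaG} to polynomials via the Gram map: \cref{lem:gramapprox} converts the matrix error bound into a bound on $\Vert\cdot\Vert_\infty$, and \cref{rem:sep} (the separable analogue of \cref{lem:gram}) converts a separable matrix $(\Omega,G)$-decomposition into a separable polynomial $(\Omega,G)$-decomposition without increasing the rank. First I would dispose of the degenerate case $\mu(p) = 0$: then by definition $p = \lambda\,\mathcal{G}(M)$ for a sub-normalized $G$-invariant separable $M$ with $\lambda$ arbitrarily small, so $\Vert p\Vert_\infty \leq \lambda\,\Vert\mathcal{G}(M)\Vert_\infty \to 0$ by \cref{lem:gramapprox}, forcing $p = 0$, and $q = 0$ works. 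So assume $\mu(p) > 0$.

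Next I would show that the infimum defining $\mu(p)$ is attained. The set $\mathrm{SEP}_{n,D}$ is bounded (trace at most $1$) and closed (the separable cone is closed in finite dimension, and the trace constraint and $G$-invariance are closed conditions), and $M \mapsto \mathcal{G}(M)$ is linear and continuous. Taking $\lambda_k \downarrow \mu(p)$ with $G$-invariant $M_k \in \mathrm{SEP}_{n,D}$ satisfying $p = \lambda_k\,\mathcal{G}(M_k)$, a convergent subsequence $M_k \to M$ yields a $G$-invariant $M \in \mathrm{SEP}_{n,D}$ with $\tr(M) \leq 1$ and $p = \mu(p)\,\mathcal{G}(M)$.

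Now apply \cref{pro:approxOmegaG} to this $M$ with parameter $\varepsilon' \coloneqq \varepsilon/\mu(p)$: there is a separable $N \in \mathcal{M}_D^{\otimes n+1}$ admitting a separable $(\Omega,G)$-decomposition with $\Vert M - N\Vert_2 < \varepsilon'$ and $\textrm{sep-rank}_{(\Omega,G)}(N) \leq \lceil 8\exp(4)/(\varepsilon')^2\rceil\cdot|G| = \lceil 8\exp(4)\,\mu(p)^2/\varepsilon^2\rceil\cdot|G|$. Set $q \coloneqq \mu(p)\,\mathcal{G}(N) \in \mathcal{P}^{h}_{d}$. By linearity of $\mathcal{G}$ and \cref{lem:gramapprox},
\[
\Vert p - q\Vert_\infty = \mu(p)\,\Vert\mathcal{G}(M - N)\Vert_\infty \leq \mu(p)\,\Vert M - N\Vert_2 < \mu(p)\,\varepsilon' = \varepsilon.
\]
Finally, writing a separable matrix $(\Omega,G)$-decomposition $N = \sum_{\alpha \in \mathcal{I}^{\widetilde{\mathcal{F}}}} N^{[0]}_{\alpha_{|_0}} \otimes \cdots \otimes N^{[n]}_{\alpha_{|_n}}$ with positive semidefinite factors satisfying $N^{[gi]}_{{}^g\beta} = N^{[i]}_\beta$, applying the Gram map factorwise turns each $N^{[i]}_\beta$ into the sos polynomial $\mathfrak{m}_d(\mathbf{x}^{[i]})^t N^{[i]}_\beta\,\mathfrak{m}_d(\mathbf{x}^{[i]})$ and the symmetry conditions pass over verbatim; absorbing the scalar $\mu(p)^{1/(n+1)}$ into each local polynomial at every site then gives a separable $(\Omega,G)$-decomposition of $q$ with the same index set $\mathcal{I}$. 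Hence $\textrm{sep-rank}_{(\Omega,G)}(q) \leq \textrm{sep-rank}_{(\Omega,G)}(N) \leq \lceil 8\exp(4)\,\mu(p)^2/\varepsilon^2\rceil\cdot|G|$, as claimed.

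The step I expect to be the main obstacle is guaranteeing that the infimum defining $\mu(p)$ is attained, since this is what allows the bound to be stated in terms of $\mu(p)$ itself rather than a strictly larger $\lambda$; it relies on closedness of the separable cone in finite dimension. The alternative of applying \cref{pro:approxOmegaG} for each $\lambda > \mu(p)$ and letting $\lambda \downarrow \mu(p)$ is slightly awkward because the ceiling in the rank bound is not continuous at integer values, so the compactness route is cleaner. Everything else is routine bookkeeping with the Gram map and the norm estimate of \cref{lem:gramapprox}.
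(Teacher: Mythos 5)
Your proposal is correct and follows essentially the same route as the paper's proof: represent $p$ by a minimal-trace $G$-invariant separable Gram matrix $M$ with $\mu(p)=\tr(M)$, apply \cref{pro:approxOmegaG} to the normalization, pull the approximant back through the Gram map, and use \cref{lem:gramapprox} for the norm bound. The only difference is that you explicitly verify attainment of the infimum defining $\mu(p)$ and handle the case $\mu(p)=0$, points the paper passes over by simply asserting a minimizing $M$ exists; this is a welcome tightening rather than a divergence.
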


\begin{proof}
Since $p$ is $G$-invariant and separable, there exists a separable and $G$-invariant matrix $M$ such that $p = \mathcal{G}(M)$ by Remark \ref{rem:sep}. Choose $M$ so that $\tr(M)$ is minimal among all representations. Then
\be \nn 
\frac{1}{\tr(M)} M \in \mathrm{SEP}_{n,D} \: \textrm{ and } \: \mu(p) = \tr(M).
\ee
By Proposition \ref{pro:approxOmegaG} there exists $N \in \mathcal{M}_{D}^{\otimes n+1}$ separable such that $\Vert M - N\Vert_2 < \varepsilon$ and
$$
\seprank_{(\Omega,G)}(N) \leq \left\lceil \frac{8\exp(4) \mu(p)^2}{\varepsilon^2}\right\rceil \cdot \vert G\vert.
$$
Now define $q = \mu(p) \cdot \mathcal{G}(N)$.
By Lemma \ref{lem:gramapprox} we have that $\Vert p - q\Vert_{\infty} < \varepsilon$, and since the Gram map applied to an $(\Omega,G)$-decomposition of matrices leads to an $(\Omega,G)$-decomposition of polynomials, we obtain that $\seprank_{(\Omega,G)}(q) \leq \seprank_{(\Omega,G)}(N),$ which proves the statement.
\end{proof}

Theorem \ref{thm:approxSeparable} provides an upper bound of $\seprank_{(\Omega,G)}$ which is (up to $\mu(p)$) dimension-independent. This implies that the separations between $\rank_{(\Omega,G)}$, $\sosrank_{(\Omega,G)}$ and $\seprank_{(\Omega,G)}$ \emph{disappear in the approximate case} if the value of  $\mu(p)$ is bounded. 
In general, however, $\mu(p)$ scales  with $\deg_{\mathrm{loc}}(p)$ and the number of variables $m$.

Similar approximation procedures can be applied to sos polynomials together with sos $(\Omega,G)$-decompositions, or arbitrary polynomials together with unconstrained $(\Omega,G)$-decompositions. 
This can be accomplished by exploiting  approximation results of $(\Omega,G)$-decompositions  for positive semidefinite matrices and Hermitian matrices \cite{De20}. 
Together with the norm correspondence from Lemma \ref{lem:gramapprox}, this would lead to approximations for all types of polynomial $(\Omega,G)$-decompositions, that we decided not to work out in full generality here.

\section{An undecidable problem regarding unconstrained decompositions}
\label{sec:undecidable}

In Section \ref{sec:inequalitites} we have seen that the invariant sos decomposition and the invariant separable decompositions, which are inherently positive, are generally much more costly than the decomposition without any positivity constraints on the local elements. 
Here we will show that the invariant separable decomposition has in fact no local and computable certificate of positivity. 
We will reach this conclusion  by proving that Problem \ref{prob:und} is undecidable. 

Given a collection of $D^2$ polynomials in $\mathbb{Z}[\mathbf{x}]$, denoted $\left(p_{\alpha,\beta}\right)_{\alpha,\beta=1}^{D}$, define 
\be\label{eq:pn}
p_n \coloneqq \hspace*{-0.4cm} \sum_{\alpha_0, \ldots, \alpha_n = 1}^{D} p_{\alpha_0, \alpha_1}(\mathbf{x}^{[0]}) \cdot p_{\alpha_1, \alpha_2}(\mathbf{x}^{[1]}) \cdots p_{\alpha_n, \alpha_0}(\mathbf{x}^{[n]})\in\mathbb R[\mathbf x^{[0]},\ldots, \mathbf x^{[n]}].
\ee 
Note that the summation indices are arranged in a circle $\Theta_n$, and that the local polynomials are independent of site, so that $p_n$ is invariant under the cyclic group $C_n$. The previous expression is thus a $(\Theta_n,C_n)$-decomposition of $p_n$. 

\begin{problem}[Decision problem about positivity of polynomials]
\label{prob:und}
Given positive integers $m$ and $D$ and a collection of polynomials  $\left(p_{\alpha,\beta}\right)_{\alpha,\beta=1}^{D} \in \mathbb{Z}[\mathbf{x}]$ (where  $\mathbf{x}$ denotes a vector of $m$ variables $(x_1, \ldots, x_m)$), 
\begin{enumerate}[label=(\alph*),ref=(\alph*),leftmargin=*]
	\item \label{prob:und:a} Is $p_n$  a  sum of squares for all $n \in \mathbb{N}$?
	\item \label{prob:und:b} Is $p_n$    nonnegative for all $n \in \mathbb{N}$?
\end{enumerate}
\end{problem}

\begin{theorem}[Undecidability of Problem \ref{prob:und}]
\label{thm:und}
Problem \ref{prob:und} \ref{prob:und:a} and Problem \ref{prob:und} \ref{prob:und:b} are undecidable.
This is true even if $m,D \geq 7$  and if  the polynomials are of the form
$$p_{\alpha,\beta}(\mathbf{x}) = \sum_{j=1}^{m} p_{\alpha, \beta, j} \cdot x_{j}^2 
$$
with $p_{\alpha, \beta, j} \in \mathbb{Z}$ for all $\alpha, \beta \in \{1, \ldots, D\}$. 
\end{theorem}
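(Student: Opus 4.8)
The statement to prove is the undecidability of Problem~\ref{prob:und}~\ref{prob:und:a} and \ref{prob:und:b}, even for diagonal quadratic local polynomials with $m,D\ge 7$. The plan is to reduce a known undecidable problem about products of nonnegative integer matrices to this one. The natural candidate is the \emph{mortality} or \emph{positivity of all finite products} problem for a finite set of matrices, or even sharper, the undecidability (due to Paterson, and refined versions with small dimension/alphabet) of whether all products of matrices from a fixed finite set $\{A_1,\dots,A_k\}\subseteq \mathbb{Z}^{D\times D}$ remain entrywise nonnegative, equivalently whether the trace of all such products is nonnegative. Since $p_n$ is obtained by the circular contraction of the $p_{\alpha,\beta}$, evaluating the variables at standard basis vectors $\mathbf{x} = e_j$ selects, for each site, a matrix $A_j$ with $(A_j)_{\alpha,\beta} = p_{\alpha,\beta,j}$, and then $p_n(e_{j_0},\dots,e_{j_n}) = \operatorname{Tr}(A_{j_0}A_{j_1}\cdots A_{j_n})$. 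Thus global nonnegativity of $p_n$ for all $n$ is \emph{exactly} nonnegativity of all length-$(n+1)$ traces of products from $\{A_1,\dots,A_m\}$, for all $n$.

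\smallskip

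First I would recall or cite the precise undecidable matrix problem: there is a fixed dimension $D_0$ and alphabet size $m_0$ (the literature gives small constants; one wants $\le 7$) such that, given integer matrices $A_1,\dots,A_{m_0}\in\mathbb{Z}^{D_0\times D_0}$, it is undecidable whether $\operatorname{Tr}(A_{j_0}\cdots A_{j_n})\ge 0$ for all $n$ and all index sequences. If the available source gives the statement in terms of ``some product has a negative entry'' rather than a negative trace, I would first do a standard padding trick (embed into a block-circulant or add an extra coordinate so that a negative entry forces a negative trace of a slightly longer product) to convert it into the trace form. This is the step I expect to cost the most care: matching the exact hypothesis of the cited undecidability result to the ``$p_n$ for all $n$'' quantifier and to the bound $m,D\ge 7$, possibly by combining a small-dimension result with a small-alphabet result and a simple tensoring/padding argument.

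\smallskip

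With the matrix statement in hand, the reduction is direct: given $A_1,\dots,A_m\in\mathbb{Z}^{D\times D}$, set $p_{\alpha,\beta}(\mathbf{x}) := \sum_{j=1}^m (A_j)_{\alpha,\beta}\, x_j^2$, which is of the required diagonal-quadratic form with integer coefficients. Then I would prove the key identity
\[
p_n(\mathbf{x}^{[0]},\dots,\mathbf{x}^{[n]}) = \sum_{j_0,\dots,j_n=1}^{m} \operatorname{Tr}\!\big(A_{j_0}A_{j_1}\cdots A_{j_n}\big)\,\big(x^{[0]}_{j_0}\big)^2\cdots\big(x^{[n]}_{j_n}\big)^2,
\]
by expanding the circular sum in \eqref{eq:pn} and using $\sum_{\alpha_0}\cdots = \operatorname{Tr}$. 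This exhibits $p_n$ as a polynomial of the form $p_T$ of \eqref{eq:pT} with $T_{j_0,\dots,j_n} = \operatorname{Tr}(A_{j_0}\cdots A_{j_n})$. Now invoke \cref{lem:correspondencePositivity}: $p_n$ is a sum of squares $\iff$ $p_n$ is globally nonnegative $\iff$ the tensor $T$ is entrywise nonnegative $\iff$ $\operatorname{Tr}(A_{j_0}\cdots A_{j_n})\ge 0$ for all $j_0,\dots,j_n$. (One subtlety: \cref{lem:correspondencePositivity} is stated for a single $n$; I would note that it applies verbatim for each fixed $n$, and the ``for all $n$'' quantifier on both sides then matches up trivially. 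This also simultaneously settles parts \ref{prob:und:a} and \ref{prob:und:b}, since the two are equivalent under this correspondence.) Hence a decision procedure for Problem~\ref{prob:und} would decide the undecidable matrix problem, a contradiction.

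\smallskip

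\textbf{Main obstacle.} The genuinely nontrivial point is the precise provenance of the matrix-theoretic undecidability with the stated small parameters $m, D \ge 7$ and with the ``for all products'' quantifier matched to ``for all $n$'': I would need to locate a reference (or assemble one from known ingredients, e.g.\ an undecidability result for the emptiness/mortality of integer matrix semigroups in small dimension, combined with a reduction to nonnegativity of all traces) and verify the bound on $m$ and $D$; everything downstream of that—the reduction, the trace identity, and the appeal to \cref{lem:correspondencePositivity}—is routine. A secondary point to handle cleanly is that $p_n$ as written carries a superscript-free (translationally invariant) $(\Theta_n,C_n)$-decomposition, but for the undecidability statement we only need $p_n$ as an abstract polynomial, so no extra work about the decomposition structure is required; I would just remark that the construction additionally shows the problem is undecidable already within the class of polynomials admitting a $(\Theta_n,C_n)$-decomposition.
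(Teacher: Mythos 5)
Your proposal is correct and follows essentially the same route as the paper: the identity $p_n = p_{T_n}$ with $(T_n)_{j_0,\dots,j_n} = \operatorname{Tr}(A_{j_0}\cdots A_{j_n})$ combined with \cref{lem:correspondencePositivity} is exactly the paper's reduction. The matrix-theoretic undecidability input you flag as the main obstacle is precisely \cref{thm:undold} (cited from \cite{De15}), which is already stated in the required trace/tensor-entry form with $m,D\ge 7$, so no padding or reformulation is needed.
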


So there does not exist an algorithm that can decide in finite time whether $p_n$ is sos or nonnegative for all $n$, given the local polynomials as input. (For an introduction to undecidability we refer for example to \cite{Ar09}.)
We will prove Theorem \ref{thm:und} by a  reduction from the following undecidable problem:

\begin{theorem}[Undecidability of positivity for all system sizes \cite{De15}]\label{thm:undold}
Let $T_{\alpha,\beta} \in \mathbb{Z}^m$ for $\alpha, \beta \in \{1, \ldots, D\}$ be a collection of vectors.
For $n\geq 0$ define 
$$ T_n \coloneqq \sum_{\alpha_0, \ldots, \alpha_n = 1}^{D} T_{\alpha_0, \alpha_1} \otimes T_{\alpha_1, \alpha_2} \otimes \cdots \otimes T_{\alpha_n, \alpha_0}.$$
For $m, D \geq 7,$ the following problem is undecidable:
$$ \textrm{Is }  T_n \textrm{ nonnegative for all } n \in \mathbb{N}?$$
\end{theorem}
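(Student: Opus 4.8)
The plan is to prove \cref{thm:und} by a many-one reduction from the tensor problem of \cref{thm:undold}, using the positivity correspondence of \cref{lem:correspondencePositivity} to convert the per-entry sign condition on $T_n$ into the global sos/nonnegativity conditions on $p_n$. Concretely, given an instance of \cref{thm:undold}, i.e.\ a collection of integer vectors $T_{\alpha,\beta} \in \mathbb{Z}^m$ for $\alpha,\beta \in \{1,\ldots,D\}$, I would define the polynomials
$$
p_{\alpha,\beta}(\mathbf{x}) \coloneqq \sum_{j=1}^m (T_{\alpha,\beta})_j \cdot x_j^2,
$$
which are exactly of the restricted form allowed in the statement, with integer coefficients $p_{\alpha,\beta,j} = (T_{\alpha,\beta})_j \in \mathbb{Z}$. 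This assignment is clearly computable and preserves the bounds $m, D \geq 7$.

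The key step is to verify that, for every fixed $n$, the circular polynomial $p_n$ built from these $p_{\alpha,\beta}$ via \eqref{eq:pn} coincides with the polynomial $p_{T_n}$ associated via \eqref{eq:pT} to the tensor $T_n \in (\mathbb{R}^m)^{\otimes(n+1)}$ of \cref{thm:undold}. Substituting the definition of $p_{\alpha,\beta}$ into \eqref{eq:pn} and expanding the product over the $n+1$ sites, the coefficient of the monomial $(x_{j_0}^{[0]})^2 \cdots (x_{j_n}^{[n]})^2$ becomes $\sum_{\alpha_0,\ldots,\alpha_n} (T_{\alpha_0,\alpha_1})_{j_0} \cdots (T_{\alpha_n,\alpha_0})_{j_n}$, which is precisely the entry $(T_n)_{j_0,\ldots,j_n}$. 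Hence $p_n = p_{T_n}$ for all $n$; the only care needed is the cyclic bookkeeping of the indices (with $\alpha_{n+1} = \alpha_0$), which matches the circular structure of both $p_n$ and $T_n$.

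I would then invoke \cref{lem:correspondencePositivity}, which for each fixed $n$ gives the chain of equivalences: $T_n$ entrywise nonnegative $\iff$ $p_{T_n}$ a sum of squares $\iff$ $p_{T_n}$ globally nonnegative. Since $p_n = p_{T_n}$, this yields, for each $n$, that $p_n$ is sos iff $p_n$ is nonnegative iff $T_n$ is nonnegative. Quantifying over all $n$, both questions of \cref{prob:und}—whether $p_n$ is sos for all $n$ (part \ref{prob:und:a}) and whether $p_n$ is nonnegative for all $n$ (part \ref{prob:und:b})—share the same yes/no answer as the question whether $T_n$ is nonnegative for all $n$. An algorithm for either part would therefore decide the problem of \cref{thm:undold}; since the latter is undecidable for $m, D \geq 7$, so are both parts of \cref{prob:und}, and already within the restricted class of polynomials asserted in the statement.

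I do not expect a genuine obstacle: the argument is a direct reduction, with the real content residing in the already-established \cref{lem:correspondencePositivity} and in \cref{thm:undold}. The only point demanding attention is the exact identity $p_n = p_{T_n}$, i.e.\ checking that the cyclic contraction defining $T_n$ is reproduced term by term when the local quadratic forms $p_{\alpha,\beta}$ are multiplied out and summed; this is routine index manipulation rather than a conceptual difficulty.
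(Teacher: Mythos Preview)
Note first that the statement labeled \cref{thm:undold} is a cited result from \cite{De15}; the paper does not prove it, and neither does your proposal. What you actually prove is \cref{thm:und}, by reduction from \cref{thm:undold}, and this is precisely what the paper does as well. Your argument---define $p_{\alpha,\beta}(\mathbf{x}) = \sum_j (T_{\alpha,\beta})_j\, x_j^2$, verify $p_n = p_{T_n}$ by expanding the cyclic product, invoke \cref{lem:correspondencePositivity}, and conclude undecidability---matches the paper's proof of \cref{thm:und} essentially line for line, only with more detail on the identity $p_n = p_{T_n}$ that the paper dismisses as ``obvious''.
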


\begin{proof}[Proof of Theorem \ref{thm:und}]
Let $T_{\alpha, \beta} \in \mathbb{Z}^m$ be a collection of vectors for $\alpha, \beta \in \{1, \ldots, D\}$.  We apply the construction from Section \ref{ssec:sep} to obtain the collection of polynomials $p_{\alpha,\beta}= \sum_{j=1}^{m} \left(T_{\alpha, \beta}\right)_{j} x_j^2$ and generate the polynomials $p_n \in \mathbb{Z}[\mathbf{x}^{[0]}, \ldots, \mathbf{x}^{[n]}]$. It is obvious that $p_{T_n}=p_n$   for all $n$,
and from Lemma \ref{lem:correspondencePositivity} we thus know that $T_n$ is nonnegative if and only if $p_n$ is a sum of squares/nonnegative.  
So decidability of Problem \ref{prob:und} \ref{prob:und:a} or \ref{prob:und:b}  contradicts Theorem \ref{thm:undold}.
\end{proof}

We remark that Problem \ref{prob:und} remains undecidable if  the input polynomials are in $\mathbb{Q}[\mathbf{x}]$, since multiplying all polynomials by a positive constant does not change the positivity/sos property. 

It can also be shown that a \emph{bounded} version of the questions of Problem \ref{prob:und}---i.e.\ where $n$ is fixed---result in an \textsf{NP}-hard problem \cite{Kl14}.  

\section{Conclusions and Outlook}
\label{sec:concl}

In summary, we have defined and studied several decompositions of multivariate polynomials into local polynomials, each containing only a subset of variables. 
The variables are divided into blocks, and each local polynomial uses only one block. 
We describe a decomposition with a weighted simplicial complexes $\Omega$, 
whose vertices describe the individual blocks, and facets the summation indices.  
For polynomials invariant under the permutation of blocks of variables, 
we have defined and studied an invariant decomposition. 
We have also defined an invariant decomposition with local positivity conditions, specifically, with the separable and sum of squares condition. 
Our approach is inspired by the tensor network approach from quantum information theory; 
in particular, the framework of this work was previously applied to tensor decompositions \cite{De19d} and   studied in the approximate case in Ref.\ \cite{De20}.

Specifically, we have defined invariant polynomial decompositions (Definition \ref{def:omegaG-dec}) 
and shown that every $G$-invariant polynomial admits an $(\Omega,G)$-decomposition if $G$ acts freely on $\Omega$ (Theorem \ref{thm:omegaG-dec}), 
and that every group action can be made free by increasing the number of summation indices (Proposition \ref{pro:extend}). 
Moreover, if $G$ is a blending group action, every $G$-invariant polynomial can be written as a difference of two $(\Omega,G)$-decompositions (Theorem \ref{thm:omegaG-diff}).
We have also defined the separable $(\Omega,G)$-decomposition  (Definition \ref{def:sepdec}), 
and sum of squares $(\Omega,G)$-decomposition (Definition \ref{def:sos-dec}), 
and have shown that they exist if $G$ acts freely on $\Omega$ (Theorem \ref{thm:sep} and Corollary \ref{cor:sos}, respectively). 

In addition, we have shown that  the $(\Omega,G)$-rank of a polynomial can be upper bounded in terms of 
its separable and sos rank, 
and that the sos rank can often be upper bounded by its  separable rank (Proposition \ref{pro:inequalitites}). 
In the reverse direction such inequalities cannot exist, since there exists a sequence of polynomials with constant $(\Omega,G)$-rank and a diverging sos or separable rank (Corollary \ref{cor:rank-sep}). 
Yet, these separations are not robust with respect to approximations, 
due to the upper bound of the approximate separable invariant decomposition 
provided in Theorem \ref{thm:approxSeparable}.
Finally, for decompositions on the circle with translational invariance, 
we have shown it is undecidable whether the global polynomial is sos or nonnegative for all system sizes (Theorem \ref{thm:und}).

This work has left two ``immediate'' open questions: 
Whether the rank separations also hold with respect to a bounded number of variables but unbounded degree (Conjecture \ref{conj:sep}), and  whether there exist non-factorizable  $(\Omega,G)$ structures  (Question \ref{qu:fact}).
A more general open question concerns the full characterization of the existence of invariant polynomial decompositions, 
as  freeness of the group action only provides a sufficient condition. 
Our investigations indicate that it may also be  necessary, but we were not able to prove it. 

A very interesting question is: What is the \emph{border rank} of an $(\Omega,G)$-decomposition? 
The border rank provides a complementary notion of approximation than the one considered here, and shows surprising features for tensors (instead of matrices). The $(\Omega,G)$-framework is an invitation to generalise this study to tensor decompositions on $\Omega$, possibly with invariance. 

Our existence theorems work for a given system size $n$. 
What can be said about \emph{all} system sizes? 
Namely, if a family of objects (such as tensors or polynomials) is invariant for each system size, does it admit a \emph{uniform} invariant decomposition? 
The undecidability result of Theorem \ref{thm:und} suggests that this question is very different from the one addressed in this paper, but certainly very interesting. 

\section*{Acknowledgments}
This research was funded in part by the Austrian Science Fund (FWF) [doi:\href{https://www.doi.org/10.55776/P33122}{10.55776/P33122}]. For open access purposes, the authors have applied a CC BY public copyright license to any author accepted manuscript version arising from this submission. AK further acknowledges funding of the Austrian Academy of Sciences (\"OAW) through the DOC scholarship 26547.



\end{document}